\definecolor{orcidlogocol}{HTML}{A6CE39}
\tikzset{
    orcidlogo/.pic={
        \fill[orcidlogocol] svg{M256,128c0,70.7-57.3,128-128,128C57.3,256,0,198.7,0,128C0,57.3,57.3,0,128,0C198.7,0,256,57.3,256,128z};
        \fill[white] svg{M86.3,186.2H70.9V79.1h15.4v48.4V186.2z}
        svg{M108.9,79.1h41.6c39.6,0,57,28.3,57,53.6c0,27.5-21.5,53.6-56.8,53.6h-41.8V79.1z M124.3,172.4h24.5c34.9,0,42.9-26.5,42.9-39.7c0-21.5-13.7-39.7-43.7-39.7h-23.7V172.4z}
        svg{M88.7,56.8c0,5.5-4.5,10.1-10.1,10.1c-5.6,0-10.1-4.6-10.1-10.1c0-5.6,4.5-10.1,10.1-10.1C84.2,46.7,88.7,51.3,88.7,56.8z};
    }
}
\newcommand\orcidicon[1]{\href{https://orcid.org/#1}{\mbox{\scalerel*{
                \begin{tikzpicture}[yscale=-1,transform shape]
                \pic{orcidlogo};
                \end{tikzpicture}
            }{|}}}}
\reservestyle{\variables}{\text}
\reservestyle{\setops}{\text}
\reservestyle{\structs}{\text}
\reservestyle{\stmt}{\textbf}
\reservestyle{\mathfunc}{\mathsf}
\reservestyle{\messages}{\texttt}
\reservestyle{\api}{\texttt}
\reservestyle{\schain}{\texttt}
\reservestyle{\setvars}{\textit}
\newtheorem{definition}{Definition}
\newtheorem{property}{Property}
\newtheorem{lemma}{Lemma}
\newcommand{\PR}[1]{\ensuremath{\textit{#1}}}
\newcommand{\GS}{\PR{GS}\xspace}
\newcommand{\EPOCH}{\<epoch>}
\newcommand{\HISTORY}{\<history>}
\newcommand{\THESET}{\<theset>}
\newcommand{\epochs}{Setchain\xspace}
\newcommand{\BRBdeliver}{\<BRB>.\<Deliver>}
\newcommand{\BRBbroadcast}{\<BRB>.\<Broadcast>}
\newcommand{\BABdeliver}{\<BAB>.\<Deliver>}
\newcommand{\BABbroadcast}{\<BAB>.\<Broadcast>}
\newcommand{\setchain}{Setchain\xspace}
\newcommand{\setchains}{Setchains\xspace}
\newcommand{\Hspeed}{H1}
\newcommand{\Hbetteralg}{H2}
\newcommand{\Hagr}{H3}
\newcommand{\Hbyzantine}{H4}
\newcommand{\Hdegrade}{H5}
\newcommand{\Propset}{\PR{propset}}
\newcommand{\Servers}{N}
\newcommand{\byzantineset}{\mathbb{B}}
\newcommand{\correctset}{\mathbb{C}}
\newcommand{\AlgTwo}{Alg.~\ref{DPO-alg-basb}\xspace}
\newcommand{\AlgThree}{Alg.~\ref{DPO-alg-basb3}\xspace}
\newcommand{\AlgTwoSet}{Alg.~\ref{DPO-alg-basb}+set\xspace}
\newcommand{\AlgThreeSet}{Alg.~\ref{DPO-alg3-aggregated}\xspace}
\newcommand{\TXONE}[1]{\ensuremath{\textit{#1}}\xspace}
\newcommand{\TXTWO}[2]{\ensuremath{\textit{#1}\_\textit{#2}}\xspace}
\newcommand{\TXTHREE}[3]{\ensuremath{\textit{#1}\_\textit{#2}\_\textit{#3}}\xspace}
\newcommand{\GenerateInvalidElems}{\TXTHREE{generate}{invalid}{elems}}
\newcommand{\HavocNumber}{\TXTWO{havoc}{number}}
\newcommand{\HavocSubset}{\TXTWO{havoc}{subset}}
\newcommand{\HavocPartition}{\TXTWO{havoc}{partition}}
\newcommand{\HavocElement}{\TXTWO{havoc}{element}}
\newcommand{\Knowledge}{\TXONE{knowledge}}
\newcommand{\ValidElements}{\TXTWO{valid}{elements}}
\newcommand{\EV}{\TXONE{ev}}
\newcommand{\Tau}{\mathcal{T}}
\definecolor{verylightgray}{rgb}{.97,.97,.97}
\lstdefinelanguage{Solidity}{
	keywords=[1]{anonymous, assembly, assert, balance, break, call, callcode, case, catch, class, constant, continue, constructor, contract, debugger, default, delegatecall, delete, do, else, emit, event, experimental, export, external, false, finally, for, function, gas, if, implements, import, in, indexed, instanceof, interface, internal, is, length, library, log0, log1, log2, log3, log4, memory, modifier, new, payable, pragma, private, protected, public, pure, push, require, return, returns, revert, selfdestruct, send, solidity, storage, struct, suicide, super, switch, then, this, throw, transfer, true, try, typeof, using, value, view, while, with, addmod, ecrecover, keccak256, mulmod, ripemd160, sha256, sha3}, 
	keywordstyle=[1]\color{blue}\bfseries,
	keywords=[2]{set,address, bool, byte, bytes, bytes1, bytes2, bytes3, bytes4, bytes5, bytes6, bytes7, bytes8, bytes9, bytes10, bytes11, bytes12, bytes13, bytes14, bytes15, bytes16, bytes17, bytes18, bytes19, bytes20, bytes21, bytes22, bytes23, bytes24, bytes25, bytes26, bytes27, bytes28, bytes29, bytes30, bytes31, bytes32, enum, int, int8, int16, int24, int32, int40, int48, int56, int64, int72, int80, int88, int96, int104, int112, int120, int128, int136, int144, int152, int160, int168, int176, int184, int192, int200, int208, int216, int224, int232, int240, int248, int256, mapping, string, elem, uint, uint8, uint16, uint24, uint32, uint40, uint48, uint56, uint64, uint72, uint80, uint88, uint96, uint104, uint112, uint120, uint128, uint136, uint144, uint152, uint160, uint168, uint176, uint184, uint192, uint200, uint208, uint216, uint224, uint232, uint240, uint248, uint256, var, void, ether, finney, szabo, wei, days, hours, minutes, seconds, weeks, years},	
	keywordstyle=[2]\color{teal}\bfseries,
	keywords=[3]{block, blockhash, coinbase, difficulty, gaslimit, number, timestamp, msg, gas, sender, sig, value, now, tx, gasprice, origin, add, epochinc, get, setminus, emptyset},	
	keywordstyle=[3]\color{violet}\bfseries,
	identifierstyle=\color{black},
	sensitive=false,
	comment=[l]{//},
	morecomment=[s]{/*}{*/},
	commentstyle=\color{gray}\ttfamily,
	stringstyle=\color{red}\ttfamily,
	morestring=[b]',
	morestring=[b]"
}
\begin{document}


\title{Improving Blockchain Scalability with the \setchain\ Data-type}


\author{Margarita Capretto}
\email{margarita.capretto@imdea.org}
\orcid{0000-0003-2329-3769}
\affiliation{%
  \institution{IMDEA Software Inst.}
  \city{Pozuelo de Alarcón}
  \state{Madrid}
  \country{Spain}
}
\affiliation{%
  \institution{Universidad Politécnica de Madrid}
  \state{Madrid}
  \country{Spain}
}

\author{Martín Ceresa}
\email{martin.ceresa@imdea.org}
\orcid{0000-0003-4691-5831}
\affiliation{%
  \institution{IMDEA Software Inst.}
  \city{Pozuelo de Alarcón}
  \state{Madrid}
  \country{Spain}
}

\author{Antonio {Fernández Anta}}
\email{antonio.fernandez@imdea.org}
\orcid{0000-0001-6501-2377}
\affiliation{%
  \institution{IMDEA Networks Inst.}
  \city{Leganés}
  \state{Madrid}
  \country{Spain}
}

\author{Antonio Russo}
\email{antonio.russo@imdea.org}
\orcid{0000-0003-3795-3000}
\affiliation{%
  \institution{IMDEA Networks Inst.}
  \city{Leganés}
  \state{Madrid}
  \country{Spain}
}
\affiliation{%
  \institution{Universidad Carlos III de Madrid}
  \city{Leganés}
  \state{Madrid}
  \country{Spain}
}

\author{César Sánchez}
\email{cesar.sanchez@imdea.org}
\orcid{0000-0003-3927-4773}
\affiliation{%
  \institution{IMDEA Software Inst.}
  \city{Pozuelo de Alarcón}
  \state{Madrid}
  \country{Spain}
}

\renewcommand{\shortauthors}{Capretto et al.}

\begin{abstract}
  %
  %
  Blockchain technologies are facing a scalability challenge, which
  must be overcome to guarantee a wider adoption of the technology.
  This scalability issue is due to the use of consensus algorithms to
  guarantee the total order of the chain of blocks (and of the
  transactions within each block).
  However, total order is often not fully necessary, since important
  advanced applications of smart-contracts do not require a total
  order among \emph{all} operations.
  A much higher scalability can potentially be achieved if a more
  relaxed order (instead of a total order) can be exploited.
  
  %
  %
  In this paper, we propose a novel distributed concurrent data type,
  called \emph{\setchain}, which improves scalability significantly.
  A \setchain implements a \emph{grow-only set} whose elements
  are not ordered, unlike conventional blockchain operations.
  When convenient, the \setchain allows forcing a synchronization
  barrier that assigns permanently an epoch number to a subset of the
  latest elements added, agreed by consensus.
  Therefore, two operations in the same epoch are not ordered, while
  two operations in different epochs are ordered by their respective
  epoch number.
  %
  %
  We present different Byzantine-tolerant implementations of
  \setchain, prove their correctness and report on an empirical
  evaluation of a prototype implementation.
  %
  %
  %
  Our results show that \setchain is orders of magnitude faster than
  consensus-based ledgers, since it implements grow-only sets with
  epoch synchronization instead of total order.
  
  Since the \setchain barriers can be synchronized with the underlying
  blockchain, \setchain objects can be used as a \emph{sidechain} to
  implement many decentralized solutions with much faster operations
  than direct implementations on top of blockchains.

  Finally, we also present an algorithm that encompasses in a single
  process the combined behavior of the Byzantine servers, which
  simplifies correctness proofs by encoding the general attacker in a
  concrete implementation.
\end{abstract}


\keywords{Distributed systems, blockchain, Byzantine distributed objects, consensus, \epochs.}


\maketitle

\section{Introduction}
\label{sec:introduction}

\subsection{The Problem}

%
%
\emph{Distributed ledgers} (also known as \emph{blockchains}) were first proposed
by Nakamoto in 2009~\cite{nakamoto06bitcoin} in the implementation of Bitcoin, 
as a method to eliminate trustable third parties in electronic payment systems.
Modern blockchains incorporate smart
contracts~\cite{szabo96smart,wood2014ethereum}, which are immutable
state-full programs stored in the blockchain that describe
functionality of transactions, including the exchange of
cryptocurrency.
Smart contracts allow to describe sophisticated functionality, enabling
many applications in decentralized finances (DeFi)\footnote{As of December 2021, the monetary value locked in DeFi
  was estimated to be around \$100B, according to Statista
  \url{https://www.statista.com/statistics/1237821/defi-market-size-value-crypto-locked-usd/}.}, 
  decentralized governance,
Web3, etc.

The main element of all distributed ledgers is the ``blockchain,''
which is a distributed object that contains, packed in blocks, the
totally ordered list of transactions performed on behalf of the
users~\cite{anta2018formalizing,anta2021principles}.
The Blockchain object is maintained by multiple servers without a
central authority using consensus algorithms that are resilient to
Byzantine attacks.

%
%
A current major obstacle for a faster widespread adoption of
blockchain technologies is their limited scalability, due to the
limited throughput inherent to Byzantine consensus
algorithms~\cite{Croman2016ScalingDecentralizedBlockchain,Tyagi@BlockchainScalabilitySol}.
Ethereum~\cite{wood2014ethereum}, one of the most popular blockchains,
is limited to less than 4 blocks per minute, each containing less than
two thousand transactions.
Bitcoin~\cite{nakamoto06bitcoin} offers even lower throughput.
These figures are orders of magnitude slower than what many
decentralized applications require, and can ultimately jeopardize the
adoption of the technology in many promising domains.
This limit in the throughput increases the price per operation, due to
the high demand to execute operations.
Consequently, there is a growing interest in techniques to improve the
scalability of
blockchains~\cite{Zamani2018RapidChain,Xu2021SlimChain}.
%
Approaches include:
\begin{itemize}
\item developing faster consensus algorithms~\cite{Wang2019FastChain};
\item implementing parallel techniques, like
  sharding~\cite{Dang2019Sharding};
\item application-specific blockchains with Inter-Blockchain
  Communication capabilities \cite{wood2016polkadot,kwon2019cosmos};
\item executing smart contracts off-chain with the minimal required
  synchronization to preserve the guarantees of the blockchain---
  known as a ``layer 2'' (L2) approaches~\cite{Jourenko2019SoKAT}.
  Different L2 approaches are (1) the off-chain computation of
  Zero-Knowledge proofs~\cite{Sasson2014ZKvonNeumann}, which only need
  to be checked on-chain (hopefully more
  efficiently)~\cite{Sasson2014ZeroCash}, (2) the adoption of limited
  (but useful) functionality like \emph{channels} (e.g.,
  Lightning~\cite{Poon2016lightning}), or (3) the deployment of
  optimistic rollups (e.g., Arbitrum~\cite{Kalodner2018Arbitrum})
  based on avoiding running the contracts in the servers (except when
  needed to annotate claims and resolve disputes).
\end{itemize}

In this paper, we propose an alternative approach to increase blockchain
scalability that exploits the following observation.
%
%
It has been traditionally assumed that cryptocurrencies require 
total order to guarantee the absence of double-spending.
However, many useful applications and functionalities (including some
uses of cryptocurrencies~\cite{DBLP:conf/podc/GuerraouiKMPS19}) can
tolerate more relaxed guarantees, where operations are only
\emph{partially ordered}.
We propose here a Byzantine-fault tolerant implementation of a
distributed grow-only set \cite{Shapiro2011CCRDT,Cholvi2021BDSO},
equipped with an additional operation for introducing points of
barrier synchronization (where all servers agree on the contents of
the set).
Between barriers, elements of the distributed set can be temporarily
known by some but not all servers.
%
We call this distributed data structure \setchain.
A blockchain \(\mathcal{B}\) implementing \setchain (as well as
blocks) can align the consolidation of the blocks of \(\mathcal{B}\)
with barrier synchronizations, obtaining a very efficient set object as side
data type, with the same Byzantine-tolerance guarantees that \(\mathcal{B}\)
itself offers.

Two extreme implementations of sets with epochs in the context of
blockchains are:

\paragraph{A Completely off-chain implementation}\label{pa:offchain:sol}
The major drawback of having a completely off-chain implementation is
that from the point of view of the underlying blockchain the resulting
implementation does not have the trustability and accountability
guarantees that blockchains offer.
One example of this approach are \emph{mempools}.
Mempools (short for memory pools) are a P2P data type used by most
blockchains to maintain the set of pending transactions.
Mempools fulfill two objectives: (1) to prevent distributed
attacks to the servers that mine blocks and (2) to serve as a pool of
transaction requests from where block producers select operations.
Nowadays, mempools are receiving a lot of attention, since they suffer
from lack of accountability and are a source of
attacks~\cite{Saad2018DDoSMempool,Saad2019DDoSMempool}, including
front-running~\cite{Daian2020FlashBoys,Robinson2020DarkForest,Ferreira2021Frontrunner}.
Our proposed data structure, \epochs, offers a much stronger accountability,
because it is resilient to Byzantine attacks and the elements of the set that
\epochs maintains are public and cannot be forged.

\paragraph{Completely on-chain solution} 
Consider the following implementation (in a language similar to
Solidity), where \lstinline|add| is used to add elements,
and
\lstinline|epochinc| to increase epochs.
\begin{lstlisting}[language=Solidity,numbers=none]
  contract Epoch {
    uint public epoch = 0;
    set public the_set = emptyset;
    mapping(uint => set) public history;
    function add(elem data) public {
      the_set.add(data);
    }
    function epochinc() public {
      history[++epoch] = the_set.setminus(history);
    }
  }
\end{lstlisting}
%
%
One problem of this implementation is that every time we add an
element, \lstinline|the_set| gets bigger, which can affect the
required cost to execute the contract.
%
A second more important problem is that adding elements is
\emph{slow}---as slow as interacting with the blockchain---while our
main goal is to provide a much faster data structure than the
blockchain.

Our approach is faster, and can be deployed independently of the
underlying blockchain and synchronized with the blockchain
nodes.
Thus, \setchain\ lies between the two extremes described above.

For a given blockchain \(\mathcal{B}\), we propose an implementation
of \setchain that (1) is much more efficient than implementing and
executing operations directly in \(\mathcal{B}\);
(2) offers the same decentralized guarantees against
Byzantine attacks than \(\mathcal{B}\), and
(3) can be synchronized with the evolution of \(\mathcal{B}\), so contracts could
potentially inspect the contents of the \setchain.
In a nutshell, these goals are achieved by using faster operations for
the coordination among the servers for non-synchronized element
insertions, and using only consensus style algorithms for epoch
changes.

\subsection{Applications of \setchain}
\label{sec:motivation}
The potential applications that motivate the development of \epochs
include:

\subsubsection{Mempool}
Most blockchains store transaction requests from users in a
``mempool'' before they are chosen by miners, and once mined the
information from the mempool is lost.
Recording and studying the evolution of mempools would require an
additional object serving as a reliable mempool \emph{log system},
which must be fast enough to record every attempt of interaction with
the mempool without affecting the performance of the blockchain.
\setchain can server as such trustable log system, in this case
requiring no synchronization between epochs and blocks.

\subsubsection{Scalability by L2 Optimistic Rollups}
Optimistic rollups, like Arbitrum~\cite{Kalodner2018Arbitrum}, exploit
the fact that computation can be performed outside the blockchain,
posting on-chain only claims about the effects of the transactions.
In this manner Arbitrum maintainers propose the next state reached
after executing several transactions.
After some time, an arbitrator smart contract that is installed
on-chain assumes that a proposed step is correct because the state has
not been challenged, and executes the annotated effects.
A conflict resolution algorithm, also part of the contract on-chain, is
used to resolve disputes.
This protocol does not require a strict total order, but only a record
of the actions proposed.
Moreover, conflict resolution can be reduced to claim validation,
which could be performed by the maintainers of the \setchain,
removing the need for arbitration.

\subsubsection{Sidechain Data}
Finally, \setchain can also be used as a generic side-chain service
used to store and modify data in a manner that is synchronized with
the blocks.
Applications that require only to update information in the storage space of a
smart contract, like digital registries, can benefit from having faster (and
therefore cheaper) methods to manipulate the storage without invoking expensive
blockchain operations.

\subsection{Contributions.}

In summary, the contributions of the paper are the following:
\begin{compactitem}
\item the design and implementation of a side-chain data structure
  called \setchain,
\item several implementations of \setchain, providing different levels
  of abstraction and algorithmic implementation improvements,
\item an empirical evaluation of a prototype implementation, which
  suggests that \setchain is several orders of magnitude faster than
  consensus,
\item a description of how \setchain can be used as a distributed
  object, which requires for good clients to contact several servers
  for injecting elements and for obtaining a correct view of the
  \setchain;
\item a description of a much more efficient \setchain optimistic
  service that requires clients to contact only one server both
  for addition and for obtaining a correct state.
\item a reduction from the combined behavior of several Byzantine
  servers to a single non-deterministic process that simplifies
  reasoning about the combined distributed system.
\end{compactitem}

The rest of the paper is organized as follows.
Section~\ref{sec:prelim} contains preliminary model and assumptions.
Section~\ref{sec:apisol} describes the intended properties of~\setchain.
Section~\ref{sec:implementation} describes three different implementations of
\setchain where we follow an incremental approach.
Alg.~\ref{DPO-alg1} and~\ref{DPO-alg-basb} are required to explain how we
have arrived to Alg.~\ref{DPO-alg-basb3}, which is the fastest and most robust.
Section~\ref{sec:properties} proves the correctness of our three algortihms.
Section~\ref{sec:empirical} discusses an empirical evaluation of our
prototype implementations of the different algorithms.
Section~\ref{sec:wrappers} shows how to make the use of \epochs more
robust against Byzantine servers.
Section~\ref{sec:byzantine} presents a non-deterministic algorithm that
captures Byzantine behaviour. 
Finally, Section~\ref{sec:discussion} concludes the paper.


\section{Preliminaries}\label{sec:prelim}

We present now the model of computation and the building blocks
used in our \setchain algorithms.

\subsection{Model of Computation}\label{sec:model:computation}

A distributed system consists of processes---clients and
servers---with an underlying communication network with which each
process can communicate with every other process.
The communication is performed using message passing.
Each process computes independently and at its own speed, and the
internals of each process remain unknown to other processes.
Message transfer delays are arbitrary but finite and also remain
unknown to processes.
The intention is that servers communicate among themselves to
implement a distributed data type with certain guarantees, and clients
can communicate with servers to exercise the data type.

Processes can fail arbitrarily, but the number of failing (Byzantine)
servers is bounded by \(f\), and the total number of servers, \(n\),
is at least \(3f+1\).
We assume \emph{reliable channels} between non-Byzantine (correct)
processes, so no message is lost, duplicated or modified.
Each process (client or server) has a pair of public and private
keys.
The public keys have been distributed reliably to all the processes
that may interact with each other.
Therefore, we discard the possibility of spurious or fake processes.
We assume that messages are authenticated, so that messages corrupted
or fabricated by Byzantine processes are detected and discarded by
correct processes~\cite{Cristin1996AtomicBroadcast}.
As result, communication between correct processes is reliable but
asynchronous by default.
However, for the set consensus service that we use as a basic building
block, partial synchrony is
required~\cite{Crain2021RedBelly,Fischer1985Impossibility}, as
presented below.
Partial synchrony is only for the messages and computation of the
protocol implementing set consensus.
Finally, we assume that there is a mechanism for clients to create
``valid objects'' that servers can check locally.
In the context of blockchains this is implemented using public-key
cryptography.

\subsection{Building Blocks}
We use four building blocks to implement \setchains:

\subsubsection{Byzantine Reliable Broadcast (BRB)}
BRB
services~\cite{DBLP:journals/iandc/Bracha87,Raynal2018FaultTolerantMessagePassing}
allow to broadcast messages to a set of processes guaranteeing that
messages sent by correct processes are eventually received by
\emph{all} correct processes and all correct processes eventually
receive \emph{the same} set of messages.
A BRB service provides a primitive \(\BRBbroadcast(m)\) for sending messages and
an event \(\BRBdeliver(m)\) for receiving messages.
We list the relevant properties of BRB required to prove properties of
\setchain~(Section~\ref{sec:properties}):

\begin{compactitem}
\item\textbf{BRB-Validity:}\label{BRB-Validity} If a correct process
  \(p_{i}\) executes $\BRBdeliver(m)$ then $m$ was sent by a correct
  process \(p_{j}\) which executed $\BRBbroadcast(m)$ in the past.
\item\textbf{BRB-Termination(Local):}\label{BRB-Termination1} If a correct 
  process executes $\BRBbroadcast(m)$, then it executes $\BRBdeliver(m)$.
\item\textbf{BRB-Termination(Global):}\label{BRB-Termination2} If a
  correct process executes $\BRBdeliver(m)$, then all correct
  processes eventually execute $\BRBdeliver(m)$.
\end{compactitem}
Note that BRB services do not guarantee the delivery of messages in
the same order to two different correct participants.

\subsubsection{Byzantine Atomic Broadcast (BAB)}

BAB services~\cite{Defago2004BAB} extend BRB with an additional
guarantee: a total order of delivery of the messages.
BAB services provide the same operation and event as BRB, which we rename as
\(\BABbroadcast(m)\) and \(\BABdeliver(m)\).
However, in addition to validity and termination, BAB services also provide:
\begin{compactitem}
\item\textbf{Total Order:}\label{BAB-UniformTO} If two correct
  processes \(p\) and \(q\) both execute \(\BABdeliver(m)\) and \\
  \(\BABdeliver(m')\), then \(p\) delivers \(m\) before \(m'\) if and
  only if \(q\) delivers \(m\) before \(m'\).
\end{compactitem}
BAB has been proven to be as hard as consensus~\cite{Defago2004BAB},
and thus, is subject to the same
limitations~\cite{Fischer1985Impossibility}.

\subsubsection{Byzantine Distributed Grow-only Sets (DSO)~\cite{Cholvi2021BDSO}}
Sets are one of the most basic and fundamental data structures in
computer science, which typically include operations for adding and
removing elements.
Adding and removing operations do not commute, and thus, distributed
implementations require additional mechanisms to keep replicas
synchronized to prevent conflicting local states.
One solution is to allow only additions.
Hence, a grow-only set is a set in which elements can only be added
but not removed, which is implementable as a conflict-free replicated
data structure~\cite{Shapiro2011CCRDT}.

Let $A$ be an alphabet of values.
A grow-only set $\GS$ is a concurrent object maintaining an internal
set $\GS.S \subseteq A$ offering two operations for any process $p$:
\begin{compactitem}
  \item $\GS.\<add>(r):$ adds an element $r\in A$ to the set $\GS.S$.
  \item $\GS.\<get>():$ retrieves the internal set of elements \(\GS.S\).
\end{compactitem}
Initially, the set $\GS.S$ is empty.
A Byzantine distributed grow-only set object (DSO) is a concurrent
grow-only set implemented in a distributed manner tolerant to
Byzantine attacks~\cite{Cholvi2021BDSO}.
We list the properties relevant to \setchain~(Section~\ref{sec:properties}):
\begin{compactitem}
\item \textbf{Byzantine Completeness}: All \(\<get>()\) and \(\<add>(r)\)
  operations invoked by correct processes eventually complete.
\item \textbf{DSO-AddGet}: All \(\<add>(r)\) operations will
  eventually result in $r$ being in the set returned by all $\<get>()$. 
\item \textbf{DSO-GetAdd}: Each element $r$ returned by $\<get>()$
  was added using $\<add>(r)$ in the past.
\end{compactitem}

\subsubsection{Set Byzantine Consensus (SBC)}
SBC, introduced in
RedBelly~\cite{Crain2021RedBelly}, is a Byzantine-tolerant distributed
problem, similar to consensus.
In SBC, each participant proposes a set of elements (in the particular case of
RedBelly, a set of transactions).
After SBC finishes, all correct servers agree on a set of
valid elements which is guaranteed to be a subset of the union of
the proposed sets.
Intuitively, SBC efficiently 
runs binary consensus to agree on the sets proposed by each
participant, such that if the outcome is positive then the set
proposed is included in the final set consensus.
We list the properties relevant to \epochs~(Section~\ref{sec:properties}):
\begin{itemize}
\item \textbf{SBC-Termination}: every correct process eventually decides a
  set of elements.
\item \textbf{SBC-Agreement}: no two correct processes decide different
  sets of elements.
\item \textbf{SBC-Validity}: the decided set of transactions is a
  subset of the union of the proposed sets.
\item \textbf{SBC-Nontriviality}: if all processes are correct and
  propose an identical set, then this is the decided set.
\end{itemize}

The RedBelly algorithm~\cite{Crain2021RedBelly} solves SBC in a system
with partial synchrony: there is an unknown global stabilization time
after which communication is synchronous.
(Other SBC algorithms may have different partial synchrony assumptions.)
Then,~\cite{Crain2021RedBelly} proposes to use SBC to replace
consensus algorithms in blockchains, seeking to improve scalability,
because all transactions to be included in the next block can be
decided with one execution of the SBC algorithm.
In RedBelly every server computes the same block by applying a
deterministic function that totally orders the decided set of
transactions, removing invalid or conflicting transactions.

Our use of SBC is different from implementing a blockchain.
We use it to synchronize the barriers between local views of distributed
grow-only sets.
To guarantee that all elements are eventually assigned 
to epochs, we need the following property in the SBC service used.
\begin{itemize}
\item \textbf{SBC-Censorship-Resistance}: there is a time $\tau$ after which,
if the proposed sets of all correct processes 
contain the same element $e$, then $e$ will be in the decided set.
\end{itemize}
In RedBelly, this property holds because after the global
stabilization time, all set consensus rounds decide sets from correct
processes~\cite[Theorem 3]{Crain2021RedBelly}.


\section{The \setchain\ Distributed Data Structure}\label{sec:apisol}

%
A key concept of \epochs\ is the \emph{epoch} number, which is a
global counter that the distributed data structure maintains.
The synchronization barrier is realized as an epoch change: the epoch
number is increased and the elements in the grow-only set that have
not been assigned to a previous epoch are stamped with the new epoch
number.

\subsection{The Way of \setchain}

Before presenting the API of the \setchain, we reason about the expected path an
element should travel in the \setchain\ and the properties we want the structure
to have.
The main goal of the \setchain\ side-chain data structure is to
exploit the efficiency opportunity of the lack of order within a set
using a fast grow-only set so users can add records (uninterpreted
data), and thus, have evidence that such records exist.
Either periodically or intentionally, users trigger an epoch change
that creates an evidence of membership or a clear separation in the
evolution of the \setchain.
Therefore, the \setchain\ offers three methods: \(\<add>,\<get>\) and
\(\<epochinc>\).

In a centralized implementation, we expect to find inserted elements
immediately after issuing an \(\<add>\).
In other words, after an \(\<add>(e)\) we expect that the element \(e\)
is in the result of \(\<get>\).
In a distributed setting, such a restriction is too strong, and we
instead expect elements to eventually be in the set.
Additionally, implementations have to guarantee consistency between
different correct nodes, i.e. they cannot contradict each other.

\subsection{API and Server State of the \setchain}

We consider a universe $U$ of elements that client processes can
inject into the set.
We also assume that servers can locally validate an element $e \in U$.
A \textbf{\setchain} is a distributed data structure where a
collection of server nodes, \(\mathbb{D}\), maintain:
\begin{itemize}
\item a set $\<theset> \subseteq U$ of elements added;
\item a natural number $\<epoch> \in \mathbb{N}$;
\item a map $\<history> : [1..\<epoch>] \rightarrow \mathcal{P}(U)$
  describing sets of elements that have been stamped with an epoch
  number ($\mathcal{P}(U)$ denotes the power set of $U$).
\end{itemize}
Each server node $v \in \mathbb{D}$ supports three operations,
available to any client process:
\begin{itemize}
\item $v.\<add>(e)$: requests to add $e$ to $\<theset>$.
\item $v.\<get>()$: returns the values of $\<theset>$, $\<history>$,
  and $\<epoch>$, as perceived by $v$\footnote{In practice, we would have other
query operations since values returned by \(\<get>()\) operation may grow
large.}.
\item $v.\<epochinc>(h)$ triggers an epoch change (i.e., a
  synchronization barrier).  It must hold that $h=\<epoch>+1$.
\end{itemize}
Informally, a client process $p$ invokes a $v.\<get>()$ operation on
node $v$ to obtain $(S,H,h)$, which is $v$'s view of set $v.\<theset>$
and map $v.\<history>$, with domain $[1\ldots h]$.
Process $p$ invokes $v.\<add>(e)$ to insert a new element $e$ in
$v.\<theset>$, and $v.\<epochinc>(h+1)$ to request an epoch increment.
At server $v$, the set $v.\<theset>$ contains the knowledge of $v$
about elements that have been added, including those that have not
been assigned an epoch yet, while \(v.\<history>\) contains only those
elements that have been assigned an epoch.
A typical scenario is that an element \(e \in U\) is first perceived
by $v$ to be in $\<theset>$, to eventually be stamped and copied to
$\<history>$ in an epoch increment.
However, as we will see, some implementations allow other ways
to insert elements, in which $v$ gets to know $e$ for the first time
during an epoch change.
The operation $\<epochinc>()$ initiates the process of collecting
elements in $\<theset>$ at each node and collaboratively decide which
ones are stamped with the current epoch.

Initially, both $\<theset>$ and $\<history>$ are empty and $\<epoch> = 0$ in
every correct server.
Note that client processes can insert elements to $\<theset>$ through
$\<add>()$, but only servers decide how to update $\<history>$, which
client processes can only influence by invoking $\<epochinc>()$.

At a given point in time, the view of $\<theset>$ may differ from server to server.
%
The algorithms we propose only provide eventual consistency guarantees, as
defined in the next section.
\subsection{Desired Properties}
We specify now properties of correct implementations of \setchain.
We provide first a low-level specification that assumes that clients
interact with a \emph{correct} server.
Even though clients cannot be sure of whether the server they contact
is correct we will use these properties in Section~\ref{sec:wrappers}
to build two correct clients: (1) a pessimistic client that contacts
many servers to guarantee that sufficiently many are correct, and (2)
an optimistic client that contacts only one server (hoping it will be
a correct one) and can later check whether the operation was
successful.

We start by requiring from a \setchain that every $\<add>$, $\<get>$,
and $\<epochinc>$ operation issued on a correct server eventually
terminates.
We say that element $e$ is in epoch $i$ in history $H$ (e.g., returned by a $\<get>$
invocation) if $e \in H(i)$.
%
We say that element $e$ is in $H$ if there is an epoch $i$ such that
$e \in H(i)$.
%
%
The first property states that epochs only contain elements
coming from the grow-only set.
\newcounter{prop:consistent-set}
\setcounter{prop:consistent-set}{\value{property}}

\begin{property}[Consistent Sets]\label{api:consistent-set}
  Let $(S,H,h)=v.\<get>()$ be the result of an invocation to a correct server
$v$ where \(e\) is valid element.
  Then, for each $i\leq h, H(i) \subseteq S$.
\end{property}
The second property states that every element added to a correct server is
eventually returned in all future gets issued on the same server.
\begin{property}[Add-Get-Local]\label{api:history->theset-local}
  Let $v.\<add>(e)$ be an operation invoked on a correct server $v$.
  Then, eventually all invocations $(S,H,h)=v.\<get>()$ 
  satisfy $e\in S$.
\end{property}
The next property states that elements present
in a correct server are propagated to all correct servers.
\begin{property}[Get-Global]\label{api:history->theset}
  Let $v$ and $w$ be two correct servers, let $e \in U$ and let
  $(S,H,h)=v.\<get>()$.
  If $e \in S$, then eventually all invocations
  $(S',H',h')=w.\<get>()$ satisfy that $e \in S'$.
\end{property}
We assume in the rest of the paper that at every point in time there is a
future instant at which $\<epochinc>()$ is invoked and completed.
This is a reasonable assumption in any real practical scenario since
it can be easily guaranteed using timeouts.
Then, the following property states that all elements added are eventually
assigned an epoch.
\begin{property}[Eventual-Get]\label{api:theset->history}
  Let $v$ be a correct server, let $e \in U$ and let $(S,H,h)=v.\<get>()$.
  If $e \in S$, then eventually all invocations
  $(S',H',h')=v.\<get>()$ satisfy that $e \in H'$.
\end{property}
The previous three properties imply the following property.
\begin{property}[Get-After-Add]\label{api:get-after-add}
  Let $v.\<add>(e)$ be an operation invoked on a correct server $v$ with $e \in U$.
  Then, eventually all invocations $(S,H,h)=w.\<get>()$ on correct
  servers $w$ satisfy that $e\in H$.
\end{property}
%
%
An element can be in at most one epoch, and no element can be in two
different epochs even if the history sets are obtained from $\<get>$
invocations to two different (correct) servers.
\begin{property}[Unique Epoch]\label{api:local_unique_stamp}
  Let $v$ be a correct server,
  $(S,H,h)=v.\<get>()$, and let
  $i,i'\leq{}h$ with $i\neq i'$.
  Then, $H(i)\cap{}H(i')=\emptyset$.
\end{property}
All correct server processes agree on the epoch contents.
\begin{property}[Consistent Gets]\label{api:consistent-gets}
  Let $v,w$ be correct servers, let $(S,H,h)=v.\<get>()$ and
  $(S',H',h')=w.\<get>()$, and let $i\leq \min(h,h')$. Then
  $H(i)=H'(i)$.
\end{property}
Property~\ref{api:consistent-gets} states that the histories returned
by two $\<get>$ invocations to correct servers are one the prefix of
the other.
However, since two elements $e$ and $e'$ can be inserted at two
different correct servers---which can take time to propagate---, the
$\<theset>$ part of $\<get>$ obtained from two correct servers may not
be contained in one another.

Finally, we require that every element in the history comes from the
result of a client adding the element.
%
%
%
\begin{property}[Add-before-Get]\label{api:get->add}
  Let $v$ be a correct server, $(S,H,h)=v.\<get>()$,
  and $e \in S$.
  Then, there was an operation $w.\<add>(e)$ in the past in some
  server $w$.
\end{property}

Properties~\ref{api:consistent-set}, \ref{api:local_unique_stamp},
\ref{api:consistent-gets} and \ref{api:get->add} are safety
properties.
Properties~\ref{api:history->theset-local}, \ref{api:history->theset},
\ref{api:theset->history} and \ref{api:get-after-add} are liveness
properties.


\section{Implementations}\label{sec:implementation}

In this section, we describe implementations of \setchain that satisfy
the properties in Section~\ref{sec:apisol}.
We describe a centralized sequential implementation to build up intuition, and
then three distributed implementations.
The first distributed implementation is built using a Byzantine  distributed
grow-only set object (DSO) to maintain $\<theset>$, and Byzantine atomic
broadcast (BAB) for epoch increments.
The second distributed implementation is also built using DSO, but we replace
BAB with Byzantine reliable broadcast (BRB) to announce epoch increments and set
Byzantine consensus (SBC) for epoch changes.
Finally, we replace DSO with local sets, and we use BRB for
broadcasting elements and epoch increment announcements, and SBC for
epoch changes, resulting the fastest implementation.

\subsection{Sequential Implementation}

\addtocounter{algorithm}{-1} 
\begin{figure}[t!]
  \begin{adjustbox}{minipage=[t]{\columnwidth}}
    \begin{algorithm}[H]
      \renewcommand{\thealgorithm}{Central}         
         \caption{\small Single server implementation.}%
        \label{alg:sequential}%
              \small
              \begin{algorithmic}[1]
                \State \textbf{Init:} $\<epoch> \leftarrow 0,$
                \hspace{2em}
                $\<history> \leftarrow \emptyset$\label{seq:history} 
                \State \textbf{Init:} $\<theset> \leftarrow \emptyset$\label{seq:theset}
                \Function{\<Get>}{~}
                  \State \textbf{return} $(\<theset>, \<history>, \<epoch>)$
                \EndFunction
                \Function{\<Add>}{$e$}
                  \State \textbf{assert} {\(valid(e)\)}
                  \State \(\<theset> <- \<theset> \cup \{e\}\)
                \EndFunction
                \Function{\<EpochInc>}{$h$}
                    \State \textbf{assert} $h \equiv \<epoch> + 1$
                    \State \(\textit{proposal} <- \<theset> \setminus \bigcup_{k=1}^{\<epoch>} \<history>(k)\)
                    \State $\<history> \leftarrow \<history> \cup \{\langle h, \textit{proposal} \rangle\}$
                    \State $\<epoch> \leftarrow \<epoch> + 1$
                \EndFunction
              \end{algorithmic}
            \end{algorithm}
      \end{adjustbox}
  \end{figure}

Alg.~\ref{alg:sequential} shows a centralized solution, which maintains two
local sets, $\<theset>$---to record added elements---, and $\<history>$, which keeps
a collection of pairs $\langle h,A \rangle$ where $h$ is an epoch number and $A$ is a set of
elements.
We use $\<history>(h)$ to refer to the set $A$ in the pair
$\langle h,A \rangle \in \<history>$.
A natural number $\<epoch>$ is incremented each time there is a new
epoch.
The operations are:
$\<Add>(e)$, which checks that element $e$ is valid and adds it to
$\<theset>$, and
$\<Get>()$, which returns $(\<theset>,\<history>,\<epoch>)$.
%

There is only one way to add elements, through the use of operation
\(\<Add>()\).
Since Alg.~\ref{alg:sequential} does not maintain a distributed data structure,
but a centralized one, there are no Byzantine nodes.
Therefore, every time clients interact with the only server, which is
correct.
The following implementations are distributed so they must incorporate
some mechanism to prevent Byzantine nodes from polluting or
manipulating the \setchain.

\begin{figure}[t!]
  \begin{adjustbox}{minipage=[t]{\columnwidth}}
    \begin{algorithm}[H]
      \renewcommand{\thealgorithm}{Basic}         
      \caption{\small Server $i$ implementation using DSO and BAB}%
      \label{DPO-alg1}%
      \small
      \begin{algorithmic}[1]
                \State \textbf{Init:} $\<epoch> \leftarrow 0,\hspace{2em}\<history>\leftarrow\emptyset$\label{alg1:history}
            \State \textbf{Init:} $\<theset> \leftarrow $\<DSO>.\<Init>\(()\)~\label{alg1:init_e}
            \Function{\<Get>}{~}
              \State \<return> $(\<theset>.\text{\<Get>}()\cup\<history>, \<history>,\<epoch>)$~\label{alg1:func_get}
            \EndFunction
            \Function{\<Add>}{$e$}
            \State \textbf{assert} {\(\<valid>(e)\)}
            \State $\<theset>.\text{\<Add>}(e)$
            \EndFunction
            \Function{\<EpochInc>}{$h$}\label{alg1:epoch_inc}
              \State \textbf{assert} $h \equiv \<epoch> + 1$
              \State $\<proposal> \leftarrow
                \<theset>.\text{\<Get>}()
                \setminus \bigcup_{k=1}^{\<epoch>} \<history>(k)$~\label{alg1:proposal}
              \State \<BAB>.\<Broadcast>(\<mepochinc>($h, \<proposal>, i$))\label{alg1:bcast_proposal}
            \EndFunction
            \Upon{\<BAB>.\<Deliver>(\<mepochinc>($h, \<proposal>,
                  j$))\\\hspace{2em} from $2f+1$ different servers $j$ for the same $h$}\label{alg1:upon_delivery} 
              \State \textbf{assert} $h \equiv \<epoch> + 1$
              \State $E \leftarrow \{e:e \in \<proposal>$ for at least $f+1$ different j\}
              \State $\<history> \leftarrow \<history> \cup \{\langle h, E \rangle\}$\label{alg1:added}
              \State $\<epoch> \leftarrow \<epoch> + 1$
            \EndUpon
          \end{algorithmic}
        \end{algorithm}
        \end{adjustbox}
  \end{figure}

\subsection{Distributed Implementations}

\subsubsection{First approach. DSO and BAB}
Alg.~\ref{DPO-alg1} uses two external services: DSO and BAB.
%
We denote messages with the name of the message followed by its
content as in ``$\<mepochinc>(h,\textit{proposal},i)$''.
The variable $\<theset>$ is not a local set anymore, but a DSO initialized
empty with $\<Init>()$ in line~\ref{alg1:init_e}.
The function $\<Get>()$ invokes the DSO $\<Get>()$ function
(line~\ref{alg1:func_get}) to fetch the set of elements.
The function $\<EpochInc>(h)$ triggers the mechanism required to increment an
epoch and reach a consensus on the elements belonging to epoch \(h\).
%
The consensus process begins by computing a local $\textit{proposal}$ set, of
those elements added but not stamped (line~\ref{alg1:proposal}).
The $\textit{proposal}$ set is then broadcasted using a BAB service
alongside the epoch number $h$ and the server node id $i$
(line~\ref{alg1:bcast_proposal}).
Then, the server waits to receive exactly $2f+1$ proposals, and keeps
the set of elements $E$ present in at least $f+1$ proposals, which
guarantees that each element $e \in E$ was proposed by at least one correct
server.
The use of BAB guarantees that every message sent by a correct server
eventually reaches every other correct server in the \emph{same order}, so
all correct servers have the same set of $2f+1$ proposals.
Therefore, all correct servers arrive to the same conclusion, 
and the set $E$
is added as epoch $h$ in $\<history>$ in line~\ref{alg1:added}.

Alg.~\ref{DPO-alg1}, while easy to understand and prove correct, is not
efficient.
First, in order to complete an epoch increment, it requires at
least $3f+1$ calls to $\<EpochInc>(h)$ to different servers, so at
least $2f+1$ proposals are received (the $f$ Byzantine severs may not
propose anything).
Another source of inefficiency comes from the use of off-the-shelf
building blocks.
For instance, every time a DSO $\<Get>()$ is invoked, many messages
are exchanged to compute a reliable local view of the
set~\cite{Cholvi2021BDSO}.
Similarly, every epoch change requires a DSO $\<Get>()$ in
line~\ref{alg1:proposal} to create a proposal.
Additionally, line~\ref{alg1:upon_delivery} requires waiting for
$2f+1$ atomic broadcast deliveries to take place.
The most natural implementations of BAB services solve one
consensus per message delivered (see Fig.~7 in
\cite{Chandra1998FailureBAB}), which would make this algorithm very
slow.
We solve these problems in two alternative algorithms.
\begin{figure}[t!]
      \begin{adjustbox}{minipage=[t]{\columnwidth}}
  \begin{algorithm}[H]
    \renewcommand{\thealgorithm}{Slow}         
    \caption{\small Server $i$ implementation using DSO, BRB and SBC.
       }
    \label{DPO-alg-basb}
    \small
    \begin{algorithmic}[1]
      \setcounter{ALG@line}{6} 
      \State \ldots     \Comment{$\<Get>$ and $\<Add>$ as in Alg.~\ref{DPO-alg1}}

      \Function{\<EpochInc>}{$h$}\label{alg2:epochinc}
        \State \<assert> $h==\EPOCH +1$
        \State \<BRB>.\<Broadcast>(\<mepochinc>($h$))\label{alg2:brb-epochinc}
      \EndFunction
      \Upon{\<BRB>.\<Deliver>(\<mepochinc>($h$)) and $h<\EPOCH+1$}
        \State \<drop>
      \EndUpon
      \Upon{\<BRB>.\<Deliver>(\<mepochinc>($h$)) and $h==\EPOCH+1$}
        \State \<assert> $prop[h]==null$
        \State $prop[h] \leftarrow \THESET.\text{\<Get>}() \setminus 
        \bigcup_{k=1}^{\<epoch>} \<history>(k)$\label{DPO-alg2-dsoget}
        \State \<SBC>[$h$].\<Propose>($prop[h]$)
      \EndUpon
      %
      %
      \Upon{\<SBC>[$h$].\<SetDeliver>($propset$)  and  $h==\EPOCH+1$} 
         \State $E \leftarrow \{e : e \in propset[j], valid(e) \wedge e\notin \HISTORY\}$
         \State $\THESET \leftarrow \THESET.Add(E)$
         \State $\HISTORY \leftarrow \HISTORY \cup \{\langle h, E \rangle\}$
        \State{$\EPOCH \leftarrow \EPOCH+1$} 
        \tikz{\node(tmp2){};}
        \EndUpon
      \end{algorithmic}
    \end{algorithm}
        \end{adjustbox}
\end{figure}

\begin{figure}[t!]
	\begin{algorithm}[H]
		\renewcommand{\thealgorithm}{Fast}         	
          \caption{\small Server implementation using a local set,
            BRB and SBC.
            }~\label{DPO-alg-basb3}~\label{DPO-alg3}
          \small
		\begin{algorithmic}[1]
			\State  \textbf{Init:} $\EPOCH \leftarrow 0,$
			\hspace{2em} $\HISTORY \leftarrow \emptyset$ 
			\State  \textbf{Init:} $\THESET \leftarrow \emptyset$\label{alg3:theset}
			\Function{\<Get>}{~}
			        \State \<return> $(\THESET, \HISTORY,\<epoch>)$\label{alg3:get}
			\EndFunction
			\Function{\<Add>}{$e$} \label{alg3-add-begin}
                                \State \<assert> $valid(e)$ and $e \notin \THESET$
                                \State \<BRB>.\<Broadcast>(\<madd>(e)) \label{alg3:brb-broadcast}
                                \EndFunction \label{alg3-add-end}
			\Upon{\<BRB>.\<Deliver>(\<madd>\((e)\))} \label{alg3-BRBDeliver-add-begin}
                                \State \<assert> $valid(e)$
			        \State $\THESET \leftarrow \THESET \cup \{e\}$ \label{add_theset_BRBDeliver}
			\EndUpon \label{alg3-BRBDeliver-add-end}
			\Function{\<EpochInc>}{$h$} \label{alg3-epochinc-begin}
                                \State \<assert> $h==\EPOCH +1$
			        \State \<BRB>.\<Broadcast>(\<mepochinc>($h$))
                        \EndFunction \label{alg3-epochinc-end}
			\Upon{\<BRB>.\<Deliver>(\<mepochinc>($h$)) and
                          $h<\EPOCH+1$} \label{alg3-BRBDeliver-epinc-drop-begin}
                                \State \<drop>
                        \EndUpon \label{alg3-BRBDeliver-epinc-drop-end}
			\Upon{\<BRB>.\<Deliver>(\<mepochinc>($h$)) and
                          $h==\EPOCH+1$} \label{alg3-BRBDeliver-epinc-begin}
                                \State \<assert> \(prop[h] == \emptyset\)
                                \State $prop[h] \leftarrow \THESET \setminus 
                                \bigcup_{k=1}^{\<epoch>} \<history>(k)$
                                \State \<SBC>[$h$].\<Propose>($prop[h]$)
			\EndUpon \label{alg3-BRBDeliver-epinc-end}
			\Upon{\<SBC>[\(h\)].\<SetDeliver>($propset$)
                          and  $h==\EPOCH+1$} \label{alg3-SetDeliver-begin}
			        \State $E \leftarrow \{e : e \in propset[j], valid(e) \wedge e\notin \HISTORY\}$~\label{forall_e_SBCDeliver}

				    \State $\THESET \leftarrow \THESET \cup E$~\label{add_theset_SBCSetDeliver}
				    \State $\HISTORY \leftarrow \HISTORY \cup \{\langle h, E \rangle\}$~\label{add_history_SBCSetDeliver}
				    \State $\EPOCH \leftarrow \EPOCH+1$\label{set_epoch_SBCSetDeliver} 
			\EndUpon \label{alg3-SetDeliver-end}
		\end{algorithmic}
	\end{algorithm}\vspace{-2em}
\end{figure}

\subsubsection{Second approach. Avoiding BAB}

Alg.~\ref{DPO-alg-basb} improves the performance of
Alg.~\ref{DPO-alg1} as follows.
First, it uses BRB to propagate epoch increments.
Second, the use of BAB and the wait for the arrival of
$2f+1$ messages in line~\ref{alg1:upon_delivery} of
Alg.~\ref{DPO-alg1} is replaced by using a SBC
algorithm, which allows solving several consensus instances
simultaneously.

Ideally, when an $\<EpochInc>(h)$ is triggered unstampped elements in the
local $\<theset>$ of each correct server should be stamped with the
new epoch number and added to the set $\<history>$.
%
%
However, we need to guarantee that for every epoch the set
$\<history>$ is the same in every correct server.
Alg~\ref{DPO-alg1} enforced this using BAB and counting sufficient
received messages.
Alg.~\ref{DPO-alg-basb} uses SBC to solve several independent
consensus instances simultaneously, one on each participant's
proposal.
%
Line~\ref{alg2:brb-epochinc} broadcasts an invitation to an epoch
change, which causes correct servers to build a proposed set and
propose this set using the SBC.
There is one instance of SBC per epoch change $h$, identified by SBC$[h]$.
The SBC service guarantees that each correct server decides the same
set of proposals (where each proposal is a set of elements).
Then, every node applies the same function to the same set of
proposals reaching the same conclusion on how to update $\<history>(h)$.
The function preserves elements that are valid and unstampped.
Note that this opens the opportunity to add elements directly by
proposing them during an epoch change without broadcasting them
before.
This optimization is exploited in Section~\ref{sec:empirical} to speed
up the algorithm further.
As a final note, Alg.~\ref{DPO-alg-basb} allows a Byzantine server to
bypass $\<Add>()$ to propose elements, which will be accepted as long as
the elements are valid.
This is equivalent to a client proposing an element using an
$\<Add>()$ operation, which is then successfully propagated in an epoch
change.
Observe that Alg.~\ref{DPO-alg-basb} still triggers one invocation of
the DSO $\<Get>()$ at each server to build the local proposal.


\subsubsection{Final approach. BRB and SBC without DSOs}

Alg.~\ref{DPO-alg-basb3} avoids the cascade of messages that DSO
$\<Get>()$ calls require by dissecting the internals of the DSO, and
incorporating the internal steps in the \setchain algorithm directly.
This idea exploits the fact that \emph{a correct} \setchain
\emph{server} is a \emph{correct client} of the DSO, and there is no
need for the DSO to be defensive.
This observation shows that using Byzantine resilient building blocks
do not compose efficiently, but exploring this general idea is out of
the scope of this paper.

Alg.~\ref{DPO-alg-basb3} implements $\<theset>$ using a local set
(line~\ref{alg3:theset}).
Elements received in $\<Add>(e)$ are propagated using BRB.
At any given point in time two correct servers may have a different
local sets (due to pending BRB deliveries) but each element added in
one server will eventually be known to all others.
The local variable $\<history>$ is only updated in
line~\ref{add_history_SBCSetDeliver} as a result of a SBC round.
Therefore, all correct servers will agree on the same sets formed by
unstamped elements proposed by some servers.
%
%
Additionally, Alg.~\ref{DPO-alg-basb3} updates $\<theset>$ to
account for elements that are new to the server (line \ref{add_theset_SBCSetDeliver}) , guaranteeing that all
elements in $\<history>$ are also in $\<theset>$.
%


\section{Proof of Correctness}
\label{sec:properties}

We prove now the correctness of the distributed algorithms presented
in Section~\ref{sec:implementation} with regard to the desired
properties introduced in Section~\ref{sec:apisol}.

The following lemmas reason about how elements are stamped.
They directly imply Properties~\ref{api:theset->history}
(\textit{Eventual-Get}), \ref{api:local_unique_stamp} (\textit{Unique
  Epoch}) and \ref{api:consistent-gets} (\textit{Consistent Gets}),
respectively.
We prove that our algorithms satisfy these lemmas in the following
subsections.

\begin{lemma}\label{lem:eventually-history}
  Let $v$ and $w$ be correct servers. If $e\in v.\<theset>$. Then,
  eventually $e$ is in $w.\<history>$.
\end{lemma}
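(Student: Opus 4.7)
The plan is to prove Lemma~\ref{lem:eventually-history} in two stages: first show that $e$ propagates to the local views of all correct servers (the ``dissemination'' step), and then show that once $e$ is globally known, the next epoch increment necessarily stamps it (the ``stamping'' step). Because the three algorithms use different dissemination mechanisms (DSO in Alg.~\ref{DPO-alg1} and Alg.~\ref{DPO-alg-basb}, plain BRB in Alg.~\ref{DPO-alg-basb3}), I would instantiate the dissemination step per algorithm but share the stamping argument.

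For the dissemination step, I would first observe that $e\in v.\<theset>$ can happen in only a few ways. In Alg.~\ref{DPO-alg-basb3}, either $e$ entered via a $\<BRB>.\<Deliver>(\<madd>(e))$ (line~\ref{add_theset_BRBDeliver}) or via an SBC round (line~\ref{add_theset_SBCSetDeliver}); in the latter case $e\in v.\<history>$ already, and Property~\ref{api:consistent-gets} then forces $e\in w.\<history>$ at every correct $w$ (once $w$'s epoch catches up, which it does by SBC-Termination applied to the epochs up to the one stamping $e$). In the former case, BRB-Termination(Global) yields $\<BRB>.\<Deliver>(\<madd>(e))$ at every correct server, so eventually $e\in w.\<theset>$ for all correct $w$. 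For Alg.~\ref{DPO-alg1} and Alg.~\ref{DPO-alg-basb} the analogous conclusion follows from the DSO-AddGet property applied to the internal $\<theset>$ DSO.

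For the stamping step, I would argue by contradiction. Suppose $e$ is never stamped in any correct server's history. By the standing assumption that $\<epochinc>()$ is invoked and completed infinitely often, and by the dissemination step, there is a time $t^\ast \ge \tau$ (with $\tau$ from SBC-Censorship-Resistance) such that for every correct server $w$ and every epoch $h$ opened after $t^\ast$, we have $e \in w.\<theset>$ and $e\notin\bigcup_{k\le h-1}w.\<history>(k)$ when $w$ builds its proposal. Hence every correct proposal at epoch $h$ contains $e$. SBC-Censorship-Resistance then forces $e$ into the decided $\textit{propset}$ for epoch $h$, and the assertion check $e\notin\<history>$ at lines~\ref{forall_e_SBCDeliver}--\ref{add_history_SBCSetDeliver} adds $e$ to $w.\<history>(h)$, contradicting the hypothesis. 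For Alg.~\ref{DPO-alg1} the same argument runs using BAB total order and the $f+1$-occurrence rule, noting that after dissemination every correct server's proposal includes $e$, so at least $f+1$ of the $2f+1$ collected proposals contain $e$.

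The main obstacle I expect is making the ``$e$ is present and unstamped at proposal time'' argument watertight across re-entrant epoch increments: concretely, ruling out a pathological interleaving in which $e$ keeps arriving at $w.\<theset>$ just after $w$ has frozen its proposal for the current epoch. I would handle this by choosing $t^\ast$ to be \emph{later} than the first instant at which $e\in w.\<theset>$ holds \emph{simultaneously} at every correct $w$ (which exists by BRB-Termination(Global) / DSO-AddGet applied once to a finite set of correct servers), and then appealing to the standing assumption to get an epoch opened strictly after $t^\ast$. The stamping argument then applies cleanly at that epoch.
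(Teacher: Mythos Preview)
Your proposal is correct and follows essentially the same approach as the paper: establish that $e$ reaches every correct server's view (via DSO-AddGet or BRB-Termination(Global)), then argue that in some sufficiently late epoch every correct proposal contains $e$, and invoke the relevant agreement property (the $f{+}1$ threshold under BAB for Alg.~\ref{DPO-alg1}, SBC-Censorship-Resistance for Alg.~\ref{DPO-alg-basb} and Alg.~\ref{DPO-alg-basb3}) to conclude $e$ lands in $\<history>(h)$. The paper phrases the stamping step as a direct case split (``if already stamped, done; otherwise all correct proposals contain $e$'') rather than by contradiction, and it invokes Lemma~\ref{lem:history-unique} in place of your appeal to Property~\ref{api:consistent-gets}, but these are cosmetic differences.
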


\begin{lemma}\label{lem:stamp-once}
  Let $v$ be a correct server and $h$, $h'$ two epoch numbers with
  $h\neq h'$. If $e\in v.\<history>(h)$ then
  $e\notin v.\<history>(h')$.
\end{lemma}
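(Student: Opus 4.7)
\medskip

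\noindent\textbf{Proof plan.} Without loss of generality assume $h<h'$. The argument is by induction on the moment at which $v$ adds the pair $\langle h', E\rangle$ to $v.\<history>$, and the key observation, common to all three algorithms, is that a correct server never processes epoch $h'$ until it has already installed epoch $h$. Indeed, the function $\<EpochInc>$ and the \textbf{upon} handlers that modify $v.\<history>$ all begin with an assertion of the form $h'\equiv v.\<epoch>+1$, and $v.\<epoch>$ is incremented by one exactly when a new epoch is appended, so epochs are installed strictly in order $1,2,3,\ldots$ at $v$. Consequently, at the instant $v$ computes the set $E$ that becomes $v.\<history>(h')$, the set $v.\<history>$ already contains the pair $\langle h, v.\<history>(h)\rangle$, and in particular $e\in v.\<history>$ at that moment.

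\medskip

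\noindent I would then split into two cases according to which algorithm is being analyzed. For Alg.~\ref{DPO-alg-basb} and Alg.~\ref{DPO-alg-basb3}, the set $E$ that is appended to $\<history>$ is defined by an explicit filter
\[
E \;\leftarrow\; \{\,e : e\in propset[j],\; valid(e)\;\wedge\; e\notin\HISTORY\,\},
\]
(line~\ref{forall_e_SBCDeliver} of Alg.~\ref{DPO-alg-basb3}, and the analogous line of Alg.~\ref{DPO-alg-basb}), so by the previous observation $e$ is filtered out and $e\notin v.\<history>(h')$. No appeal to the proposers is needed in these two cases.

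\medskip

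\noindent For Alg.~\ref{DPO-alg1} the filter is implicit. Here $v.\<history>(h')$ is set to the collection $E$ of elements that appear in at least $f+1$ of the $2f+1$ delivered $\<mepochinc>(h',\<proposal>,\cdot)$ messages. Among these $2f+1$ senders at most $f$ are Byzantine, so at least $f+1$ proposals come from correct servers. I would argue that no correct proposal for epoch $h'$ contains $e$: indeed, a correct server $j$ only broadcasts its proposal for $h'$ after $j.\<epoch>=h'-1$, and its proposal excludes $\bigcup_{k=1}^{h'-1}\<history>(k)$; since BAB delivers the same ordered sequence of $\<mepochinc>$ messages to all correct servers, correct servers compute identical histories epoch by epoch, so $e\in v.\<history>(h)$ implies $e\in j.\<history>(h)$ at the time $j$ proposes, and hence $e$ is removed from $j$'s proposal. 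Therefore $e$ can appear in at most $f$ of the $2f+1$ proposals, which is strictly less than $f+1$, so $e\notin E = v.\<history>(h')$.

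\medskip

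\noindent The main obstacle is the subtle cross-server consistency needed in the Alg.~\ref{DPO-alg1} case: the proof requires that every correct proposer $j$ for epoch $h'$ sees $e$ in its own history for epoch $h$ at the moment of proposing, not merely that $v$ does. This is precisely what the total-order property of BAB buys us, combined with the fact that the same deterministic $f+1$-threshold rule is applied to the same ordered prefix of delivered proposals at every correct server; once that determinism is spelled out, the rest of the argument is routine.
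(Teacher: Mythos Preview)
Your proposal is correct and follows the paper's approach. For Alg.~\ref{DPO-alg-basb3} the paper gives exactly your one-line filter argument. For Alg.~\ref{DPO-alg1} the paper runs the same contradiction (if $e\in E$ then some correct server proposed it for epoch $h'$, but a correct proposer has already excluded $e$); the only structural difference is that the paper packages the cross-server history agreement you derive inline from BAB's total order as the separate Lemma~\ref{lem:history-unique} and simply cites it. For Alg.~\ref{DPO-alg-basb} the paper tersely says the reasoning is as for Alg.~\ref{DPO-alg1}, whereas your observation that the explicit $e\notin\HISTORY$ filter in the SBC handler makes the Alg.~\ref{DPO-alg-basb3} argument apply verbatim is the cleaner route there.
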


\begin{lemma}\label{lem:history-unique}
  Let $v$ and $w$ be correct servers.
  Let $h$ be such that $h\leq v.\<epoch>$ and $h\leq w.\<epoch>$. Then $v.\<history>(h)=w.\<history>(h)$.
\end{lemma}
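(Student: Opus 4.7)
The plan is to prove the lemma by induction on $h$, treating the three algorithms in a unified way by identifying, in each, the data from which $\<history>(h)$ is computed and showing that these data coincide at $v$ and at $w$. The base case $h = 0$ is trivial because $\<history>$ is initialized to $\emptyset$ at every correct server, so the domain is empty and there is nothing to check. For the inductive step I will assume $v.\<history>(h') = w.\<history>(h')$ for all $h' < h$ with $h' \le v.\<epoch>$, $h' \le w.\<epoch>$, and establish the equality at $h$.

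A preliminary observation I will use throughout: in each algorithm, the map $\<history>$ is only ever extended. Once a correct server assigns $\<history>(h)$ inside its single $\<epochinc>$-related update block, that value is never overwritten. It therefore suffices to show that, the first (and only) time $v$ and $w$ each assign a value to $\<history>(h)$, they compute the same value. This lets me ignore the fact that $v$ and $w$ may perform the assignment at very different wall-clock times, or have already advanced past $h$.

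For Alg.~\ref{DPO-alg1}, the value written to $\<history>(h)$ is the set $E$ of elements appearing in at least $f+1$ out of $2f+1$ proposals $\<mepochinc>(h, \<proposal>, j)$ for distinct senders $j$ delivered via BAB. By BAB Total Order, both $v$ and $w$ BAB-deliver exactly the same sequence of messages; hence the shortest delivery prefix containing $\<mepochinc>$ messages from $2f+1$ distinct senders for epoch $h$ is the same at both servers, and the deterministic majority rule yields the same $E$. For Algs.~\ref{DPO-alg-basb} and \ref{DPO-alg-basb3}, $\<history>(h)$ is written inside the $\<SBC>[h].\<SetDeliver>(propset)$ handler, and by SBC-Agreement both $v$ and $w$ receive identical $propset$. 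The filter $E \leftarrow \{e : e \in propset[j],\ valid(e) \wedge e \notin \<history>\}$ is deterministic in $propset$ once $\<history>$ is fixed; since at the moment of the update only indices $h' < h$ have been written, the induction hypothesis guarantees that the $\<history>$ values consulted by $v$ and by $w$ agree on every relevant index, so both compute the same $E$.

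The main obstacle I expect is the careful handling of the ``first $2f+1$ different servers'' clause in Alg.~\ref{DPO-alg1}: I must argue that, interpreted as firing on the shortest BAB-delivery prefix containing $\<mepochinc>(h,\cdot,\cdot)$ messages from $2f+1$ distinct senders, the selection is canonical, which is where BAB Total Order does the real work. A secondary subtlety is making precise that the $\<history>$ consulted inside the SBC handler in Algs.~\ref{DPO-alg-basb} and \ref{DPO-alg-basb3} really only contains indices strictly below $h$ at the moment of the update, so the induction hypothesis applies; this follows from the guard $h == \<epoch>+1$ and from the fact that $\<epoch>$ is bumped only after $\<history>(h)$ has been written. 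With these two points in place, the rest is a routine composition of BAB Total Order, SBC-Agreement, and the induction hypothesis.
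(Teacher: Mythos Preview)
Your proposal is correct and follows essentially the same approach as the paper: BAB Total Order for Alg.~\ref{DPO-alg1} to guarantee the same $2f+1$ proposals are seen, and SBC-Agreement plus induction on $h$ for Algs.~\ref{DPO-alg-basb} and~\ref{DPO-alg-basb3} to handle the $e\notin\<history>$ filter. Your unified inductive framing and your explicit treatment of the ``first $2f+1$ distinct senders'' prefix are in fact slightly more careful than the paper's own presentation, which treats Alg.~\ref{DPO-alg1} directly (its filter does not depend on prior $\<history>$) and only makes the induction explicit for Alg.~\ref{DPO-alg-basb3}.
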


\noindent It is easy to see that Property~\ref{api:get-after-add}
(\textit{Get-After-Add}) follows directly from
Properties~\ref{api:history->theset-local}, \ref{api:history->theset}
and \ref{api:theset->history}.

\subsection{Correctness of Alg.~\ref{DPO-alg1}}




Property~\ref{api:consistent-set} (\textit{Consistent Sets}) trivially
holds for Alg.~\ref{DPO-alg1} as the response for a $\<Get>()$ is the
tuple $(\<theset>.\<get>()\cup\<history>, \<history>,\<epoch>)$ (see
line~\ref{alg1:func_get}).
%
%
Properties~\ref{api:history->theset-local}
(\textit{Add-Get-Local}), \ref{api:history->theset}
(\textit{Get-Global}) and \ref{api:get->add} (\textit{Add-before-Get})
follow directly from properties \textbf{DSO-AddGet} and
\textbf{DSO-GetAdd}, as
all these properties reason about how elements are added to $\<theset>$
which is implemented as a DSO.

We prove now Lemmas~\ref{lem:eventually-history},~\ref{lem:stamp-once}
and~\ref{lem:history-unique}.
We start by proving Lemma~\ref{lem:history-unique}, which states that
two correct servers $v$ and $w$ agree on the content of epoch $h$
after both have reached $h$.

\begin{proposition}
  Lemmas~\ref{lem:eventually-history},~\ref{lem:stamp-once}
  and~\ref{lem:history-unique} hold for Alg.~\ref{DPO-alg1}.
\end{proposition}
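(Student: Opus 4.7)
The plan is to prove the three lemmas in the order Lemma~\ref{lem:history-unique}, then Lemma~\ref{lem:stamp-once}, then Lemma~\ref{lem:eventually-history}, since Lemma~\ref{lem:stamp-once} will need Lemma~\ref{lem:history-unique} to rule out timing races across servers, while Lemma~\ref{lem:eventually-history} is essentially a liveness argument that stands on its own.

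For Lemma~\ref{lem:history-unique} (\emph{Consistent Gets}), the key tool is \textbf{BAB-Total Order}. Both correct servers $v$ and $w$ reach epoch $h$ only by executing the \<Upon> block at line~\ref{alg1:upon_delivery} for that $h$. That block fires after delivering $\<mepochinc>(h,\cdot,j)$ messages from $2f+1$ distinct senders. Since every delivery comes from BAB, $v$ and $w$ see the same global sequence of deliveries; the first $2f+1$ of them that correspond to epoch $h$ and distinct senders form exactly the same multiset at $v$ and at $w$. Therefore they compute the same set $E$ of elements appearing in $\geq f+1$ proposals, and hence $v.\<history>(h) = w.\<history>(h)$.

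For Lemma~\ref{lem:stamp-once} (\emph{Unique Epoch}), suppose $e \in v.\<history>(h)$ and, for contradiction, $e \in v.\<history>(h')$ with $h' < h$. When $v$ processed its epoch-$h$ \<Upon> block, $e$ appeared in the proposals of at least $f+1$ servers, so at least one such proposal came from a correct server $u$. That proposal was built at line~\ref{alg1:proposal} while $u.\<epoch> = h-1 \geq h'$, so $u$'s local history already had an entry for epoch $h'$. By Lemma~\ref{lem:history-unique}, $u.\<history>(h') = v.\<history>(h') \ni e$, so $e$ would have been subtracted out of $u$'s proposal by line~\ref{alg1:proposal}—contradiction. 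Uniqueness inside a single server then follows from the fact that $\<history>$ is only extended by a new pair $\langle h,E\rangle$ when $\<epoch>$ advances to $h$ (line~\ref{alg1:added}), so each index is written at most once.

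For Lemma~\ref{lem:eventually-history} (\emph{Eventual-Get}), assume $e \in v.\<theset>$ at some correct $v$. Since $\<theset>$ is implemented by the shared DSO, \textbf{DSO-AddGet} guarantees a time $\tau$ after which every correct server's call to $\<theset>.\<Get>()$ returns a set containing $e$. By assumption every epoch is eventually incremented, so infinitely many invocations of $\<EpochInc>(h')$ are issued at correct servers after $\tau$. Each such run triggers the \<Upon> block at line~\ref{alg1:upon_delivery} only after $2f+1$ distinct proposals are delivered; at least $f+1$ of these come from correct servers, all of which include $e$ in their proposal unless $e$ has already been stamped. Thus either $e$ was already added to some $\<history>(k)$ in a prior epoch, or the current round produces $e \in E$ and stamps it in $\<history>(h')$. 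Either way, $e$ ends up in $w.\<history>$; applying Lemma~\ref{lem:history-unique} to $v$ and $w$ propagates this to every correct server.

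The subtlest step is the timing argument inside Lemma~\ref{lem:stamp-once}: one must justify that at the instant the correct server $u$ \emph{computed} its proposal for epoch $h$, its local $\<history>$ already covered all epochs strictly below $h$. This is immediate from the $\textbf{assert } h \equiv \<epoch>+1$ check and the fact that $\<epoch>$ is only incremented along with the corresponding $\<history>(h-1)$ update in the \<Upon> block; beyond that the arguments are routine given the guarantees of BAB and DSO.
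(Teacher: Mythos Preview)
Your proposal is correct and follows essentially the same approach as the paper. The paper proves the three lemmas in the order \ref{lem:history-unique}, \ref{lem:eventually-history}, \ref{lem:stamp-once} rather than your \ref{lem:history-unique}, \ref{lem:stamp-once}, \ref{lem:eventually-history}, and for Lemma~\ref{lem:eventually-history} it appeals to Property~\ref{api:history->theset} (itself derived from the DSO guarantees) where you invoke \textbf{DSO-AddGet} directly; but the substance of all three arguments---\textbf{Total Order} forcing identical $2f{+}1$-proposal multisets, the $f{+}1$ threshold yielding a correct proposer whose line~\ref{alg1:proposal} subtraction already removed $e$, and the eventual inclusion of $e$ in enough correct proposals---is the same.
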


\begin{proof}
  We prove first Lemma~\ref{lem:history-unique}, then
  Lemma~\ref{lem:eventually-history} and finally
  Lemma~\ref{lem:stamp-once}.
  \begin{itemize}
  \item We start by proving Lemma~\ref{lem:history-unique}, which
    states that two correct servers $v$ and $w$ agree on the content
    of epoch $h$ after both have reached $h$.
    Variable \<history> is updated only at line~\ref{alg1:added}.
    Since $v$ and $w$ are correct servers that computed
    $\<history>(h)$ both must have BRB.Deliver messages of the form
    \<mepochinc>($h, proposal,j$) from $2f+1$ different servers.
    The properties \textbf{BRB-Termination(Global)} and \textbf{Total
      Order} guarantee that these are the same messages for both
    servers.
    Both $v$ and $w$ filter elements in the same way, by keeping just
    the elements proposed by $f+1$ servers in one of the delivered
    messages.
    Therefore, in line~\ref{alg1:added} both $v$ and $w$ update
    $\<history>(h)$ equally.
    Hence, Lemma~\ref{lem:history-unique} holds.
  \item We prove now Lemma~\ref{lem:eventually-history}.
    Let \(e\) be an element in the \<theset> of a correct server.
    It follows from Property~\ref{api:history->theset} that there is a
    point in time $t$ after which $e$ is in the set returned by all
    $\<theset>.\<get>()$ in all correct servers.
    Note that we are assuming that there is always a new epoch
    increment, in particular there is a new $\<EpochInc>(h)$ after
    $t$.
    If $e$ is already in $w.\<history>(h')$ for $h'<h$ we are done.
    Otherwise, by Lemma~\ref{lem:history-unique}, $e$ will not be in
    $z.\<history>(h')$ for any correct servers $z$ and $h'<h$.
    Then, when computing the proposal for epoch $h$, at
    line~\ref{alg1:proposal}, all correct servers will include $e$ in
    their set.
    To compute the epoch $h$ server $w$ will wait until it BRB.Deliver
    messages of the form $\<mepochinc>(h, proposal,j)$ from $2f+1$
    different servers (see line~\ref{alg1:upon_delivery}), which will
    include at least $f+1$ messages from correct servers.
    Therefore, at least $f+1$ of the proposal received will contain
    $e$, which implies that server $w$ will include $e$ in the set
    $\<history>(h)$.
    Hence, Lemma~\ref{lem:eventually-history} holds.
  \item Finally, we use Lemma~\ref{lem:history-unique} to prove
    Lemma~\ref{lem:stamp-once}.
    The proof proceeds by contradiction.
  Assume that $e \in v.\<history>(h')$.
  The only point where $\<history>(h')$ is updated is at
  line~\ref{alg1:added}, where $\<history>(h')$ is defined to contain
  exactly the elements that were proposed for epoch $h'$ by $f+1$
  different servers.
  Therefore, given that there are at most $f$ Byzantine servers, at
  least one correct server proposed a set containing $e$ for epoch
  $h'$.
  Let $w$ be one of such correct server.
  Server $w$ computed the proposal for epoch $h'$
  (see line~\ref{alg1:proposal}) as the set $\<theset>.\text{\<Get>}()
  \setminus \bigcup_{k=1}^{h'-1} \<history>(k)$.
 But, by Lemma~\ref{lem:history-unique}, $w.\<history>(h) =
 v.\<history>(h)$, which implies that $e \in \bigcup_{k=1}^{h'-1}
 \<history>(k)$, as $h \leq h'-1$.
 Therefore, $e \notin \<theset>.\text{\<Get>}() \setminus
 \bigcup_{k=1}^{h'-1} \<history>(k)$, meaning that $e$ was not
 proposed by $w$ for epoch $h'$.
  This is a contradiction that follows from assuming that $e \in
  v.\<history>(h')$.
  Hence, Lemma~\ref{lem:stamp-once} holds for Alg.~\ref{DPO-alg1}.
\end{itemize}  
\end{proof}

\subsection{Correctness of Alg.~\ref{DPO-alg-basb}}

We can prove that Alg.~\ref{DPO-alg-basb} satisfies all lemmas using
the exactly same reasoning that we used for Alg.~\ref{DPO-alg1},
replacing properties of BRB and BAB by corresponding properties of
SBC.
In order to prove Lemma~\ref{lem:eventually-history}, instead of rely
of enough messages being BRB.Delivered, we use
\textbf{SBC-Censorship-Resistance} which guarantees that the decided
set will contain the element $e$.
The proof of Lemma~\ref{lem:history-unique} is analogous to the one of
Alg.\ref{DPO-alg1} but instead of using the properties
\textbf{BRB-Termination(Global)} and \textbf{Total Order}, we
use the property \textbf{SBC-Agreement} which guarantees that correct
servers agree on the decided set.

\subsection{Correctness of Alg.~\ref{DPO-alg-basb3}}
We now prove correctness of Alg.~\ref{DPO-alg-basb3}.
First, Property~\ref{api:history->theset-local}
(\textit{Add-Get-Local}) follows directly from the code of $\<Add>()$,
line~\ref{alg3:get} of $\<Get>()$ and Property
\textbf{BRB-Termination(Local)}.
Second, we show that all stamped elements are in $\<theset>$, which
implies Property~\ref{api:consistent-set} (\textit{Consistent Sets}).

\begin{lemma}
  \label{lem:consistent-set}
  At the end of each function and upon of Alg.~\ref{DPO-alg-basb3},
  $\bigcup_{h} v.\<history>(h) \subseteq v.\<theset>$, for every
  correct server $v$.
\end{lemma}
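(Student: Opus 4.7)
The plan is to prove the invariant by induction on the execution steps of a fixed correct server $v$, checking each program point at which either $v.\<theset>$ or $v.\<history>$ can change and verifying that the containment is re-established by the end of the enclosing function or upon-block.

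For the base case, after initialization both $v.\<theset>$ and $v.\<history>$ are empty (lines in the \textbf{Init} block), so $\bigcup_{h} v.\<history>(h) = \emptyset \subseteq v.\<theset>$ trivially. For the inductive step, I would enumerate the handlers in Alg.~\ref{DPO-alg-basb3}: (i) $\<Get>$ only reads state; (ii) $\<Add>$ (lines~\ref{alg3-add-begin}--\ref{alg3-add-end}) only issues a BRB broadcast; (iii) the BRB delivery of $\<madd>(e)$ (lines~\ref{alg3-BRBDeliver-add-begin}--\ref{alg3-BRBDeliver-add-end}) only enlarges $v.\<theset>$, so the invariant is preserved since $v.\<history>$ does not change and $v.\<theset>$ only grows; (iv) $\<EpochInc>$ (lines~\ref{alg3-epochinc-begin}--\ref{alg3-epochinc-end}) only broadcasts; (v) the BRB delivery of $\<mepochinc>$ in the drop case (lines~\ref{alg3-BRBDeliver-epinc-drop-begin}--\ref{alg3-BRBDeliver-epinc-drop-end}) changes nothing; (vi) the BRB delivery of $\<mepochinc>$ in the accept case (lines~\ref{alg3-BRBDeliver-epinc-begin}--\ref{alg3-BRBDeliver-epinc-end}) only stores $prop[h]$ and invokes $\<SBC>.\<Propose>$, neither of which alters $v.\<theset>$ nor $v.\<history>$.

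The only interesting case is the SBC set-delivery handler (lines~\ref{alg3-SetDeliver-begin}--\ref{alg3-SetDeliver-end}), which is precisely where both variables are updated in concert. Here I would argue that the handler first computes the set $E$ (line~\ref{forall_e_SBCDeliver}), then executes $v.\<theset> \leftarrow v.\<theset> \cup E$ (line~\ref{add_theset_SBCSetDeliver}) \emph{before} $v.\<history> \leftarrow v.\<history> \cup \{\langle h, E\rangle\}$ (line~\ref{add_history_SBCSetDeliver}). So at the end of the handler, on the one hand $E \subseteq v.\<theset>$ by line~\ref{add_theset_SBCSetDeliver}, and on the other hand the previously stamped elements $\bigcup_{h' < h} v.\<history>(h')$ remain inside $v.\<theset>$ by the inductive hypothesis together with the fact that line~\ref{add_theset_SBCSetDeliver} only grows $v.\<theset>$. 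Combining, $\bigcup_{h'} v.\<history>(h') \subseteq v.\<theset>$ at the end of this handler.

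The only subtle point, and what I consider the main obstacle, is making sure that the ordering of the two assignments inside the SBC handler is respected and that no intermediate observation of state (for instance, by a concurrently executing $\<Get>$) can occur between them; this is handled by treating the upon-block atomically, as is standard for this pseudocode. Apart from that, the argument is a routine case analysis, so no additional lemma beyond the monotonicity of $v.\<theset>$ is needed.
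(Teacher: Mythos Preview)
Your proof is correct and follows essentially the same approach as the paper. The paper's proof is more condensed---it simply observes that $v.\<history>$ is modified only at line~\ref{add_history_SBCSetDeliver}, which is preceded by line~\ref{add_theset_SBCSetDeliver} adding the same set $E$ to $v.\<theset>$, and that the only other modification of $v.\<theset>$ (line~\ref{add_theset_BRBDeliver}) only grows it---whereas you spell out the full inductive case analysis over all handlers, but the core argument is identical.
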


\begin{proof}
  Let $v$ be a correct server.
  The only way to add elements to $v.\<history>$ is at
  line~\ref{add_history_SBCSetDeliver}, which is preceded by
  line~\ref{add_theset_SBCSetDeliver} which adds the same elements to
  $v.\<theset>$.
  The only other instruction that modifies $v.\<theset>$ is
  line~\ref{add_theset_BRBDeliver} which only makes the set grow.
\end{proof}

Third, the following lemma shows that elements in a correct server are
propagated to all correct servers, which is equivalent to
Property~\ref{api:history->theset} (\textit{Get-Global}).

\begin{lemma}\label{lem:eventually-theset}
  Let $v$ be a correct server and $e$ an element in $v.\<theset>$.
  Then $e$ will eventually be in $w.\<theset>$ for every correct
  server $w$.
\end{lemma}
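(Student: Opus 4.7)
The plan is to trace how $e$ can enter a correct server's $\<theset>$ and to show that the same mechanism eventually propagates $e$ to every other correct server. In Alg.~\ref{DPO-alg-basb3}, after initialization there are only two lines that enlarge $\<theset>$: line~\ref{add_theset_BRBDeliver} inside the \<BRB>.\<Deliver>(\<madd>$(e)$) handler, and line~\ref{add_theset_SBCSetDeliver} inside the \<SBC>$[h]$.\<SetDeliver> handler. So if $e \in v.\<theset>$, then at some point $v$ executed one of these two lines with $e$ among the elements added, and I would treat the two cases separately.

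If $v$ added $e$ through the BRB handler, then $v$ previously executed \<BRB>.\<Deliver>(\<madd>$(e)$). By \textbf{BRB-Termination(Global)}, every correct server $w$ also eventually executes \<BRB>.\<Deliver>(\<madd>$(e)$) and consequently appends $e$ to $w.\<theset>$ at line~\ref{add_theset_BRBDeliver}; since $\<theset>$ is only ever enlarged, $e$ remains in $w.\<theset>$ thereafter.

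If instead $v$ added $e$ through the SBC handler for some epoch $h$, then $v$ decided a $propset$ in \<SBC>$[h]$ with $e \in propset[j]$ for some $j$. By \textbf{SBC-Agreement} and \textbf{SBC-Termination}, every correct server that participates in \<SBC>$[h]$ eventually decides the very same $propset$ and performs the same update at line~\ref{add_theset_SBCSetDeliver}. What remains is to show that every correct $w$ does participate in \<SBC>$[h]$, i.e.\ eventually reaches the \<Propose> invocation at lines~\ref{alg3-BRBDeliver-epinc-begin}--\ref{alg3-BRBDeliver-epinc-end}. Since $v$ participated, $v$ previously \<BRB>-delivered \<mepochinc>$(h)$; by \textbf{BRB-Termination(Global)} so does every correct $w$; and by induction on $h$, using \textbf{SBC-Termination} applied to the earlier instances, every correct $w$ eventually reaches $\<epoch>=h-1$, at which point (possibly after buffering the already-delivered \<mepochinc>$(h)$) it proposes, decides the same $propset$, and adds $e$ to its $\<theset>$.

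The main obstacle is precisely this inductive coordination between the local epoch counter and the BRB/SBC deliveries: a correct server must have $\<epoch>=h-1$ at the moment it processes \<mepochinc>$(h)$, which forces me to make explicit that out-of-order \<mepochinc> messages are buffered rather than silently dropped. The base case $h=1$ holds by the initialization $\<epoch>=0$, and the inductive step uses the already-established termination of \<SBC>$[h-1]$ at every correct server to advance its counter by one, after which the reasoning above delivers $e$ to $w.\<theset>$.
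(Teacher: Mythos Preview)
Your proof is correct and follows the same two-case decomposition as the paper: element $e$ enters $v.\<theset>$ either via the BRB handler (line~\ref{add_theset_BRBDeliver}) or via the SBC handler (line~\ref{add_theset_SBCSetDeliver}), and in each case the relevant global delivery property propagates $e$ to every correct $w$.

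Two minor remarks on the SBC case. First, the paper is less scrupulous than you are about why every correct $w$ eventually processes $\<SBC>[h].\<SetDeliver>$: it simply invokes \textbf{SBC-Termination} and \textbf{SBC-Agreement} without your induction on $h$ or your discussion of buffering $\<mepochinc>$ messages. Your inductive argument is a genuine strengthening, and your observation that out-of-order $\<mepochinc>(h)$ must be held (implicit in the ``\textbf{upon} \ldots\ and $h==\<epoch>+1$'' guard semantics) is well taken. Second, your phrase ``performs the same update at line~\ref{add_theset_SBCSetDeliver}'' is slightly loose: the set $E$ added there filters out elements already in $w.\<history>$, so two correct servers need not compute identical $E$. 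The paper closes this by observing that if $e$ is filtered out at $w$ because $e\in w.\<history>$, then Lemma~\ref{lem:consistent-set} already gives $e\in w.\<theset>$. You could patch your sentence the same way, or alternatively appeal to Lemma~\ref{lem:history-unique} inside your induction to argue the histories (and hence the $E$ sets) coincide.
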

\begin{proof}
  Initially, $v.\<theset>$ is empty. There are two ways to add an
  element $e$ to $v.\<theset>$:
  (1) At line~\ref{add_theset_BRBDeliver}, so $e$ is valid and was
  received via a $\<BRB>.\<Deliver>(\<madd>(e))$.
    By Property \textbf{BRB-Termination\\(Global)}, every correct server $w$ will
    eventually execute $\<BRB>.\<Deliver>(\<madd>(e))$, and then
    (since $e$ is valid), $w$ will add it to $w.\<theset>$ in
    line~\ref{add_theset_BRBDeliver}.
  (2) At line~\ref{add_theset_SBCSetDeliver}, so element $e$ is valid
    and was received as an element in one of the sets in $\Propset$
    from $\<SBC>[h].\<SetDeliver>(\Propset)$ with $h = v.\<epoch> +
    1$.
    By properties \textbf{SBC-Termination} and \textbf{SBC-Agreement},
    all correct servers agree
    on the same set of proposals.
    Then, $w$ will also eventually receive
    $\<SBC>[h].\<SetDeliver>(\Propset)$.
    Therefore, if $v$ adds $e$ then $w$ either adds it or has it
    already in its $w.\<history>$ which implies by
    Lemma~\ref{lem:consistent-set} that $e\in w.\<theset>$.
    In either case, $e$ will eventually be in $w.\<theset>$.
\end{proof}

Finally, we prove the lemmas introduced at the beginning of the
section that reason about how elements are stamped.

\begin{proposition}
  Lemma~\ref{lem:eventually-history}, Lemma~\ref{lem:stamp-once}
  and Lemma~\ref{lem:history-unique} hold for Alg.~\ref{DPO-alg-basb3}.
\end{proposition}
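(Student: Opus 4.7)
The plan is to prove the three lemmas in the order Lemma~\ref{lem:history-unique}, Lemma~\ref{lem:stamp-once}, Lemma~\ref{lem:eventually-history}, mirroring the proof strategy used for Alg.~\ref{DPO-alg1} but replacing the arguments based on BAB total order and message counting with ones based on \textbf{SBC-Agreement}, \textbf{SBC-Termination}, and \textbf{SBC-Censorship-Resistance}. The key structural observation is that in Alg.~\ref{DPO-alg-basb3} the variable $v.\<history>(h)$ is updated at exactly one place (line~\ref{add_history_SBCSetDeliver}) as a deterministic function of the $\<propset>$ obtained from $\<SBC>[h].\<SetDeliver>$ and of the current $v.\<history>$ restricted to epochs $1,\ldots,h-1$.

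For Lemma~\ref{lem:history-unique}, I would use strong induction on $h$. The base case $h=0$ is vacuous. For the inductive step, if both correct servers $v$ and $w$ have reached epoch $h$, each executed $\<SBC>[h].\<SetDeliver>(\<propset>)$, and by \textbf{SBC-Termination} together with \textbf{SBC-Agreement} they receive an identical $\<propset>$. The set $E$ is then computed at line~\ref{forall_e_SBCDeliver} by filtering with $valid(e)$ (a deterministic, server-independent predicate) and $e \notin \<history>$, where at that moment $\<history>$ contains only entries for epochs $1,\ldots,h-1$. By the inductive hypothesis these prefixes agree at $v$ and $w$, so the filter produces the same $E$ at both servers, and hence $v.\<history>(h) = w.\<history>(h)$.

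For Lemma~\ref{lem:stamp-once}, I would proceed by contradiction. Suppose $e \in v.\<history>(h) \cap v.\<history>(h')$ with $h<h'$. Since $\<epoch>$ is advanced only after $\<history>$ is extended (lines~\ref{add_history_SBCSetDeliver}--\ref{set_epoch_SBCSetDeliver}), when $v$ later computes $E$ for epoch $h'$, the pair $\langle h, v.\<history>(h)\rangle$ is already in $v.\<history>$, so $e \in \bigcup_{k=1}^{h'-1} v.\<history>(k)$. But the filter at line~\ref{forall_e_SBCDeliver} forbids $e \in \<history>$, so $e$ cannot be placed into $v.\<history>(h')$, a contradiction.

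For Lemma~\ref{lem:eventually-history}, I would combine Lemma~\ref{lem:eventually-theset} with \textbf{SBC-Censorship-Resistance}. Given $e \in v.\<theset>$, Lemma~\ref{lem:eventually-theset} ensures that every correct server $z$ eventually has $e \in z.\<theset>$. Under the standing assumption that further $\<EpochInc>$ calls keep arriving, \textbf{BRB-Termination(Global)} together with \textbf{SBC-Termination} and \textbf{SBC-Agreement} ensures that correct servers keep advancing through epochs. If at some point a correct server already has $e$ stamped in its history, then Lemma~\ref{lem:history-unique} propagates the same stamping to every other correct server that reaches that epoch. Otherwise, let $h$ be the first epoch whose \<mepochinc>($h$) is BRB-delivered at every correct server after $e$ has propagated everywhere and after the stabilization time $\tau$ guaranteed by \textbf{SBC-Censorship-Resistance}; at that epoch every correct server includes $e$ in its $prop[h]$, so \textbf{SBC-Censorship-Resistance} forces $e$ into the decided $\<propset>$, and the filter at line~\ref{forall_e_SBCDeliver} admits $e$ (valid and, by assumption, still not in $\<history>$) into $w.\<history>(h)$ at every correct $w$. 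The main obstacle is exactly this final case analysis: one must pin down a common epoch $h$ past both the propagation time of $e$ and the stabilization time $\tau$, while disposing of the scenario where $e$ may have been stamped at some but not all correct servers by invoking Lemma~\ref{lem:history-unique} to bootstrap eventual agreement before appealing to \textbf{SBC-Censorship-Resistance}.
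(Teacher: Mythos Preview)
Your proposal is correct and follows essentially the same approach as the paper: strong induction via \textbf{SBC-Agreement} for Lemma~\ref{lem:history-unique}, the filter at line~\ref{forall_e_SBCDeliver} for Lemma~\ref{lem:stamp-once}, and \textbf{SBC-Censorship-Resistance} (together with propagation to all correct servers) for Lemma~\ref{lem:eventually-history}. The only cosmetic differences are that the paper disposes of Lemma~\ref{lem:stamp-once} in one line by pointing directly at the ``$e\notin\HISTORY$'' check rather than phrasing it as a contradiction, and it handles Lemma~\ref{lem:eventually-history} by a bare reference to the analogous argument for Alg.~\ref{DPO-alg-basb}, whereas you spell out the case analysis more carefully.
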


\begin{proof}
  We prove each Lemma separately.
  \begin{itemize}
  \item The proof of Lemma~\ref{lem:eventually-history} (elements
    added in a correct server will eventually be stamped in all
    correct servers) for this algorithm is analogous to the one for
    Alg.~\ref{DPO-alg-basb}.
  \item Lemma~\ref{lem:stamp-once} follows directly from the check that $e$
    is not injected at $v.\<history>(h)$ if $e\in v.\<history>$ in
    line~\ref{add_history_SBCSetDeliver}.
  \item Next, we show that two correct servers $v$ and $w$ agree on
    the prefix of \<history> that both computed, that is
    Lemma~\ref{lem:history-unique}.
    The proof proceeds by induction on the epoch number $\<epoch>$.
    The base case is $\<epoch>=0$, which holds trivially since
    $v.\<history>(0)=w.\<history>(0)=\emptyset$.
    Variable $\<epoch>$ is only incremented in one unit in
    line~\ref{set_epoch_SBCSetDeliver}, after $\<history>(h)$ has been
    changed in line~\ref{add_history_SBCSetDeliver} when
    $h=\<epoch>+1$.  In that line, $v$ and $w$ are in the same phase
    on SBC (for the same $h$).
    By \textbf{SBC-Agreement}, $v$ and $w$ receive the same \Propset,
    both $v$ and $w$ validate all elements equally, and (by inductive
    hypothesis), for each $h' \leq \<epoch>$ it holds that
    $e\in v.\<history>(h')$ if and only if $e\in w.\<history>(h')$.
    Therefore, in line~\ref{add_history_SBCSetDeliver} both $v$ and
    $w$ update $\<history>(h)$ equally, and after
    line~\ref{set_epoch_SBCSetDeliver} it holds that
    $v.\<history>(\<epoch>)=w.\<history>(\<epoch>)$.
    \end{itemize}
  \end{proof}

Finally, we discuss Property~\ref{api:get->add}
(\textit{Add-before-Get}).
If valid elements can only be created by clients and added using
$\<Add>(e)$ the property trivially holds.
If valid elements can be created by, for example Byzantine servers,
then they can inject elements in $\<theset>$ and $\<history>$ of
correct servers without using $\<Add>()$.
These servers can either execute directly a BRB.$\<Broadcast>$ or
directly via the SBC in epoch rounds.
In these cases, Alg.~\ref{DPO-alg-basb3} satisfies a weaker version of
(\textit{Add-before-Get}) that states that elements returned by
$\<Get>()$ are either added by $\<Add>()$, by a BRB.$\<Broadcast>$ or
injected in the SBC phase.


\section{Empirical Evaluation}
\label{sec:empirical}

\newcommand{\Exp}[1]{\includegraphics[width=0.410\textwidth]{experiments/#1.pdf}}

\begin{figure*}[t!]
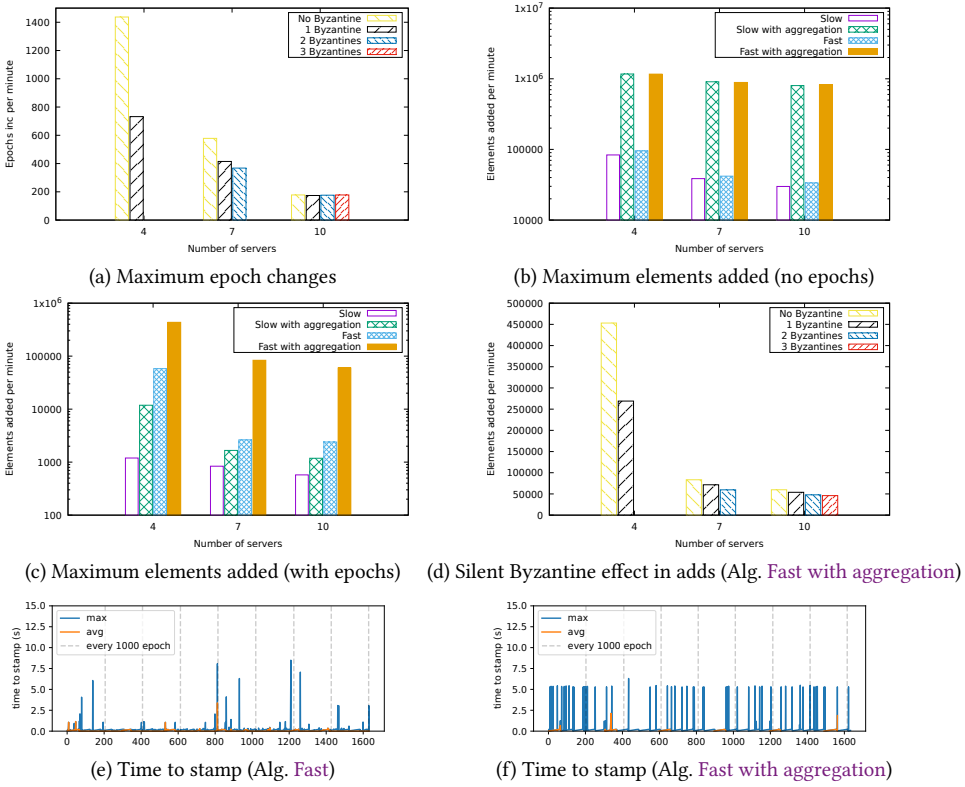

  \footnotesize\centering
  \noindent
   \begin{tabular}{@{}c@{}c@{}}
     \Exp{exp1} & \Exp{exp2} \\
     (a) Maximum epoch changes  & (b) Maximum elements added (no epochs)\\
     \Exp{exp3} & \Exp{exp4} \\
     (c) Maximum elements added (with epochs) & (d) Silent Byzantine effect in adds (\AlgThreeSet)\\
     \Exp{exp5_alg3} &     \Exp{exp5_alg3+set} \\
     (e) Time to stamp (\AlgThree) & (f) Time to stamp (\AlgThreeSet)\\
   \end{tabular}
  \caption{Experimental results. \AlgTwoSet and \AlgThreeSet are the versions of the algorithms with aggregation. Byzantine servers are simply silent.}
  \label{fig:experiments}
\end{figure*}

%
%
%
We implemented the server code of \AlgTwo and \AlgThree, using our own
implementations of DSO, BRB and SBC.
%
%
Our prototype is written in Golang~\cite{donovan15go} 1.16 with
message passing using ZeroMQ~\cite{zeromq} over TCP.
Our testing platform used Docker running on a server with 2 Intel
Xeon CPU processors at 3GHz with 36 cores and 256GB RAM, running
Ubuntu 18.04 Linux64.
Each \setchain server was packed in a Docker container with no limit
on CPU or RAM usage.
\AlgTwo implements \setchain and DSO as two standalone executables
that communicate using remote procedure calls on the internal loopback
network interface of the Docker container.
The RPC server and client are taken from the Golang standard
library.
For \AlgThree everything resides in a single
executable.
For both algorithms, we evaluated two versions, one where
each element insertion causes a broadcast, and another where servers
aggregate locally the elements inserted until a maximum message size
(of $10^6$ elements) or a maximum element timeout (of 5s) is reached.
In all cases, elements have a size of 116-126 bytes.
Alg.~\ref{DPO-alg3-aggregated} implements the aggregated version of
\AlgThree.
  
\begin{figure}[t!]
	\begin{algorithm}[H]
		\renewcommand{\thealgorithm}{Fast with aggregation}         	
          \caption{\small Server implementation using a local set,
            BRB and SBC.
            }~\label{DPO-alg3-aggregated}
          \small
		\begin{algorithmic}[1]
			\State  \textbf{Init:} $\<epoch> \leftarrow 0$,
			\hspace{2em} $\<history> \leftarrow \emptyset$
			\State  \textbf{Init:} $\<theset> \leftarrow \emptyset$,
			\hspace{0.9em} $\<tobroadcast> \leftarrow \emptyset$
			\Function{\<Get>}{~}
			        \State \<return> $(\<theset>, \<history>,\<epoch>)$
			\EndFunction
			\Function{\<Add>}{$e$}
                       		 \State \<assert> $valid(e)$ and $e \notin \<theset>$
	                         \State $\<tobroadcast> \leftarrow \<tobroadcast> \cup \{e\}$ 
			\EndFunction	         		
			\Upon{\<BRB>.\<Deliver>(\<madd>\((s)\))}
                                \State \<assert> $valid(s)$
			        \State $\<theset> \leftarrow \<theset> \cup s$ 
  			        \State $\<tobroadcast> \leftarrow \<tobroadcast> \setminus s$ 
			\EndUpon
                        
                        %
                         
                        \State \ldots     \Comment{$\<EpochInc>$ and $\<BRB>.\<Deliver>(\<mepochinc>(h))$ as in Alg.~\ref{DPO-alg3}}
                        \setcounter{ALG@line}{23}
			\Upon{\<SBC>[\(h\)].\<SetDeliver>($propset$) and  $h==\<epoch>+1$}
			        \State $E \leftarrow \{e : e \in propset[j], valid(e) \wedge e\notin \<history>\}$
				\State $\<history> \leftarrow \<history> \cup \{\langle h, E \rangle\}$
				\State $\<theset> \leftarrow \<theset> \cup E$
                                \State $\<tobroadcast> \leftarrow \<tobroadcast> \setminus E$ 
			\EndUpon
                        \When{$|\<tobroadcast>| > 1000000$ or $\<tobroadcast>.oldest > 5s$}	
		 		\State \<BRB>.\<Broadcast>(add(\<tobroadcast>)) 
         			\State $\<tobroadcast> \leftarrow \emptyset$ 
	          	\EndWhen        
		\end{algorithmic}
	\end{algorithm}\vspace{-2em}
\end{figure}

%
%
We evaluated empirically the following hypothesis:
\begin{itemize}
\item (\Hspeed): The maximum rate of elements that can be inserted is
  much higher than the maximum epoch rate.
\item (\Hbetteralg): Alg.~\ref{DPO-alg-basb3} performs better than
  Alg.~\ref{DPO-alg-basb}.
\item (\Hagr): Aggregated versions perform better than their basic
  counterparts.
\item (\Hbyzantine): Silent Byzantine servers do not affect dramatically the
  performance.
\item (\Hdegrade): Performance does not degrade over time.
\end{itemize}

%
%

To evaluate hypotheses \Hspeed\ to \Hdegrade\, we carried out the
experiments described below, which are reported in
Fig.~\ref{fig:experiments}.
In all cases, operations are injected by clients running within the
same Docker container.
Resident memory was always enough such that in no experiment the
operating system needed to recur to disk swapping.
All experiments consider deployments with 4, 7, or 10 server nodes,
and each running experiment reported is taken from the average of 10
executions.

We tested first how many epochs per minute our \setchain
implementations can handle.
In these runs, we did not add any element and we incremented the epoch
rate to find out the smallest latency between an epoch and the
subsequent one.
We run it with 4, 7, and 10 nodes, with and without Byzantines
servers.
The outcome is reported in Fig.~\ref{fig:experiments}(a).
  
In our second experiment, we estimated empirically how many elements
per minute can be added using our four different implementations of
\setchain (\AlgTwo and \AlgThree with and without aggregation),
without any epoch increment.
This is reported in Fig.~\ref{fig:experiments}(b).
In this experiment \AlgTwo and \AlgThree perform similarly.
With aggregation \AlgTwo and \AlgThree also perform similarly, but one
order of magnitude better than without aggregation, confirming
(\Hagr).
Fig.~\ref{fig:experiments}(a) and (b) together suggest that sets are
three orders of magnitude faster than epoch changes, confirming
(\Hspeed).
  
The third experiment compares the performance of our implementations
combining epoch increments and insertion of elements.
We set the epoch rate at 1 epoch change per second and calculated the
maximum ratio of \<Add> operations.
The outcome is reported in
Fig.~\ref{fig:experiments}(c), which shows that \AlgThree outperforms
\AlgTwo.
In fact, \AlgThreeSet even outperforms \AlgTwoSet by a factor of
roughly 5 for 4 nodes and by a factor of roughly 2 for 7 and 10 nodes.
\AlgThreeSet can handle 8x the elements added by \AlgThree for 4 nodes
and 30x for 7 and 10 nodes.
The benefits of \AlgThreeSet over \AlgThree increase as the number of
nodes increase because \AlgThreeSet avoids broadcasting of elements
which generates a number of messages that is quadratic in the number
of nodes in the network.
This experiment confirms (\Hbetteralg) and
(\Hagr).
The difference between \AlgThree and \AlgTwo was not observable in the
previous experiment (without epoch changes) because the main
difference is in how servers proceed to collect elements to vote
during epoch changes.

The next experiment explores how silent Byzantine servers affect
\AlgThreeSet.  We implement silent Byzantine servers and run for 4, 7
and 10 nodes with an epoch change ratio of 1 epoch per second,
calculating the maximum add rate.
This is reported in~Fig.~\ref{fig:experiments}(d).
Silent Byzantine servers degrade the speed for $4$ nodes as in this
case the implementation considers the silent server very frequently in
the validation phase, but it can be observed that this effect is much
smaller for larger number of servers, validating (\Hbyzantine).
  
In the final experiment, we run 4 servers for a long time (30
  minutes) with an epoch ratio of 5 epochs per second and add
  requests to 50\% of the maximum rate.
  We compute the time elapsed between the moment in which clients
  request an add and the moment at which elements are stamped.
  Fig.~\ref{fig:experiments}(e) and (f) show the maximum and average
  times for elements inserted in the last second.
  In the case of \AlgThree, the worst case during the 30 minutes experiment
  was around 8 seconds, but the majority of elements were
  inserted within 1 sec or less.
  For \AlgThreeSet the maximum times were 5 seconds repeated in many
  occasions during the long run (5 seconds was the timeout to force a
  broadcast). This happens when an element fails to be inserted using
  the set consensus and ends up being broadcasted.
  In both cases, the behavior does not degrade with long runs,
  confirming (\Hdegrade).


  Considering that epoch changes are essentially set consensus, our
  experiments suggest that inserting elements in a \setchain is three
  orders of magnitude faster than performing consensus.
  However, a full validation of this hypothesis would require to fully
  implement \setchain on performant gossip protocols and compare with
  comparable consensus implementations.

\section{Client Protocols}
\label{sec:wrappers}

\begin{figure}[t!]
  \begin{algorithm}[H]
    \caption{\small Correct client protocol for DPO (for Alg.~\ref{DPO-alg-basb} and~\ref{DPO-alg-basb3}).}
    \label{DPO}
    \small
    \begin{algorithmic}[1]
      \Function{\<DPO>.\<Add>}{$e$}
      \State \<call> \<Add>\((e)\) in \(f+1\) different servers.
      \EndFunction
      \Function{\<DPO>.\<Get>}{~}
      \State \<call> \<Get>\(()\) in at least \(3f+1\) different servers.
      \State \<wait> \(2f+1\) responses $s.(\THESET, \HISTORY,\<epoch>)$ 
      \State $S \leftarrow \{ e | e\in{}s.\THESET \text{ in at least } f+1 \text{ servers } s \}$
      \State $H \leftarrow \emptyset$
      \State $i \leftarrow 1$
      \State $\Servers \leftarrow \{ s : s.\<epoch> \geq i \}$
      \While {$\exists E: |\{s \in \Servers: s.\<history>(i) = E \}| \geq f+1$}~\label{DPO-bwhile}
        \State $H \leftarrow H \cup \{ \langle i, E \rangle \} $
        \State $\Servers \leftarrow \Servers \setminus \{ s : s.\<history>(i)\neq E\}$ 
        \State $\Servers \leftarrow \Servers\setminus \{s: s.\<epoch>=i\}$         
        \State $i \leftarrow i+1$
      \EndWhile~\label{DPO-ewhile}
      \State \<return> $(S, H, i-1)$
      \EndFunction
		    \Function{\<DPO>.\<EpochInc>}{$h$}
			\State \<call> \<EpochInc>\((h)\) in \(f+1\) different servers.
			\EndFunction
    \end{algorithmic}
  \end{algorithm}
  \vspace{-2em}
\end{figure}

All properties and proofs in Section~\ref{sec:properties} consider the
case of clients contacting correct servers, and the implementations in
Section~\ref{sec:implementation} do not offer any guarantee to
clients about whether a server is correct.
Since client processes cannot know if they are contacting a Byzantine
or a correct server, a client protocol is required to guarantee
that they are exercising the interface of \setchain\ correctly.

In this section, we describe two such protocolos.
First, we present a client protocol inspired by the DSO clients
in~\cite{Cholvi2021BDSO}, which involves contacting several servers
per operation.
Later we present a more efficient ``optimistic'' solution, based on
try-and-check, that involves only a simple change in the servers.

\subsection{\setchain as a Distributed Partial Order Objects  (DPO)}

The general idea of the client protocol is to interact with enough
servers to guarantee that enough servers perform the desired
operation.
%
%
Alg.~\ref{DPO} shows the client protocol.
For $\<Add>$ and $\<EpochInc>$, to guarantee contacting at least one
correct server, we need to send $f+1$ requests to different servers.
%
%
Note that, depending on the protocol, each message may trigger
different broadcasts.

The wrapper algorithm for function $\<Get>$ can be split in two parts.
First, the client protocol contacts $3f+1$ nodes, and waits for at
least $2f+1$ responses ($f$ Byzantine servers may refuse to respond).
The response from server $s$ is $(s.\<theset>,s.\<history>,s.\<epoch>)$.
The protocol then computes $S$ as those elements known to be
in $\<theset>$ by at least $f+1$ servers (which includes at least one
correct server).
%
To compute $H$, the code proceeds incrementally epoch by epoch,
stopping at an epoch $i$ when less than $f+1$ servers agree on a set
$E$ of elements in $i$ (note that if $f+1$ servers agree that $E$ is
the set of elements in epoch $i$, this is indeed the case).
We also remove from $\Servers$ those servers that either do not know
the epoch $i$ (either slow processes or Byzantine servers) or that
incorrectly report something different than $E$ (a Byzantine server).
Once this process ends, the client returns $S$, $H$, and the latest
epoch computed.
It is guaranteed that $\<history>\subseteq\<theset>$.

\subsection{A Fast Optimistic Client}

We also present an alternative faster optimistic client.
In this approach correct servers also sign cryptographically, for
every epoch $i$, a hash of the set of elements in $i$, and insert this
hash in the \setchain\ as an element, see Alg.~\ref{server-sign}.
The number of additional elements added to the \setchain by the
servers is linear in the number of epochs, order of magnitudes smaller
than the number of element in the \setchain.

\begin{figure}[t!]
  \begin{algorithm}[H]
    \caption{\small Extend Alg. \ref{DPO-alg-basb3} adding the new epoch's hash cryptographically signed.}
    \label{server-sign}
    \small
    \begin{algorithmic}[1]
      \setcounter{ALG@line}{21} 
      \State \ldots     \Comment{previous lines as in Alg.~\ref{DPO-alg-basb3}}
      \Upon{\<SBC>[\(h\)].\<SetDeliver>($propset$) and  $h==\EPOCH+1$}
         \State $E \leftarrow \{e : e \in propset[j], valid(e) \wedge e\notin \HISTORY\}$
         \State $\HISTORY \leftarrow \HISTORY \cup \{\langle h, E \rangle\}$
	 \State $\THESET \leftarrow \THESET \cup E$
         \State $\EPOCH \leftarrow \EPOCH+1$
         \State $\<Add>(epoch\_signature(h,sign(hash(<<h,E>>))))$
      \EndUpon
    \end{algorithmic}
    \end{algorithm}
\end{figure}

Alg.~\ref{optimistic} shows the optimistic client described below.
\begin{figure}[t!]
  \begin{algorithm}[H]
    \caption{\small Optimistic client protocol for DPO (for Alg.~\ref{DPO-alg-basb} and~\ref{DPO-alg-basb3}).}
    \label{optimistic}
    \small
    \begin{algorithmic}[1]
      \Function{AddAndCheck}{$e$}
      \State \<call> \<Add>\((e)\) in 1 server.
      \State \<wait> $\Delta_g$
      \State \<call> \<Get>\(()\) in 1 server.
      \State \<wait> resp $(\<theset>, \<history>,\<epoch>)$
      \If {$\exists E,i: \<history>(i) = E  \wedge e \in E \wedge (h,hash(E))$ \\
        \hspace{2em} signed by $f+1$ different servers is in $\<theset>$}
          \State \<return> OK.
      \Else
          \State \<return> Fail.
      \EndIf
      \EndFunction
    \end{algorithmic}
  \end{algorithm}
  \vspace{-2em}
\end{figure}

To insert an element $e$ in the \setchain, the optimistic client
perform \emph{a single} \(\<Add>(e)\) request to one server, hoping it
will be a correct server.
After waiting for some time, the client invokes a $\<Get>$ from
\textbf{a single} server (which again can be correct or Byzantine) and
checks whether $e$ appears to be some epoch whose hash is
signed by (at least) $f+1$ different servers.
Note that simply receiving a $\<history>$ in which $e$ is in an epoch
is not enough to guarantee that $e$ has been added nor stamped,
because a Byzantine server can lie.
However, since cryptographic signatures cannot be forged,if $f+1$
servers sign the hash of an epoch (whose hash the client also 
computes) then at least one correct server certified the contents of
the epoch.
In comparison, the optimistic client can suffers a higher latency than
the DPO in Alg.~\ref{DPO}, because the optimistic client needs to wait
after adding an element to check that the element has been inserted
(or repeat).
On the other hand, this client requires only one message per $\<Add>$
and one message per $\<Get>$ in the usual case that correct servers
are reached, which reduces dramatically the number of messages
interchanged.
Note that each $\<Add>$ request triggers a BRB.Broadcast producing a
cascade of messages that is quadratic on the number of servers.
Hence, these two clients present a tradeoff between latency and
throughput.


\section{Modelling Byzantine Behavior}\label{sec:byzantine}

Formally proving properties\footnote{We refer to rigorous machine reproducible
or checkable proofs.} of Byzantine tolerant distributed algorithms is a very
challenging task.
In this section, we introduce a non-deterministic process (see
Alg.~\ref{DPO-alg3-byz}) that abstracts the combined behavior of all
Byzantine processes (Alg.~\ref{DPO-alg3}).
This technique reduces a scenario with $n-f$ correct servers and $f$
Byzantine servers into an \emph{equivalent} scenario with $n-f$ correct
servers and one non-deterministic server, allowing to leverage many
recent techniques for formally proving properties of (non-Byzantine)
distributed algorithms.
Our non-deterministic process abstracts Byzantine behaviour, even for
Byzantine processes that enjoy instantaneous communication among
themselves to perform coordinated attacks.

\setchain\ assumes that Byzantine processes cannot forge
valid elements~(Section~\ref{sec:model:computation}), they only become
aware of their existence when correct servers send these elements or
when they are inserted by clients.
We assume that as soon as a Byzantine process receives a valid element
$e$ all other Byzantine process know $e$ too.

%
We extend all algorithms with a new primitive,
$\<SBC>[h].\<Inform>(\<prop>)$, that is triggered when servers call
$\<SBC>[h].\<Propose>(\<prop>)$ and satisfies the following property:
\begin{itemize}
\item\textbf{SBC-Inform-Validity}: if a process executes
  $\<SBC>[h].\<Inform>(\<prop>)$ then some other process executed
  $\<SBC>[h].\<Propose>(\<prop>)$ in the past.
\end{itemize}
We use this property to equip Byzantine servers with the ability to discover
elements proposed by correct servers during set consensus, but that are not
assigned an epoch yet.
Note that Byzantine servers can return elements that were proposed but not
assigned to an epoch in a response to a \(\<Get>()\) invocation.
However, we would not know where those elements came from if we do not include
primitive \(\<SBC>[h].\<Inform>\).
%

We assume there are a functions
\textit{havoc}\_\textit{subset},
\textit{havoc}\_\textit{partition},
\textit{havoc}\_\textit{element},
\textit{havoc}\_\textit{number},
and
\textit{havoc}\_\textit{invalid}\_\textit{elems},
that generate, respectively, a random subset, element and partition
from a given set; a random number and random invalid elements.
We do not focus on the semantics of these functions, we use them to
model Byzantine processes producing arbitrary set of elements taken
from a set of known values.

We model all Byzantine processes combined behaviour in Alg.~\ref{DPO-alg3-byz}, where it
maintains a local set $\Knowledge$ to record all valid elements that the
``Byzantine'' server is aware of.
It provides the same interface as Alg.~\ref{DPO-alg3}, with an
additional function \<DoStuff>.
The function \<DoStuff> is invoked when the process starts and
non-deterministically emits messages at arbitrary times
(lines~\ref{alg3byz-dostuff-begin}-\ref{alg3byz-dostuff-end}), using BRB and SBC
primitives as these are the only messages that may cause any effect on correct
servers.
Note that these messages can have invalid elements or valid elements
known to the non-deterministic process.
Similarly, when clients invoke \(\<Get>()\), it returns
$(s_{v}\cup s_{i}, partition(s'_{v}\cup s'_{i}),h)$, where $s_{v}$ and
$s'_{v}$ are sets of valid elements from $\Knowledge$ while $s_{i}$ and
$s'_{i}$ are sets of invalid elements
(lines~\ref{alg3byz-get-begin}-\ref{alg3byz-get-end}).
Upon receiving a message, the non-deterministic process annotates all
newly discovered valid elements in its local set \Knowledge.

We show that any execution of \setchain maintained by $n$ servers
implementing Alg.~\ref{DPO-alg3} out of which at most $1 \leq f < n/3$
are Byzantine can be mapped to an execution of \setchain maintained by
$n-f$ \textbf{correct} servers implementing Alg.~\ref{DPO-alg3} and
\textbf{one} server implementing Alg.~\ref{DPO-alg3-byz}, and vice
versa.

\begin{figure}[t!]
	\begin{algorithm}[H]
          \caption{\small Capture Byzantine Behaviour
            }~\label{DPO-alg3-byz}
          \small
		\begin{algorithmic}[1]
		        \State  \textbf{Init:} $\<knowledge> \leftarrow \emptyset$
			\Function{\<Get>}{~} \label{alg3byz-get-begin}
            \State \<return> $(\HavocSubset(\<knowledge> \cup \GenerateInvalidElems()),$\\
          \hspace{5em}  $\HavocPartition(\HavocSubset(\<knowledge>\cup \GenerateInvalidElems())),\HavocNumber())$
			\EndFunction \label{alg3byz-get-end}
			\Function{\<Add>}{$e$} \label{alg3byz-add-begin}
                                \State \<assert> $valid(e)$
			        \State $\<knowledge> \leftarrow \<knowledge> \cup \{e \}$  
                        \EndFunction \label{alg3byz-add-end}
			\Upon{\<BRB>.\<Deliver>(\<madd>\((e)\))} \label{alg3byz-brbdeliver-add-begin}
                                \State \<assert> $valid(e)$ 
                                \State $\<knowledge> \leftarrow \<knowledge> \cup \{e \}$ 
			\EndUpon \label{alg3byz-brbdeliver-add-end}
			\Function{\<EpochInc>}{$h$} \label{alg3byz-epochinc-begin}
				\State \textbf{return}
                        \EndFunction \label{alg3byz-epochinc-end}
			\Upon{\<BRB>.\<Deliver>(\<mepochinc>($h$))} \label{alg3byz-brbdeliver-epinc-begin}
			        \State \textbf{nothing}
			\EndUpon \label{alg3byz-brbdeliver-epinc-end}
			\Upon{\<SBC>[\(h\)].\<SetDeliver>($propset$)} \label{alg3byz-setdeliver-begin}
				\State $\<knowledge> \leftarrow \<knowledge> \cup \{e : e \in propset \wedge valid(e) \}$ 
			\EndUpon \label{alg3byz-setdeliver-end}
			\Upon{\<SBC>[\(h\)].Inform($prop$)} \label{alg3byz-sbcinform-begin}
			        \State $\<knowledge> \leftarrow \<knowledge> \cup \{e : e \in prop \wedge valid(e) \}$
			\EndUpon \label{alg3byz-sbcinform-end}
			\Function{\<DoStuff>}{}
                        \While{true} \label{alg3byz-while-begin}
				        \State\label{alg3byz-brbbroadcast-add}\label{alg3byz-dostuff-begin}
                            $\<BRB>.\<Broadcast>(add(\HavocElement(\<knowledge>\cup \GenerateInvalidElems())))$
                        \State \(||\) 
	                    \State \label{alg3byz-brbbroadcast-epinc}
                            $\<BRB>.\<Broadcast>(epinc(\HavocNumber()))$
                        \State \(||\) 
                        \State\label{alg3byz-sbcpropose}
                              $\<SBC>[\HavocNumber()].\<Propose>(\HavocSubset(\<knowledge>\cup
                              \GenerateInvalidElems()))$
                        \State \(||\) 
	                    \State \textbf{nothing}
                        \EndWhile\label{alg3byz-while-end}\label{alg3byz-dostuff-end}
			\EndFunction
		\end{algorithmic}
	\end{algorithm}\vspace{-2em}
\end{figure}

We denote with $\Gamma$ the \emph{model} that represents the execution
of a \setchain\ that
is maintained by $n$ processes implementing Alg.~\ref{DPO-alg3} out which
$1\leq f < n/3$ are Byzantine servers.

We denote with $\correctset \subset \mathbb{D}$ the set of $n-f$
correct servers and with
$\{b_0, \ldots, b_{f-1}\} = \byzantineset \subset \mathbb{D} \setminus
\correctset$ the set of Byzantine servers.


\paragraph{Events}\label{para:events}
We represent with events the different interactions users and servers can have
with a \setchain\ plus the internal event of the \setchain\ reaching consensus.
\begin{compactitem}
\item $\<get>()$ represents the invocation of function $\<Get>$,
\item $\<add>(e)$ represents the invocation of function
  $\<Add>(e)$,
\item $\<BRB>.\<Broadcast>(x)$ represents the broadcast of \(\<madd>\) or
  \(\<mepochinc>\) messages through the network, 
  \(x = \<madd>(e)\) or \(x =\<mepochinc>(h)\) respectively,
\item $\<BRB>.\<Deliver>(x)$ represents the reception of
  message $x$,
\item
  $\<EpochInc>(h)$ represents the invocation of function
  $\<EpochInc>(h)$,
\item
  $\<SBC>[h].\<Propose>(\<prop>)$ represents the propose of \(\<prop>\) in
  the $h$ instance of SBC,
\item $\<SBC>[h].\<Inform>(\<prop>)$ represents the reception of
  $\<SBC>[h].\<Propose>(\<prop>)$, 
\item \(\<SBC>[h].\<SetDeliver>(\<propset>)\) represents that
  \(\<propset>\) is the result of the \(h\) instance of SBC,
\item  $\<SBC>[h].\<Consensus>(\<propset>)$ is the internal event that denotes that
  consensus for epoch $h$ is reached,
\item $\<nop>$ represents an event where nothing happens.
\end{compactitem}

The event $\<Consensus>(\<propset>)$ models the point in time at which set
consensus is final and all correct servers will agree on the set $\<propset>$.
%

All events \(\EV\), except $\<Consensus>$ and $\<nop>$, happen in a particular
server $s$ therefore they have an attribute \(\EV.\<server>\) that returns
$s$.

\paragraph{Network}
We model the network as a map from servers to tuples of the form (\<sent>,
\<pending>, \<received>).
For a server \(s\), \(\Delta(s).\<sent>\) is the sequence of messages
sent from server $s$ to other servers, $\Delta(s).\<pending>$ is a multiset that
contains all messages sent from other servers that $s$ has not processed yet
and $\Delta(s).\<received>$ is the sequence of messages received by server $s$.
The state of the network is modified when servers send or receive
messages.
When $s$ sends a message $m$ using \(\<BRB>.\<Broadcast>(m)\),
$m$ is added to the \<sent> sequence of \(s\) and to
the \<pending> multiset of all other servers.
Similarly, when a server proposes $\<prop>$ in the $h$ instance of SBC, the
message \(m = (h,prop)\) is added to its \<sent>
sequence and to the \<pending> multiset of
all servers:

\[
  \<send>(\Delta, m, s)(s') = \begin{dcases}
  (\Delta(s).\<sent> ++ m, \Delta(s).\<received>,
  \Delta(s).\<pending> \cup \{m\}) & \text{if } s = s' \\
  (\Delta(s').\<sent>, \Delta(s').\<received>, \Delta(s').\<pending> \cup \{m\}) & \text{otherwise}
                            \end{dcases}
\]

When servers receive a message $m$, $m$ is removed from their
\<pending> multiset and is inserted in their \<received> sequence:
\[
  \<receive>(\Delta, m,s)(s') = \begin{dcases}
  (\Delta(s).\<sent>, \Delta(s).\<received> ++ m,
  \Delta(s).\<pending> \setminus \{m\}) & \text{if } s = s' \\
  \Delta(s')                           & \text{otherwise}
                          \end{dcases}
\]
A \emph{configuration} $\Phi = (\Sigma,\Delta,H,K)$ for model $\Gamma$
consists of: a state $\Sigma$ mapping from correct servers to their
local state; a network $\Delta$ containing messages sent between
servers; a partial map $H$ from epoch numbers to sets of elements (the
history of consensus reached), and a set of valid elements $K$ that
have been disclosed to some Byzantine server (that is, we assume that
Byzantine servers may have instantaneous communication).

The initial configuration $\Phi_0 = (\Sigma_0,\Delta_0,H_0,K_0)$ is such that
$\Sigma_0(s)$ is the initial state of every correct process,
$\Delta_0$ is the empty network, $H_0$ is the empty map and
$K_0$ is the empty set.

We define whether an event $\EV$ is \emph{enabled} in
configuration $(\Sigma,\Delta,H,K)$ as follows:
\begin{itemize}
\item $\<get>()$ and $\<nop>$ are always enabled,
\item $\<add>(e)$ is enabled if $e$ is valid and either \(s =
  \<add>(e).\<server>\) is a Byzantine server or $e\notin
  \Sigma(s).S$, 
\item $\<BRB>.\<Broadcast>(x)$ is enabled if
  $\<BRB>.\<Broadcast>(x).\<server>$ is a Byzantine server and either
  \(x =\<mepochinc>(h)\) or \(x = \<madd>(e)\) with \(e \in K\) or $e$
  invalid,  
\item $\<BRB>.\<Deliver>(\<add>(e))$ is enabled if
  $\<add>(e) \in \Delta(s).\<pending>$ and $e$ is valid,
\item $\<EpochInc>(h)$ is enabled if either \(s =
  \<EpochInc>(h).\<server>\) is a Byzantine server or
  $h = \Sigma(s).\<epoch>+1$,
\item $\<BRB>.\<Deliver>(\<mepochinc>(h))$  is enabled if
$\<mepochinc>(h) \in \Delta(s).\<pending>$ and
either \(s = \EV.\<server>\) is a
Byzantine server, $h < \Sigma(s).\<epoch> + 1$ or $h =
\Sigma(s).\<epoch>+1$ plus $s$ has not propose anything for the $h$
instance of SBC: \(\nexists \<prop> : (h,prop) \in \Delta(s).sent\), 
\item \(\<SBC>[h].\<Propose>(\<propset>)\) is enabled if
  \(\<SBC>[h].\<Propose>(\<propset>).\<server>\) is a Byzantine server
  and all valid elements in $\<propset>$ are known by Byzantine
  servers: $\{ e \in \<propset> : valid(e) \} \subseteq K$,
\item $\<SBC>[h].\<Inform>(\<prop>)$ is enabled if
  $(h,prop) \in \Delta(s).\<pending>$,  
\item $\<SBC>[h].\<SetDeliver>(\<propset>)$ is enabled if $H(h) =
  \<propset>$ and either \(s = \EV.\<server>\) is
  a Byzantine server or $h=\Sigma(s).\<epoch>+1$,
\item $\<SBC>[h].\<Consensus>(\<propset>)$ is enabled if $H(h-1)$ is defined,
  $H(h)$ is undefined, at least
one process proposed a set before (
\(\exists s, h, \<prop>:  (h, \<prop>) \in \Delta(s).\<sent>\)), and
\(\<propset>\) is a subset of the union of all elements proposed for the
$h$ instance of SBC
($\<propset> \subseteq \bigcup_{r} \{p : (h,p) \in \Delta(r).\<sent>\}$), 
\end{itemize}

Then we defined the \emph{effect} of an event $\EV$ on a
configuration $(\Sigma,\Delta,H,K)$, where $\EV$ is enabled in
$(\Sigma,\Delta,H,K)$ as the following configuration
$(\Sigma',\Delta',H',K')$.
\begin{itemize}
  \item For the set $K'$, if event \(\EV.\<server>\) is a Byzantine server
    then $K'=K\cup\ValidElements(\EV)$, otherwise $K'=K$.
    Where   \[
    \ValidElements(\EV) = \left\{ \begin{array}{lr}
  \{e\}, & \text{for }  \EV = \<add>(e) \wedge valid(e) \\
  \{e\}, &\text{for }  \EV = \<BRB>.\<Deliver>(\<add>(e)) 
          \wedge valid(e)  \\
  S, & \text{for } \EV = \<SBC>[h].\<Inform>(\<prop>) 
       \wedge S = \{e \in prop: valid(e) \} \\
  S, & \text{for } \EV = \<SBC>[h].\<SetDeliver>(\<propset>) 
       \wedge S = \{e \in propset:
  valid(e) \} \\
  \emptyset, & \text{otherwise } 
        \end{array}\right.
 \]
  \item For the network $\Delta'$,
 \begin{itemize}
  \item If event \(\EV\) is $\<add>(e)$ and \(s=\EV.\<server>\) is a correct server then
    $\Delta' = \<send>(\Delta, \<madd>(e), s)$;    
  \item if event \( \EV \) is $\<BRB>.\<Deliver>(\<add>(e))$ then
    $\Delta' = \<receive>(\Delta,\<add>(e),\EV.\<server>)$;  
  \item If event \(\EV\) is $\<EpochInc>(h)$ and \(s = \EV.\<server>\) is a correct server then\\
$\Delta' = \<send>(\Delta, \<mepochinc>(h), s)$;
  \item If event \(\EV\) is $\<BRB>.\<Deliver>(\<mepochinc>(h))$ and
\(s = \EV.\<server>\) is a Byzantine server or  $h < \Sigma(s).\<epoch> + 1
$ then $\Delta' = \<receive>(\Delta,\<mepochinc>(h),s)$
   \item If event \(\EV\) is $\<BRB>.\<Deliver>(\<mepochinc>(h))$ and
\(s = \EV.\<server>\) is a correct server and $h = \Sigma(s).\<epoch> +
1$ then let \(ps\) be the set of elements in \(s\) without an epoch,
\(ps = \Sigma(s).S \setminus \bigcup_{k}^{h-1}\Sigma(s).H(k)\), in the
new network  $\Delta' = \<send>(\<receive>(\Delta,\<mepochinc>(h),s),
(h,ps),s)$ 
  \item If event \(\EV\) is \( \<BRB>.\<Broadcast>(m)\) then $\Delta'
    = \<send>(\Delta, m,\EV.\<server>)$; 
  \item If event \(\EV\) is \(\<SBC>[h].\<Propose>(prop)\)  then
    $\Delta' = \<send>(\Delta, (h,prop),\EV.\<server>)$;
  \item If event \(\EV\) is \(\<SBC>[h].\<Inform>(prop)\)  then
    $\Delta' = \<receive>(\Delta, (h,prop),\EV.\<server>)$;  
  \item otherwise  $\Delta'=\Delta$.
  \end{itemize}
\item For the state map $\Sigma'$:
\begin{itemize}
\item If event \(\EV.\<server>\) is Byzantine then $\Sigma'=\Sigma$.
\item If event \(s = \EV.\<server>\) is a correct server then
  \begin{itemize}
  \item $\EV = \<BRB>.\<Deliver>(\<add>(e))$, then
    $\Sigma' = \Sigma \oplus \{s \mapsto (\Sigma(s).S \cup \{x : x =
    e \wedge \<valid>(x) \}, \Sigma(s).H,$\\$ \Sigma(s).\<epoch>)\}$.
  \item $\EV = \<SBC>[h].\<SetDeliver>(\<propset>)$ then
    $\Sigma' = \Sigma \oplus \{s \mapsto (\Sigma(s).S \cup E,
    \Sigma(s).H \cup \{\langle h,E \rangle \}, h)\}$ with
    $E = \{e : e \in \<propset>, \<valid>(e) \wedge e \notin
    \Sigma(s).H\}$
    \item otherwise $\Sigma'=\Sigma$.
  \end{itemize}
\end{itemize}
\item For the epoch map $H'$:
\begin{itemize}
  \item If the event is $\<SBC>[h].\<Consensus>(\<propset>)$, then
  $H'(h)=\<propset>$  and $H'(x)=H(x)$ for $x\neq h$.
   \item otherwise $H'=H$.
\end{itemize}
\end{itemize}

If event $\EV$ is enabled at configuration $(\Sigma,\Delta,H,K)$ and
$(\Sigma',\Delta',H',K')$ is the resulting configuration after
applying the effect of $\EV$ to $(\Sigma,\Delta,H,K)$ then we
write $(\Sigma,\Delta,H,K)\xrightarrow{\EV}(\Sigma',\Delta',H',K')$.

\begin{definition}[Valid Trace of $\Gamma$]
  A valid trace of $\Gamma$ is an infinite sequence
  $(\Sigma_0\Delta_0,H_0,K_0)\xrightarrow{\EV_0}
  (\Sigma_1,\Delta_1,H_1,K_1)\xrightarrow{\EV_1}\ldots$ such that
  $(\Sigma_0,\Delta_0,H_0,K_0)$ is the initial configuration.
\end{definition}

\begin{definition}[Reachable Configuration in $\Gamma$]
  A configuration $\Phi$ is \emph{reachable} in $\Gamma$ if it is the initial
  configuration, $\Phi = \Phi_0$, or it is the result of applying an
  enabled event in a reachable configuration: $\exists \Psi, \EV:
  \Psi \text{ is reachable } \wedge \Psi \xrightarrow{\EV} \Phi$.
\end{definition}

%
We denote with $\Gamma'$ the \emph{model} that represents the execution
of a \setchain\ that is maintained by
$n-f$ correct servers implementing Alg.~\ref{DPO-alg3} and
one server $b$ implementing Alg.~\ref{DPO-alg3-byz}.
A configuration in model \(\Gamma'\) is a tuple
$(\Sigma,\Delta,H,\Tau)$ where $\Tau$ is the local state of
server $b$, and $\Sigma$, $\Delta$ and $H$ are as in model \(\Gamma\).
The local state of server \(b\) consists in storing the knowledge
harnessed by all Byzantine processes in model \(\Gamma\).

The initial configuration $\Phi'_0 = (\Sigma_0,\Delta_0,H_0, \Tau_0)$ is such that
$\Sigma_0(s)$ is the initial state of every correct process, $\Tau_0$
is the empty set, $\Delta_0$ is the empty network and $H_0$ is the
empty map.

We define a valid trace in \(\Gamma'\) following the same idea as in model
\(\Gamma\).
For correct processes, we have the same rules as in the previous model.
The rules for process $b$ are similar to the ones for Byzantines
processes in the previous model.
The difference is that here when process $b$ consumes an event it stores all valid
elements contained in the event in its local state $\Tau$, to be able to later
use valid elements to produce events non-deterministically.

The effect in $\Gamma'$ of an event $\EV$ is now defined as follows.
Given a configuration $(\Sigma,\Delta,H,\Tau)$, where $\EV$ is
enabled in $(\Sigma,\Delta,H, \Tau)$, the effect of $\EV$ is a
configuration $(\Sigma',\Delta',H', \Tau')$ such that:
\begin{itemize}
\item For $\Delta'$, $\Sigma'$ and $H'$ the effect is as in $\Gamma$.
\item For $\Tau'$:
\begin{itemize}
\item If event \(\EV.\<server> = b\) then
  $\Tau' = \Tau \cup \ValidElements(\EV)$.
  \item otherwise $\Tau = \Tau'$
\end{itemize}
\end{itemize}

Note that the only ``Byzantine'' process now only annotates all events that
it discovers.
The definition of valid trace and reachable configuration are analogous as for $\Gamma$.

\begin{definition}[Valid Trace of $\Gamma'$]
  A valid trace of $\Gamma'$ is an infinite sequence
  $(\Sigma_0,\Delta_0,H_0,\Tau_0)\xrightarrow{\EV_0}
  (\Sigma_1,\Delta_1,H_1,\Tau_1)\xrightarrow{\EV_1}\ldots$ such that
  $(\Sigma_0,\Delta_0,H_0,\Tau_0)$ is the initial configuration.
\end{definition}  

Where the main difference with trace \(\Gamma\)-valid is that now we have a
definition for what a ``Byzantine'' server could do Alg.~\ref{DPO-alg3-byz}, and
we update its state accordingly.

\begin{definition}[Reachable Configuration in $\Gamma'$]
  A configuration $\Phi'$ is reachable in $\Gamma'$ if it is the initial
  configuration $\Phi' = \Phi'_0$ or it is the result of applying an
  enabled event in a reachable configuration: $\exists \Psi', \EV:
  \Psi' \text{ is reachable } \wedge \Psi' \xrightarrow{\EV} \Phi'$.
\end{definition}

Given that when server $b$ receives a message its valid elements are
added to its local state $\Tau$, by induction it follows that in a reachable
configuration all valid elements that appear in the \(\<received>\)
multiset of server $b$ are also in $\Tau$:   

\begin{lemma}\label{l:receive:tau}  
  Let \((\Sigma,\Delta,H,\Tau)\) be a reachable configuration in
  $\Gamma'$ then: \[\bigcup_{\EV \in
  \Delta(b).\<received>}\ValidElements(\EV) \subseteq \Tau\]
\end{lemma}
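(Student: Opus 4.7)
The plan is to prove Lemma~\ref{l:receive:tau} by induction on the length of a valid trace of $\Gamma'$ leading to the reachable configuration. This is the natural approach because reachability itself is defined inductively, and the invariant in question relates two quantities ($\Delta(b).\<received>$ and $\Tau$) that are updated locally by each event.

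For the base case, I would observe that in the initial configuration $\Phi'_0 = (\Sigma_0, \Delta_0, H_0, \Tau_0)$ we have $\Delta_0(b).\<received> = \emptyset$ and $\Tau_0 = \emptyset$, so the union on the left-hand side is empty and the inclusion holds trivially. For the inductive step, I would assume the invariant holds in a reachable configuration $(\Sigma,\Delta,H,\Tau)$ and consider an arbitrary enabled event $\EV$ with $(\Sigma,\Delta,H,\Tau)\xrightarrow{\EV}(\Sigma',\Delta',H',\Tau')$. I would split into two cases according to whether $\EV.\<server> = b$ or not.

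If $\EV.\<server> \neq b$, then by inspecting the definition of $\<receive>$ we get $\Delta'(b).\<received> = \Delta(b).\<received>$, and by the definition of the effect on $\Tau$ we have $\Tau' \supseteq \Tau$, so the inductive hypothesis gives the desired inclusion directly. If $\EV.\<server> = b$, I would further split according to whether $\EV$ is a ``receive-type'' event, i.e., one of $\<BRB>.\<Deliver>(\<madd>(e))$, $\<BRB>.\<Deliver>(\<mepochinc>(h))$, $\<SBC>[h].\<Inform>(\<prop>)$, or $\<SBC>[h].\<SetDeliver>(\<propset>)$, since these are the only events whose effect extends $b$'s \<received> sequence. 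For every other event at $b$, $\Delta'(b).\<received> = \Delta(b).\<received>$ and the claim follows from the IH together with $\Tau \subseteq \Tau'$. For each receive-type event, $\Delta'(b).\<received> = \Delta(b).\<received> ++ \EV$, so by IH it suffices to show $\ValidElements(\EV) \subseteq \Tau'$; but the definition of the effect on $\Tau$ in $\Gamma'$ gives precisely $\Tau' = \Tau \cup \ValidElements(\EV)$ when $\EV.\<server> = b$, closing this case.

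The only mildly delicate point, and what I would flag as the main obstacle, is bookkeeping: one must verify by inspection of the semantics that (i) the events listed above are exactly those that extend $\Delta(b).\<received>$, (ii) for the two epoch-related receives the function $\ValidElements$ contributes nothing to $\Tau'$ beyond what is already accounted for, and (iii) no other effect clause silently modifies $\Delta(b).\<received>$ without passing through $\<receive>$. Once this case analysis is complete, the induction closes and the lemma follows.
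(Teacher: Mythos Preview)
Your proposal is correct and follows the same approach as the paper: the paper's own proof is a one-sentence sketch stating that the result follows ``by induction'' from the fact that whenever $b$ receives a message its valid elements are added to $\Tau$, which is exactly the induction you spell out in detail. One small point: in the paper's semantics $\<SBC>[h].\<SetDeliver>(\<propset>)$ does \emph{not} modify the network (it falls in the ``otherwise $\Delta'=\Delta$'' clause), so it should be dropped from your list of receive-type events---but since you already flagged item (i) as needing verification and the argument for that event would go through either way, this does not affect correctness.
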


We show that model \(\Gamma\) and \(\Gamma'\) are observational
equivalent, in other words, that external users cannot distinguish
both models under the assumption that Byzantine processes may share
information.
In order to prove it, we show that for each valid trace in one model
there is a valid trace in the other model such that corresponding
configurations are indistinguishable. 
Intuitively, two configurations $\Phi = (\Sigma,\Delta,H,K)$ and
$\Phi' = (\Sigma',\Delta',H', \Tau')$ are observational equivalent,
\(\Phi \sim \Phi'\), if and only if (1) every correct process has the
same local state (2) the network are observational equivalent (3)
$\Tau$ and $K$ contain the same elements (4) the histories of consensus
reached are the same. 
%

The main idea is that since Byzantine processes shared their knowledge
outside the network, we can replace them by one non-deterministic
process capable of taking every possible action that Byzantine processes
can take.
Hence, our definition of observational equivalent are based on mapping every
Byzantine action in model \(\Gamma\) to possible an action of process \(b\) in model
\(\Gamma'\) and mapping every single \(b\) action in model \(\Gamma'\) to a sequence of
actions for the processes in model \(\Gamma\) (one for each Byzantine process).

\begin{definition}[Network Observational
  Equivalence]\label{def:network:obs:eq}
  Let \(\Delta\) be a network of model \(\Gamma\) and \(\Omega\) be a
  network of model \(\Gamma'\).
  We say that \(\Delta\) and \(\Omega\) are observational equivalent,
  \(\Delta ~ \Omega \), if and only if
  \begin{enumerate}
  \item they have the exact same state for correct processes: \(\forall
    s \in \correctset: \Delta(s) = \Omega(s)\);
  \item a message is sent by process $b$  if and only if it
    is sent by a Byzantine process: \\
    $\bigcup_{s \in\byzantineset} multiset(\Delta(s).\<sent>) =
    multiset(\Omega(b).\<sent>)$;
  \item all pending messages for process $b$ are pending in all
    Byzantine processes:\\  $\forall s\in\byzantineset: \Omega(b).\<pending>
    \subseteq \Delta(s).\<pending>$;
  \item a message is received by process $b$ if and only if
    it is received by at least one Byzantine process:  $multiset(\Omega(b).\<received>) \subseteq \bigcup_{s
    \in\byzantineset} multiset(\Delta(s).\<received>)$;
  \item all messages sent to a Byzantine process (messages either in
    \<pending> or in \<received>) where also sent to $b$: $\forall s \in\byzantineset: multiset(\Delta(s).\<received>)
    \cup \Delta(s).\<pending> =
    multiset(\Omega(b).\<received>) 
    \cup  \Omega(b).\<pending>$;
  \item a Byzantine process received a message if it was received
    by $b$:\\ $\forall s \in\byzantineset: multiset(\Delta(s).\<received>) \subseteq
    multiset(\Omega(b).\<received>)$.
    \end{enumerate}
\end{definition}

The following lemmas show that the functions \(\<send>\) and
\(\<receive>\) preserve observational equivalence relation between
networks.

\begin{lemma}\label{l:net:obs:eq:correct}
  Let \(\Delta\) be a network of model \(\Gamma\), \(\Delta'\) be a
  network of model \(\Gamma'\) and $s$ a correct server. Then,
  \begin{itemize}
    \item \(\<send>(\Delta,m,s) ~ (\<send>(\Delta',m,s) \),
    \item \(\<receive>(\Delta,m,s) ~ (\<receive>(\Delta',m,s) \).
  \end{itemize}
\end{lemma}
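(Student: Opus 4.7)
The plan is to verify, clause by clause, that each of the six requirements in Definition~\ref{def:network:obs:eq} is preserved by both $\<send>(\cdot,m,s)$ and $\<receive>(\cdot,m,s)$, under the implicit premise $\Delta \sim \Omega$. The crucial structural observation is that when $s$ is correct, $\<send>(\cdot,m,s)$ appends $m$ to $\<sent>$ of $s$ and to $\<pending>$ of every other server, whereas $\<receive>(\cdot,m,s)$ only moves $m$ from $\<pending>$ to $\<received>$ inside the record of $s$. Consequently, neither operation rewrites the $\<sent>$ or $\<received>$ sequences of any Byzantine process in $\Gamma$, nor those of $b$ in $\Gamma'$.

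For the $\<send>$ case I would first dispatch clause~1: both networks apply identical updates to $s$ (which is correct) and identical additions to the $\<pending>$ multisets of every other correct server, so the equality on correct-indexed records is maintained. Clauses~2 and~4, which concern $\<sent>$ and $\<received>$ of Byzantine processes in $\Gamma$ or of $b$ in $\Gamma'$, carry over unchanged since $\<send>$ does not modify any of those fields. Clauses~3, 5, and~6 then follow from the fact that $m$ is inserted into $\<pending>$ of every Byzantine process in $\Gamma$ and simultaneously into $\<pending>$ of $b$ in $\Gamma'$, so the subset and equality relations on pending and received multisets are preserved.

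For the $\<receive>$ case the reasoning is even more direct: only $\Delta(s)$ and $\Omega(s)$ are altered, in exactly the same way (remove one copy of $m$ from pending, append $m$ to received). Clause~1 is immediate, and clauses~2--6, which refer only to Byzantine or $b$ components, hold vacuously since nothing in those components changes.

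I do not expect a real obstacle here, because the lemma isolates the easy case where $s$ is correct: each operation is ``diagonal'' with respect to the Byzantine/$b$ distinction. The genuine difficulty will appear in companion lemmas about Byzantine actions, where a single step of $b$ in $\Gamma'$ must be simulated by a choice of Byzantine process in $\Gamma$ and conversely; that is where the asymmetric clauses of Definition~\ref{def:network:obs:eq} will have to be exercised in both directions.
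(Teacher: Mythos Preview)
Your proposal is correct and follows essentially the same approach as the paper: both verify the six clauses of Definition~\ref{def:network:obs:eq}, using the observation that when $s$ is correct, $\<send>$ touches only $\<sent>$ of $s$ and the $\<pending>$ multiset of every server, while $\<receive>$ alters only the record of $s$ itself. The paper's proof of the $\<receive>$ case is organized exactly as you describe (only conditions (a) and (b) on $s$'s record need checking; the remaining clauses are inherited), and for $\<send>$ the paper simply walks through clauses~(1)--(6) one at a time rather than grouping them by which field is affected, but the content is identical.
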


\begin{proof}
  First, we will prove that if a correct server send the same message in two
  observational equivalent networks then the networks remain observational
  equivalents.
  
  Let $s_0$ be a correct server, then the observational equivalence
  between \(\Delta\) and \(\Delta'\) implies that \(\Delta(s_0) =
  \Delta'(s_0)\).
  As function \(\<send>\) only modifies a network by adding a message $m$
  to the \(\<pending>\) component of all servers and to \(\<sent>\) component
  of the sender (in both cases the same server $s$) it follows that 
  (1) \(\forall s_0\in\correctset: \<send>(\Delta,m,s)(s_0) = \<send>(\Delta',m,s)(s_0)\).
  
  Now, we consider the effects of function \(\<send>\) in Byzantine servers and server $b$:
  \begin{enumerate}
    \setcounter{enumi}{1}
    \item \(\bigcup_{s_b \in\byzantineset}
      multiset(\<send>(\Delta,m,s)(s_b).\<sent>) = \bigcup_{s_b
        \in\byzantineset} multiset(\Delta(s_b).\<sent>) =\) \\ \(
      multiset(\Delta'(b).\<sent>) = \<send>(\Delta',m,s)(b).\<sent>\) \\
      since field $\<sent>$ is only modified for server $s$.
    \item Let $s_b$ be a Byzantine server. Given that $m$ is added to the
      $\<pending>$ multiset of all servers it follows that:
      \(\<send>(\Delta',m,s)(b).\<pending> = \Delta'(b).\<pending>
      \cup \{ m \}\) 
    \(\subseteq \Delta(s_b).\<pending> \cup \{m\} =
      \<send>(\Delta,m,s)(s_b).\<pending>\) .
     \item $\<send>(\Delta',m,s)(b).\<received> =
       multiset(\Delta'(b).\<received>)$ \\
       \(\subseteq \bigcup_{s_b\in\byzantineset} multiset(\Delta(s_b).\<received>) =\bigcup_{s_b
         \in\byzantineset} \<send>(\Delta,m,s)(s_b).\<received>\)
       \\ since field $\<received>$ is not modified for any server by function \(\<send>\).
     \item Let $s_b$ be a Byzantine server. Since $m$ is added to the \<pending> multiset of all servers: \\
       \( multiset(\<send>(\Delta,m,s)(s_b).\<received>)
    \cup \<send>(\Delta,m,s)(s_b).\<pending>\)\\ \(= multiset(\Delta(s_b).\<received>)
    \cup \Delta(s_b).\<pending> \cup \{ m \}\)\\\( =
    multiset(\Delta'(b).\<received>) 
    \cup  \Delta'(b).\<pending> \cup \{ m \}\)\\\( =
    multiset(\<send>(\Delta',m,s)(b).\<received>) 
    \cup  \<send>(\Delta',m,s)(b).\<pending> 
    \). 
 
   \item Let $s_b$ be a Byzantine server. Field $\<received>$ is not
     modified for any server, then:\\
       \( multiset(\<send>(\Delta,m,s)(s_b).\<received>)
     = multiset(\Delta(s_b).\<received>)
    \)\\\( \subseteq
    multiset(\Delta'(b).\<received>) 
    =
    multiset(\<send>(\Delta',m,s)(b).\<received>) 
    \).
  \end{enumerate}  
  Consequently, \(\<send>(\Delta,m,s) ~ \<send>(\Delta',m,s)\).

  Next, we will prove that if a correct server receives the same message in two
  observational equivalent networks then the networks remain observational
  equivalents.
  Notice that it is enough to show that (a) \(\<receive>(\Delta,
  m, s)(s).\<received> = \<receive>(\Delta', m, s)(s).\<received>\)
  and \\ (b) \(\<receive>(\Delta, m, s)(s).\<pending> =
  \<receive>(\Delta', m, s)(s).\<pending>\), as these are the only
  fields that the function \(\<receive>\) modifies and therefore the
  remaining conditions follow directly from the relation $\Delta ~
  \Delta'$:  
  \begin{enumerate}[(a)]
    \item \(\<receive>(\Delta, m, s)(s).\<received> =
      \Delta(s).\<received> \cup \{m\} \)\\ \(=  \Delta'(s).\<received> \cup
      \{m\} = \<receive>(\Delta', m, s)(s).\<received>\)
    \item \(\<receive>(\Delta, m, s)(s).\<pending> =
      \Delta(s).\<pending> \setminus \{m\} \) \\ \( =  \Delta'(s).\<pending> \setminus
      \{m\} = \<receive>(\Delta', m, s)(s).\<received>\)
 \end{enumerate}
 In both cases, the first and last equality follow from the definition
 of the function \(\<receive>\) while the second equality from the
 observational equivalence \(\Delta ~ \Delta'\).
 
  Consequently, \(\<receive>(\Delta,m,s) ~ \<receive>(\Delta',m,s)\).
  
\end{proof}  

\begin{lemma}\label{l:net:obs:eq:byz}
  Let \(\Delta\) be a network of model \(\Gamma\), \(\Delta'\) be a
  network of model \(\Gamma'\) and $s$ a Byzantine server. Then,
  \begin{itemize}
    \item \(\<send>(\Delta,m,s) ~ (\<send>(\Delta',m,b) \),
    \item if \(m \in \Delta(s).\<pending>\) and \(m \in
      \Delta'(b).\<pending>\) then \(\<receive>(\Delta,m,s) ~
      \<receive>(\Delta',m,b) \).
    \item if \(m \in \Delta(s).\<pending>\) and \(\Delta'(b).\<received>.\<count>(m) > \Delta(s).\<received>.\<count>(m)\) then\\ \(\<receive>(\Delta,m,s) ~
      \Delta' \).  
  \end{itemize}
\end{lemma}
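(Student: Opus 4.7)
The plan is to verify each of the three conclusions by checking, condition by condition, the six requirements of Definition \ref{def:network:obs:eq}, mirroring the strategy used in the proof of Lemma \ref{l:net:obs:eq:correct}. The guiding intuition is that an action of a Byzantine server $s$ in $\Gamma$ should be simulated by the analogous action of $b$ in $\Gamma'$, except that a \<receive> event at $s$ in $\Gamma$ need not correspond to any step of $b$ when $b$ has already consumed enough copies of $m$; in that case the $\Gamma'$ side is matched by stuttering.

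For the first bullet (\<send>), the argument proceeds exactly as in Lemma \ref{l:net:obs:eq:correct}: the message $m$ is appended to $\Delta(s).\<sent>$ on the left and to $\Delta'(b).\<sent>$ on the right, so Condition 2 is preserved because each side gains exactly one copy of $m$ in the union it compares. Since \<send> also adds $m$ to the \<pending> multiset of every server in both models, the inclusion of Condition 3 and the equality of Condition 5 either gain the same $\{m\}$ on both sides or are untouched; Conditions 4 and 6 are unchanged since \<received> is not modified; and Condition 1 is vacuous because correct servers' local state is never touched by a send originating at $s$ or $b$. For the second bullet, removing $m$ from $\Delta(s).\<pending>$ and appending it to $\Delta(s).\<received>$ is mirrored symmetrically at $b$, so the union $\<received>\cup\<pending>$ in Condition 5 is invariant at both the $s$-slot and the $b$-slot, while Conditions 3, 4 and 6 preserve their multiset inclusions because the same $\{m\}$ is removed from or added to the compared multisets in lockstep. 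The two assumptions $m \in \Delta(s).\<pending>$ and $m \in \Delta'(b).\<pending>$ are exactly what license the multiset subtractions on each side.

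The third bullet is the delicate one, and I expect it to be the main obstacle. Here no action happens on the $\Gamma'$ side, so one must show that shifting $\{m\}$ from $\Delta(s).\<pending>$ to $\Delta(s).\<received>$ still balances against the unchanged $\Delta'(b)$. The key step is to combine the original Condition 5, which forces $\Delta(s).\<received>.\<count>(m) + \Delta(s).\<pending>.\<count>(m) = \Delta'(b).\<received>.\<count>(m) + \Delta'(b).\<pending>.\<count>(m)$, with the strict hypothesis $\Delta'(b).\<received>.\<count>(m) > \Delta(s).\<received>.\<count>(m)$ to deduce $\Delta(s).\<pending>.\<count>(m) \geq \Delta'(b).\<pending>.\<count>(m) + 1$. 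This surplus is precisely what allows a single pending $m$ to be consumed at $s$ while keeping Condition 3 intact, the sum in Condition 5 invariant at the $s$-slot, and, crucially, Condition 6 still satisfied, because the incremented count $\Delta(s).\<received>.\<count>(m) + 1$ remains bounded above by $\Delta'(b).\<received>.\<count>(m)$ thanks to the strict inequality. Conditions 1, 2 and 4 are unaffected. The only real delicacy lies in presenting this multiset bookkeeping cleanly, which I would handle by first extracting a small auxiliary claim that restates Condition 5 as a per-element conservation law and then applying it pointwise on $m$.
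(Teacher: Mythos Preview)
Your proposal is correct and follows essentially the same approach as the paper: both proceed by verifying the six conditions of Definition~\ref{def:network:obs:eq} for each bullet, with the crux of the third bullet being exactly the multiset arithmetic you describe---combining the per-element conservation law from Condition~5 with the strict inequality hypothesis to obtain the one-copy surplus in $\Delta(s).\<pending>$ that keeps Conditions~3 and~6 intact after an asymmetric receive. The only minor refinement the paper makes explicit, which you leave implicit, is the case split $s_b = s$ versus $s_b \neq s$ when checking Conditions~3, 5, and~6 (since these quantify over all Byzantine servers, and only the slot of $s$ changes under $\<receive>(\Delta,m,s)$); you should spell this out in the final write-up, but it introduces no new difficulty.
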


\begin{proof}
  The proof that if a Byzantine server send a message $m$ in
  networks that is observational equivalent to a network where server
  $b$ send the same messages $m$ then the resulting networks are also observational
  equivalents is analogous to the proof of the previous
  Lemma~\ref{l:net:obs:eq:correct} (where $m$ is sent by the same
  correct server in both networks) except for case (2) where $m$ is
  added to the \(\<sent>\) field of both $b$ and $s$ preserving the
  relation between them:

   \(\bigcup_{s_b \in\byzantineset}
      multiset(\<send>(\Delta,m,s)(s_b).\<sent>) = \bigcup_{s_b
        \in\byzantineset} multiset(\Delta(s_b).\<sent>) \cup \{m\} = \)\\\(
      multiset(\Delta'(b).\<sent>) \cup \{m\}=
      \<send>(\Delta',m,s)(b).\<sent>\)

  Consequently, \(\<send>(\Delta,m,s) ~ (\<send>(\Delta',m,b) \) if
  $s$ is Byzantine.

  Next, we will prove that if a Byzantine server $s_b$ receives a message $m$ in 
  a network that is observational equivalent to a network where server $b$
  receives the same message $m$ then the networks remain observational
  equivalents, as long as, both server have $m$ in their
  \(\<pending>\) multiset:
    \begin{enumerate}
    \item Let $s_0$ be a correct server: \(\<receive>(\Delta, m,
      s_0)(s) = \Delta(s_0) = \Delta'(s_0) = \<receive>(\Delta', m,
      b)(s_0)\) as the server that receives message $m$ is not a
      correct server and therefore for $s_0$ the network is not
      modified.
    \item $\bigcup_{s_b \in\byzantineset}
      multiset(\<receive>(\Delta,m,s)(s_b).\<sent>) = \bigcup_{s_b
        \in\byzantineset} multiset(\Delta(s_b).\<sent>) \)\\\( =
      multiset(\Delta'(b).\<sent>) = multiset(\<receive>(\Delta',m,b)(b).\<sent>)$
      since function \(\<receive>\) does not modify field $\<sent>$
      for any server.
    \item Let $s_b$ be a Byzantine server:
      \begin{itemize}
        \item if $s_b = s$: \(\<receive>(\Delta',m,b)(b).\<pending> = \Delta'(b).\<pending>
      \setminus \{ m \}
    \subseteq \Delta(s).\<pending> \setminus \{m\} =
      \<receive>(\Delta,m,s)(s).\<pending>\) since in both cases $m$ is removed from
      multiset \(\<pending>\) that contained it and therefore the
      inclusion relation between the multisets is preserved.
        \item if $s_b \neq s$: \(\<receive>(\Delta',m,b)(b).\<pending> = \Delta'(b).\<pending>
      \setminus \{ m \}
    \subset \Delta'(b).\<pending>  \subseteq \Delta(s_b).\<pending> =
      \<receive>(\Delta,m,s)(s_b).\<pending>\).
      \end{itemize}
     \item $\<receive>(\Delta',m,b)(b).\<received> =
       multiset(\Delta'(b).\<received>) \cup \{m\}\)\\\( \subseteq
       \bigcup_{s_b\in\byzantineset} multiset(\Delta(s_b).\<received>)
       \cup \{m\} =\bigcup_{s_b
    \in\byzantineset} \<receive>(\Delta,m,s)(s_b).\<received>$\\ since
       message $m$ is added to multiset \(\<received>\) of both $b$
       and $s$.
     \item Let $s_b$ be a Byzantine server.
       For $s_b\neq s$ function
       \(\<receive>(\Delta,m,s)\) does not modify fields
       \(\<receive>\) nor \(\<pending>\) and for $s_b = s$ it moves
       $m$ from its \(\<pending>\) multiset to its \(\<received>\)
       sequence (by hypothesis $m\in\Delta(s).\<pending>$).
       Therefore,  \\
       \( multiset(\<receive>(\Delta,m,s)(s_b).\<received>)
    \cup \<receive>(\Delta,m,s)(s_b).\<pending>\)\\\( = multiset(\Delta(s_b).\<received>)
    \cup \Delta(s_b).\<pending> \).
    Similarly, for $b$ function \(\<receive>(\Delta',m,b)\) moves
       $m$ from its \(\<pending>\) multiset to its \(\<received>\)
       sequence (by hypothesis \\
       $m\in\Delta'(b).\<pending>$).
       Therefore,\\
    \(
    multiset(\Delta'(b).\<received>) 
    \cup  \Delta'(b).\<pending>\)\\\( =
    multiset(\<receive>(\Delta',m,s)(b).\<received>) 
    \cup  \<receive>(\Delta',m,s)(b).\<pending>
    \).
    It follows, that\\ \( multiset(\<receive>(\Delta,m,s)(s_b).\<received>)
    \cup \<receive>(\Delta,m,s)(s_b).\<pending>\)\\\( =
    multiset(\<receive>(\Delta',m,s)(b).\<received>) 
    \cup  \<receive>(\Delta',m,s)(b).\<pending>
    \).
   \item Let $s_b$ be a Byzantine server. Message $m$ is added to the
     \(\<received>\) sequence of the server where the function
     \(\<receive>\) happens and for the remaining servers sequence
     \(\<received>\) is not changed: 
       \( multiset(\<receive>(\Delta,m,s)(s_b).\<received>)
     \subseteq multiset(\Delta(s_b).\<received>) \cup \{m\} \)\\\(
     \subseteq
    multiset(\Delta'(b).\<received>) \cup \{m\}
    =
    multiset(\<receive>(\Delta',m,s)(b).\<received>) 
    \).
  \end{enumerate}

    Next, we will prove that if a Byzantine server $s_b$ receives a message $m$ in 
  a network that is observational equivalent to a network where server $b$
  already received the same message $m$ then the networks remain observational
  equivalents:
    
    \begin{enumerate}
    \item Let $s_0$ be a correct server: \(\<receive>(\Delta, m,
      s_0)(s) = \Delta(s_0) = \Delta'(s_0) \) as the server that receives message $m$ is not a
      correct server and therefore for $s_0$ the network is not
      modified.
    \item  Function \(\<receive>\) does not modify field $\<sent>$
      for any server, then \[\bigcup_{s_b \in\byzantineset}
      multiset(\<receive>(\Delta,m,s)(s_b).\<sent>) = \bigcup_{s_b
        \in\byzantineset} multiset(\Delta(s_b).\<sent>) =
      multiset(\Delta'(b).\<sent>)\].
    \item Let $s_b$ be a Byzantine server:
      \begin{itemize}
        \item if $s_b = s$ then by combining the fact that $b$ and $s$
          were sent the same set of messages (\(multiset(\Delta(s_b).\<received>)
    \cup \Delta(s_b).\<pending>  =
    multiset(\Delta'(b).\<received>)  \cup  \Delta'(b).\<pending> \)) with hypothesis
          \(\Delta'(b).\<received>.\<count>(m) >
          \Delta(s).\<received>.\<count>(m)\)  
     it follows that \(\Delta'(b).\<pending>.\<count>(m) <
      \Delta(s).\<pending>.\<count>(m)\). Hence, removing $m$ from
      \(\Delta(s).\<pending>\) does not break the relation between $b$
      and $s$ \(\<pending>\) sets:
      \(\Delta'(b).\<pending>
    \subseteq \Delta(s).\<pending> \setminus \{m\} =
      \<receive>(\Delta,m,s)(s).\<pending>\).
        \item if $s_b \neq s$ then \(\Delta'(b).\<pending> \subseteq \Delta(s_b).\<pending> =
      \<receive>(\Delta,m,s)(s_b).\<pending>\)
      \end{itemize}
     \item Message $m$ is added to multiset \(\<received>\) of $s$ then:
       $multiset(\Delta'(b).\<received>) \subseteq \)\\\(
       \bigcup_{s_b\in\byzantineset} multiset(\Delta(s_b).\<received>)
       \subset  \bigcup_{s_b\in\byzantineset} multiset(\Delta(s_b).\<received>)
       \cup \{m\} =\)\\\(\bigcup_{s_b
    \in\byzantineset} \<receive>(\Delta,m,s)(s_b).\<received>$.
     \item Let $s_b$ be a Byzantine server. With a reasoning analogous
       to the proof of (5) in the previous case, it can be proven
       that: \\
       \( multiset(\<receive>(\Delta,m,s)(s_b).\<received>)
    \cup \<receive>(\Delta,m,s)(s_b).\<pending> \)\\\(= multiset(\Delta(s_b).\<received>)
    \cup \Delta(s_b).\<pending>  =
    multiset(\Delta'(b).\<received>) 
    \cup  \Delta'(b).\<pending> 
    \). 
 
  \item Let $s_b$ be a Byzantine server.
    \begin{compactitem}
      \item if $s_b = s$: By hypothesis
        $\Delta(s).\<received>.\<count>(m) <
        \Delta'(b).\<received>.\<count>(m)$ therefore adding message $m$
        to the \(\<received>\) sequence of $s$ maintains the
        contaiment relation between the $s$ and $b$ \(\<received>\)
        sequences:
       \( multiset(\<receive>(\Delta,m,s)(s_b).\<received>)
     \subseteq multiset(\Delta(s_b).\<received>) \cup \{m\} \subseteq
    multiset(\Delta'(b).\<received>) 
    \). 
      \item if $s_b \neq s$ then sequence \(\<received>\) is not modified for
        server $s_b$, therefore:\\ \( multiset(\<receive>(\Delta,m,s)(s_b).\<received>)
     =  multiset(\Delta(s_b).\<received>) \) \\ \( 
     \subseteq
    multiset(\Delta'(b).\<received>) 
    \).   
    \end{compactitem}
  \end{enumerate}

\end{proof}

If an event \(\EV\) happens in a correct server $s$ then both models run
the same algorithm.
In particular, the effect of event \(\EV\) in a configuration only
modifies $s$ local state and the network by applying a combination of
the functions \(\<send>\) and \(\<receive>\).
Therefore, if the effect of event \(\EV\) is applied to two
configurations that are observational equivalent, then the local state of
correct state in the resulting configurations remain equal,
Lemma~\ref{l:net:obs:eq:correct} implies that the resulting networks
are observational equivalents, and the history of consensus reached
and the Byzantine knowledge are not changed and hence they remain
equal in the resulting configurations.
In consequence, events that happen in correct servers preserve the observational equivalence
between configurations, as explained in the following Lemma:

\begin{lemma}  \label{l:conf:eq:correct}
  Let \(\Phi = (\Sigma, \Delta,H,K)\) and \(\Psi\) be configurations
  in model \(\Gamma\), 
  \(\Phi' = (\Sigma',\Delta',H',\Tau')\) and \(\Psi'\) be
  configurations in model \(\Gamma'\) and 
  \(\EV\) an event with \(s = \EV.\<server>\) a correct server.
  If \(\Phi ~ \Phi'\), \(\Phi \xrightarrow{\EV} \Psi\) and \(\Phi'
  \xrightarrow{\EV} \Psi'\) then \(\Psi ~ \Psi'\).
\end{lemma}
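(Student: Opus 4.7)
The plan is to proceed by case analysis on the event $\EV$, using the hypothesis that $s = \EV.\<server>$ is a correct server. For each case I would verify that each of the four components of observational equivalence---the correct-server state map $\Sigma$, the network $\Delta$, the consensus history $H$, and the Byzantine knowledge $K$ (respectively $\Tau$)---is preserved by the effect of $\EV$ on both configurations.

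Three of the four components can be discharged uniformly. First, both $K$ and $\Tau$ are updated only when the acting server is Byzantine (respectively the distinguished server $b$); since $s$ is correct, the resulting configurations inherit these sets unchanged, so the invariant $K = \Tau$ from $\Phi \sim \Phi'$ is trivially transported to $\Psi$ and $\Psi'$. Second, the history component $H$ is modified only by a $\<SBC>[h].\<Consensus>$ event, which carries no server attribute and is therefore excluded by the hypothesis that $\EV.\<server>$ is correct; hence $H$ is copied verbatim in both models. Third, the state map update is a deterministic function applied at server $s$ alone, depending solely on $\Sigma(s)$ and $\EV$; since $\Sigma = \Sigma'$, the updated maps agree on $s$ and trivially on every other correct server.

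The remaining work is to show that the updated networks remain observationally equivalent. A uniform inspection of the effect rules shows that, for each event at a correct server, the network change is one of: a no-op ($\<get>$, $\<nop>$, $\<SBC>[h].\<SetDeliver>$); a single $\<send>$ at $s$ ($\<add>(e)$, $\<EpochInc>(h)$, $\<BRB>.\<Broadcast>$, $\<SBC>[h].\<Propose>$); a single $\<receive>$ at $s$ ($\<BRB>.\<Deliver>(\<madd>(e))$, $\<SBC>[h].\<Inform>$); or a $\<receive>$ followed by a $\<send>$ of a locally-computed proposal ($\<BRB>.\<Deliver>(\<mepochinc>(h))$ with $h = \Sigma(s).\<epoch>+1$). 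In every case I would invoke Lemma~\ref{l:net:obs:eq:correct}, iterated twice where needed, to propagate $\Delta \sim \Delta'$ to the post-event networks.

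The one step requiring care, though it is still routine, is the epoch-increment delivery: the message being sent back into the network is $(h, \textit{ps})$, where $\textit{ps} = \Sigma(s).S \setminus \bigcup_{k=1}^{h-1}\Sigma(s).H(k)$. One must observe that $\textit{ps}$ depends only on $\Sigma(s)$ and $h$, both of which are identical across $\Phi$ and $\Phi'$ by hypothesis, so both models $\<send>$ the same message and the two applications of Lemma~\ref{l:net:obs:eq:correct} compose cleanly. Once that case is dispatched the remaining cases reduce to a direct citation of the network lemma, yielding $\Psi \sim \Psi'$.
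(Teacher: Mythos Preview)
Your proposal is correct and follows essentially the same approach as the paper's proof: both observe that $K$, $\Tau$, and $H$ are untouched by correct-server events, that $\Sigma$ is updated deterministically from $\Sigma(s)$ and $\EV$, and that the network change is a composition of $\<send>$/$\<receive>$ operations at $s$ to which Lemma~\ref{l:net:obs:eq:correct} applies. If anything, your treatment is slightly more careful than the paper's in explicitly noting that the proposal message $(h,\textit{ps})$ sent upon $\<BRB>.\<Deliver>(\<mepochinc>(h))$ depends on $\Sigma(s)$ and not only on $\EV$, a point the paper elides when it asserts the network update ``only depends on~$\EV$.''
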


\begin{proof}
  Since \(\EV\) happens in a correct server its effect only modifies
  the local state of correct servers and the network components of a
  configuration.
  Therefore, \(\Psi = (\Sigma_{\EV},\Delta_{\EV},H,K)\) with
  \[
  \Sigma_{\EV} = \begin{dcases}
    \Sigma \oplus \{s \mapsto (\Sigma(s).S \cup \{x : x =
    e \wedge \<valid>(x) \}, \Sigma(s).H, \Sigma(s).h)\} & \text{if }
    \EV = \<BRB>.\<Deliver>(\<add>(e))\\
   \Sigma \oplus \{s \mapsto (\Sigma(s).S \cup E,
    \Sigma(s).H \cup \{\langle h,E \rangle \}, h)\} & \text{if } \EV =
                   \\ \text{ with }
    E = \{e : e \in \<propset>, \<valid>(e) \wedge e \notin
                   \Sigma(s).H\} &\<SBC>[h].\<SetDeliver>(\<propset>) \\
  \Sigma & \text{otherwise}
                            \end{dcases}
\]

  and \(\Delta_{\EV}\) is generated from \(\Delta\) by applying a
  combination of functions \(\<send>\) and \(\<receive>\) that only depends on
  $\EV$.
  
  Similarly, \(\Psi' = (\Sigma'_{\EV},\Delta'_{\EV},H',\Tau')\) with
  \[
  \Sigma'_{\EV} = \begin{dcases}
    \Sigma' \oplus \{s \mapsto (\Sigma'(s).S \cup \{x : x =
    e \wedge \<valid>(x) \}, & \text{if }
    \EV = \<BRB>.\<Deliver>(\<add>(e)) \\ \Sigma'(s).H, \Sigma'(s).h)\} \\
   \Sigma' \oplus \{s \mapsto (\Sigma'(s).S \cup E,
    \Sigma'(s).H \cup \{\langle h,E \rangle \}, h)\} & \text{if } \EV
                                                       = \<SBC>[h].\<SetDeliver>(\<propset>)
    \\ \text{with } 
    E = \{e : e \in \<propset>, \<valid>(e) \wedge e \notin
    \Sigma'(s).H\}  \\
  \Sigma' & \text{otherwise}
                            \end{dcases}
\]
  and \(\Delta'_{\EV}\) is generated from \(\Delta'\) by applying a
  combination of functions \(\<send>\) and \(\<receive>\) that only depends on
  $\EV$.
  
  To prove that \(\Psi ~ \Psi'\) we need to prove: (1) \(\Sigma_{\EV}
  = \Sigma'_{\EV}\), (2) \(\Delta_{\EV} ~ \Delta'_{\EV}\), (3) \(H =
  H'\) and (4) \(K = \Tau'\):
  The first equality (1) follows from the definition of \(\Sigma_{\EV}\)
  and \(\Sigma'_{\EV}\), and the observational equivalence between
  \(\Phi\) and \(\Phi'\), which implies that \(\Sigma = \Sigma'\).
  For the observational equality between the networks (2) notice that \(\Delta_{\EV}\) is
  generated by applying the same combination of functions \(\<send>\)
  and \(\<receive>\) as the one applied to generate \(\Delta'_{\EV}\) from
  \(\Delta'\).
  Then, since \(\Delta ~ \Delta'\), Lemma ~\ref{l:net:obs:eq:correct}
  can be applied and as consequence \(\Delta_{\EV} ~ \Delta'_{\EV}\). 
  Finally,the observational equivalence between \(\Phi\) and
  \(\Phi'\) directly implies equivalences (3) and (4). 
\end{proof}

Similarly, 
consider an event \(\EV\) that happens in a Byzantine server $s_b$ on
a configuration in model \(\Gamma\) and the same event $\EV'$ except
that it happens in server $b$ on a configuration in model \(\Gamma'\).
The effect of both events in the corresponding configuration is to
just modify the Byzantine knowledge to add valid elements discovered
in the event and apply a combination of the functions \(\<send>\) and
\(\<receive>\) to the network.
Therefore, if both configurations are observational equivalent, then
the local state of correct state and the history of consensus reached
are not changed and hence they remain
equal in the resulting configurations, Lemma~\ref{l:net:obs:eq:byz}
implies that the resulting networks are observational equivalents,
while the same set of valid elements is added to the Byzantine
knowledge sets, preserving the equality between them.
In consequence, event \(\EV\) preserve the observational equivalence
between configurations, as explained in the following Lemma:

\begin{lemma}  \label{l:conf:eq:byz:step}
  Let \(\Phi\) and \(\Psi\) be configurations in model \(\Gamma\),
  \(\Phi'\) and \(\Psi'\) be configurations in model \(\Gamma'\),
  \(\EV\) an event with \(\EV.\<server>\) a Byzantine server and
  \(\EV'\) the same event as \(\EV\) except that \(\EV.\<server> = b\).
  If \(\Phi ~ \Phi'\), \(\Phi \xrightarrow{\EV} \Psi\) and \(\Phi'
  \xrightarrow{\EV'} \Psi'\) then \(\Psi ~ \Psi'\).
\end{lemma}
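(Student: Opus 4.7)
The plan is to mirror the structure of the proof of Lemma~\ref{l:conf:eq:correct}, but now leveraging Lemma~\ref{l:net:obs:eq:byz} instead of Lemma~\ref{l:net:obs:eq:correct}, and adding an argument about how the Byzantine knowledge set $K$ in $\Gamma$ and the local state $\Tau$ in $\Gamma'$ evolve in lockstep. Writing $\Phi=(\Sigma,\Delta,H,K)$ and $\Phi'=(\Sigma',\Delta',H',\Tau')$, the hypothesis $\Phi\sim\Phi'$ gives us $\Sigma=\Sigma'$, $\Delta\sim\Delta'$, $H=H'$ and $K=\Tau'$. I will need to show that these four equalities/equivalences are preserved after taking the step $\EV$ (in $\Gamma$) and $\EV'$ (in $\Gamma'$).

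First, I would dispose of the easy components. Since $\EV.\<server>$ is Byzantine and $\EV'.\<server>=b$, the effect on the correct-server state map $\Sigma$ is trivially the identity in both models (by the definition of effect for non-correct servers), so $\Sigma=\Sigma'$ is preserved. The history $H$ is only changed by events of the form $\<SBC>[h].\<Consensus>(\<propset>)$; this event carries no server attribute and the enabledness condition depends only on $H$ and on what has been sent in the network, which is preserved by $\Delta\sim\Delta'$ (in particular by clauses (2) and (5) of Definition~\ref{def:network:obs:eq}, which guarantee that the union of messages sent and pending for Byzantine servers matches what $b$ has). Hence, whether or not the event is a consensus event, $H'$ after the step equals $H$ after the step. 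For the knowledge component, the definitions of $K'$ in $\Gamma$ and of $\Tau'$ in $\Gamma'$ are both given by $\ValidElements(\EV)$, and this function depends only on the event (not on the server attribute), so $K\cup\ValidElements(\EV)=\Tau'\cup\ValidElements(\EV')$ from $K=\Tau'$.

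The main obstacle will be showing preservation of network observational equivalence, and here I would case split on the event. For the communication-free events (\<add>, \<EpochInc>, \<Get>, \<nop>, \<Consensus>, \<SetDeliver>), the network is unchanged in both models and there is nothing to prove. For an emission event ($\<BRB>.\<Broadcast>$ or $\<SBC>.\<Propose>$ by a Byzantine server), the effect in $\Gamma$ is a call to $\<send>(\Delta,m,s)$ while in $\Gamma'$ it is $\<send>(\Delta',m,b)$, so the first bullet of Lemma~\ref{l:net:obs:eq:byz} applies directly. The delicate case is a delivery event ($\<BRB>.\<Deliver>$ of an \<add> or \<mepochinc> message, or $\<SBC>.\<Inform>$) consumed by a Byzantine server: here $\Gamma$ executes $\<receive>(\Delta,m,s)$, while $\Gamma'$ executes $\<receive>(\Delta',m,b)$, and I need to pick the right clause of Lemma~\ref{l:net:obs:eq:byz}. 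By the enabledness in $\Gamma$, $m\in\Delta(s).\<pending>$. By $\Delta\sim\Delta'$ (clause 5), $m$ is either in $\Delta'(b).\<pending>$ or in $\Delta'(b).\<received>$ with multiplicity strictly greater than its multiplicity in $\Delta(s).\<received>$; these are exactly the two preconditions of the second and third bullets of Lemma~\ref{l:net:obs:eq:byz}. I would do a sub-case split accordingly, invoking the appropriate bullet (and noting that in the third case the $\Gamma'$ step leaves $\Delta'$ unchanged because $b$ has already received $m$, which is consistent with the enabledness of $\EV'$ in $\Gamma'$).

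Assembling these four observations gives $\Sigma_{\EV}=\Sigma'_{\EV}$, $\Delta_{\EV}\sim\Delta'_{\EV}$, $H_{\EV}=H'_{\EV}$ and $K_{\EV}=\Tau'_{\EV}$, i.e.\ $\Psi\sim\Psi'$, completing the proof. The whole argument is mostly bookkeeping; the only genuinely subtle point is justifying that the $\Gamma'$ step really does match the sub-case of Lemma~\ref{l:net:obs:eq:byz} that is applicable, which in turn relies on the sub-case being determined by the pre-step network equivalence and thus on an invariant that would be established in tandem with a global reachability argument (akin to Lemma~\ref{l:receive:tau}).
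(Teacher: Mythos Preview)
Your approach is essentially the paper's: the paper's proof is a one-liner saying it is ``analogous to the one of Lemma~\ref{l:conf:eq:correct}'' with the only difference that $\Sigma$ is untouched while $K$ and $\Tau$ are both updated by adding $\ValidElements(\EV)$. Your expansion of that analogy is largely sound, but two simplifications are in order.

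First, the delivery sub-case split is unnecessary. The lemma's hypothesis includes $\Phi'\xrightarrow{\EV'}\Psi'$, so $\EV'$ is \emph{enabled} in $\Phi'$; for a delivery event at $b$ this forces $m\in\Delta'(b).\<pending>$. Hence only the second bullet of Lemma~\ref{l:net:obs:eq:byz} is ever needed here. Your ``third case'' (where $b$ has already received $m$ and $m\notin\Delta'(b).\<pending>$) cannot occur under the lemma's hypotheses, and your remark that it is ``consistent with the enabledness of $\EV'$'' is in fact backwards: in that situation $\EV'$ would \emph{not} be enabled. The third bullet of Lemma~\ref{l:net:obs:eq:byz} is what powers the separate Lemma~\ref{l:conf:eq:byz:nostep}, not this one.

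Second, your discussion of $\<SBC>[h].\<Consensus>$ is moot: that event carries no $\<server>$ attribute, so it falls outside the scope of this lemma (which assumes $\EV.\<server>$ is Byzantine). With those two trimmings, your argument matches the paper's intended ``by analogy'' proof.
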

\begin{proof}
  The proof is analogous to the one of Lemma~\ref{l:conf:eq:correct}.
  The only difference is that the effect of \(\EV\) and \(\EV'\) do
  not modify correct servers local state but rather they modify
  Byzantine knowledge \(K =
  \Tau\) (respectively) by adding \(\ValidElements(\EV)\) to them and
  therefore preserving the equality between them.
\end{proof}

In the following Lemma we show that if there are two reachable
configurations, \(\Phi\) and \(\Phi'\), that are observational
equivalents and an event that happens in Byzantine server $s$ and
represents the reception of a message 
such that server $b$ already received it in \(\Phi'\) but $s$ did not
received it in \(\Phi\) then the configuration that is obtained by
applying the effects of \(\EV\) to \(\Phi\) is observational
equivalent to \(\Phi'\). 

\begin{lemma} \label{l:conf:eq:byz:nostep}
  Let \(\Phi = (\Sigma, \Delta,H,K)\) and \(\Psi\) be
  configurations in model \(\Gamma\), and
  \(\Phi' = (\Sigma',\Delta',H',\Tau')\)  be a reachable configuration
  in model \(\Gamma'\), such that  \(\Phi ~ \Phi'\).
  Consider an event $\EV$ with \(s = \EV.\<server>\) a Byzantine
  server and  \(\Phi \xrightarrow{\EV} \Psi\).
  If \(\EV = \<BRB>.\<Deliver>(m)\) or  \(\EV =
  \<SBC>[h].\<Inform>(\<prop>)\)  but
  \\\(\Delta'(b).\<received>.\<count>(m)>\Delta(s).\<received>.\<count>(m)\)
  then \(\Psi ~ \Phi'\). 

\end{lemma}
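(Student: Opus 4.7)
The plan is to decompose the four components of observational equivalence between $\Psi$ and $\Phi'$ and verify them one by one, exploiting that the event $\EV$ fires at a Byzantine server and thus has a very restricted footprint on the configuration.

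First I would observe that, by the effect rules, since $\EV.\<server> = s$ is Byzantine and $\EV$ is either a $\<BRB>.\<Deliver>$ or a $\<SBC>[h].\<Inform>$, the transition $\Phi \xrightarrow{\EV} \Psi$ leaves $\Sigma$ and $H$ unchanged and only modifies $\Delta$ (via a single call to $\<receive>$) and $K$ (by adding $\ValidElements(\EV)$). Since $\Phi \sim \Phi'$ already gives $\Sigma = \Sigma'$ and $H = H'$, the state-map and epoch-history components of $\Psi$ and $\Phi'$ coincide for free.

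Next I would handle the network component by directly invoking the third bullet of Lemma~\ref{l:net:obs:eq:byz}: the hypothesis $\Delta'(b).\<received>.\<count>(m) > \Delta(s).\<received>.\<count>(m)$, together with the fact that $m \in \Delta(s).\<pending>$ (which is implied by $\EV$ being enabled at $s$ in $\Phi$), are exactly the premises needed to conclude that $\<receive>(\Delta,m,s) \sim \Delta'$. Hence $\Delta_{\Psi} \sim \Delta'$.

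The remaining, and I expect the only delicate, step is to show that the updated Byzantine knowledge still equals $\Tau'$, i.e.\ $K \cup \ValidElements(\EV) = \Tau'$. From $\Phi \sim \Phi'$ we already have $K = \Tau'$, so it suffices to prove $\ValidElements(\EV) \subseteq \Tau'$. The idea is to use the hypothesis that $b$ has \emph{already} received the same message $m$ in $\Phi'$ (at least once more than $s$ has in $\Phi$): this means $m \in \Delta'(b).\<received>$, which by reachability of $\Phi'$ and Lemma~\ref{l:receive:tau} guarantees that the valid elements of $m$ are already stored in $\Tau'$. A short case split on whether $\EV$ is a $\<BRB>.\<Deliver>(\<add>(e))$ or an $\<SBC>[h].\<Inform>(\<prop>)$ then shows that $\ValidElements(\EV)$ is precisely the set of valid elements carried by $m$, closing the inclusion. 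The main obstacle is precisely this last step, since it is the only place where the reachability assumption on $\Phi'$ is actually used, and it hinges on matching the \emph{message-level} count hypothesis with the \emph{element-level} knowledge invariant provided by Lemma~\ref{l:receive:tau}; once this bridge is established the other three components are essentially bookkeeping.
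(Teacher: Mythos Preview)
Your proposal is correct and follows essentially the same approach as the paper: you decompose observational equivalence into its four components, dispatch $\Sigma$ and $H$ immediately from $\Phi \sim \Phi'$, invoke the third clause of Lemma~\ref{l:net:obs:eq:byz} for the network, and use the count hypothesis together with Lemma~\ref{l:receive:tau} (via reachability of $\Phi'$) to conclude $\ValidElements(\EV) \subseteq \Tau'$. Your explicit mention that $m \in \Delta(s).\<pending>$ comes from the enabling condition, and the small case split identifying $\ValidElements(\EV)$ with the valid elements carried by $m$, are details the paper leaves implicit but are entirely in line with its argument.
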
 
\begin{proof}
  Since \(\EV\) happens in a Byzantine server its effect only modifies
  the network and the Byzantine knowledge components of a
  configuration.
  In particular, event \(\EV\) represents the reception of message
  $m$, therefore \(\Psi = (\Sigma, \<receive>(\Delta,m,s), H, K \cup
  \ValidElements(\EV))\). 
  Hence, to prove that \(\Psi ~ \Phi'\) we need to prove: (1) \(\Sigma
  = \Sigma'\), (2) \(\<receive>(\Delta,m,s) ~ \Delta'\), (3) \(H =
  H'\) and (4) \(K \cup \ValidElements(\EV) = \Tau\).
  The observational equivalence between \(\Phi\) and
  \(\Phi'\) directly implies equivalences (1) and (3).
  Lemma~\ref{l:net:obs:eq:byz} implies that \(\<receive>(\Delta,m,s) ~
  \Delta'\) as \(\Delta ~ \Delta'\) follows from the observational
  equivalence \(\Phi ~ \Phi'\) and by hypothesis
  \(\Delta'(b).\<received>.\<count>(m)>\Delta(s).\<received>.\<count>(m)\).  
  Finally, to prove (4) notice that 
  \(\Delta'(b).\<received>.\<count>(m)>\Delta(s).\<received>.\<count>(m)
  \geq 0\) implies that  \(m\in\Delta'(b).\<received>\). 
  Then, by Lemma~\ref{l:receive:tau}, \(\ValidElements(\EV) \subseteq
  \Tau\).
  Therefore, \(K \cup \ValidElements(\EV) = \Tau \cup
  \ValidElements(\EV) = \Tau\).     
  
\end{proof}

From the previous three Lemmas it follows that for each valid trace in
one model there is a valid trace in the other model such that
corresponding configurations are indistinguishable, as any event that
happens in one trace can be mapped to one (or $f$) events in the
other.

\begin{theorem}\label{thm:simpl}
  For every valid trace in model $\Gamma$ there exists a valid
  trace in model $\Gamma'$ such that the corresponding state are
  indistinguishable. 
\end{theorem}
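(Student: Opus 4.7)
The plan is to prove the theorem by induction on the prefix length of the $\Gamma$-valid trace, constructing a corresponding $\Gamma'$-valid trace step by step and maintaining observational equivalence as the invariant. The base case is trivial because $\Phi_0 \sim \Phi'_0$: correct processes start in the same state, the networks are empty (so every condition of Definition~\ref{def:network:obs:eq} holds vacuously), the history maps are empty, and both $K_0$ and $\Tau_0$ are empty. For the inductive step, given that the first $i$ steps of the $\Gamma$-trace have been matched by a constructed prefix of a $\Gamma'$-trace with $\Phi_i \sim \Phi'_i$, I need to dispatch on the event $\EV_i$.

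The case analysis will have three branches. If $\EV_i$ occurs at a correct server $s$, I take the very same event as the next step of the $\Gamma'$-trace: enabledness in $\Gamma'$ follows from $\Sigma_i = \Sigma'_i$ and from $\Delta_i(s) = \Delta'_i(s)$ (clause 1 of Definition~\ref{def:network:obs:eq}), and Lemma~\ref{l:conf:eq:correct} gives $\Phi_{i+1} \sim \Phi'_{i+1}$. If $\EV_i$ occurs at a Byzantine server $s$ and is a ``send-like'' event ($\<BRB>.\<Broadcast>$, $\<SBC>[h].\<Propose>$, $\<EpochInc>$, or $\<add>$), I use $\EV'_i$ obtained from $\EV_i$ by replacing $s$ with $b$; enabledness at $b$ in $\Gamma'$ follows because the relevant knowledge invariant $K_i = \Tau_i$ guarantees that whatever valid elements Byzantine servers could use, $b$ knows too, and Lemma~\ref{l:conf:eq:byz:step} closes the step. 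The third case is a Byzantine receive ($\<BRB>.\<Deliver>$ or $\<SBC>[h].\<Inform>$) of some message $m$ at $s$; here I split by whether $b$ has already consumed $m$ (i.e.\ $\Delta'_i(b).\<received>.\<count>(m) > \Delta_i(s).\<received>.\<count>(m)$) or not. In the former subcase, I perform no step in $\Gamma'$ and invoke Lemma~\ref{l:conf:eq:byz:nostep} to recover $\Phi_{i+1} \sim \Phi'_i$. In the latter subcase, $m \in \Delta'_i(b).\<pending>$ follows from clauses (5)--(6) of Definition~\ref{def:network:obs:eq}, so the corresponding receive at $b$ is enabled, and Lemma~\ref{l:conf:eq:byz:step} applies.

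The step I expect to be the main obstacle is the third case: I have to argue carefully that the matched receive at $b$ really is enabled whenever needed, and that the ``stutter'' (no step in $\Gamma'$) option is available otherwise, without breaking the multiset-based invariants of Definition~\ref{def:network:obs:eq}. The crux is that clauses (4)--(6) jointly pin down the exact bookkeeping relationship between $b$'s received/pending multisets and the union of the Byzantine servers' multisets, so each Byzantine receive of $m$ either consumes a pending copy at $b$ (matched step) or simply rebalances counts that $b$ has already advanced beyond (stutter). Once this is in place, the inductive hypothesis is preserved.

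Finally, since valid traces are infinite, I would either iterate this construction producing an infinite $\Gamma'$-trace by concatenation of the per-step extensions, or argue with a standard prefix-limit / König-style argument that the inductively defined sequence of $\Gamma'$-configurations is itself a valid $\Gamma'$-trace. The observational equivalence $\Phi_i \sim \Phi'_{\pi(i)}$ then holds for all indices $i$, where $\pi$ is the (weakly increasing) map recording how many $\Gamma'$-steps correspond to the first $i$ steps in $\Gamma$. This yields the statement of Theorem~\ref{thm:simpl}.
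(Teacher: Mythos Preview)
Your approach is essentially the paper's: induction on the trace, with the same three lemmas (\ref{l:conf:eq:correct}, \ref{l:conf:eq:byz:step}, \ref{l:conf:eq:byz:nostep}) doing the work in the same places. Two small differences are worth noting. First, where you stutter in $\Gamma'$ and carry a reindexing map $\pi$, the paper instead emits a $\<nop>$ event in $\Gamma'$ so that indices stay aligned one-to-one; both are fine. Second, your Byzantine case split is by event shape (send-like versus receive), whereas the paper simply asks whether the relabelled event is enabled at $b$ and argues that ``not enabled'' forces a receive; the two decompositions coincide, and your explicit count argument via clauses~(5)--(6) of Definition~\ref{def:network:obs:eq} is in fact a bit crisper than the paper's.

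There is, however, a genuine gap in your case analysis: your three branches cover only events that carry a $\<server>$ attribute, so you never handle $\<SBC>[h].\<Consensus>(\<propset>)$ (nor $\<nop>$). The $\<Consensus>$ event is the only one that updates $H$, and you need to argue it is enabled in $\Gamma'$: this uses $H=H'$ together with clauses~(1)--(2) of Definition~\ref{def:network:obs:eq}, which imply that the set of pairs $(h,\<prop>)$ appearing in some $\Delta(s).\<sent>$ coincides with those appearing in some $\Delta'(s').\<sent>$, so the enabling condition ($H(h{-}1)$ defined, $H(h)$ undefined, some proposal sent, and $\<propset>$ a subset of the union of proposals for $h$) transfers verbatim. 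You also omit $\<get>()$ and $\<SBC>[h].\<SetDeliver>$ at a Byzantine server from your ``send-like'' list; both are always enabled at $b$ whenever enabled at $s$ (the former trivially, the latter because $H=H'$), so Lemma~\ref{l:conf:eq:byz:step} applies, but you should say so explicitly.
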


\begin{proof}
  Let $\sigma$ be a valid trace in model $\Gamma$.
  We need to construct a valid trace $\sigma'$ in model $\Gamma'$ such
  that $\sigma_i ~ \sigma'_i$ for all $i \geq 0$.
  
  We define \(\sigma'_0 = \Phi'_0\) as the initial configuration in
  $\Gamma'$, which has all its components empty and therefore is
  observational equivalent to the initial configuration in $\Gamma$,
  \(\Phi_0 = \sigma_0\).
  
  Next, for $n \geq 0$ assume that $\sigma'$ is defined up to $n$ and
  $\sigma_i ~ \sigma'_i$ if $n>i\geq0$.
  Consider the event \(\EV_n\) such that \(\sigma_n
  \xrightarrow{\EV_n} \sigma_{n+1}\), with \(\sigma_n = (\Sigma,
  \Delta, H,K)\).
  We will show that exists an event \(\EV'_n\) and a configuration
  \(\sigma'_{n+1}\) such that \(\sigma'_n
  \xrightarrow{\EV'_n} \sigma'_{n+1}\) and \(\sigma'_{n+1} ~
  \sigma_{n+1}\), with \(\sigma'_n = (\Sigma',\Delta', H',\Tau')\).
  
  Proceed by case analysis on \(\EV_n\):
  \begin{itemize}
   \item if \(s = \EV_n.\<server>\) is a correct server then take \(\EV'_n =
     \EV_n\).
     Therefore, \(\EV_n'\) is enabled in \(\sigma_n'\) because whether an
     event that happens in a correct server is enabled or not in
     \(\sigma_n'\) depends only on $s$ local
     state, $s$ network and the history of consensus reached, 
     and all these components are the same in both configuration
     since \(\sigma_n ~ \sigma_n'\).
     %
     As consequence, exists \(\sigma_{n+1}'\) such that
     \(\sigma_{n}'\xrightarrow{\EV'}\sigma_{n+1}'\) and from
     Lemma~\ref{l:conf:eq:correct} follows that \(\sigma_{n}~\sigma_{n+1}'\).
     
   \item if \(s = \EV_n.\<server>\) is Byzantine server then consider
     \(\EV_n'\) as the same event as \(\EV_n\) but with \(\EV_n.\<server> =
     b\). 
     Two things can happen, either \(\EV_n'\) is enabled in
     \(\sigma_{n}'\) or not.
     In the former case, there exists \(\sigma'_{n+1}\) such that
     \(\sigma_{n}'\xrightarrow{\EV_n'}\sigma'_{n+1}\) and from
     Lemma~\ref{l:conf:eq:byz:step} follows that \(\sigma_{n+1}~\sigma_{n+1}'\), as
     desired.
     In the latter case, it must be the case that \(\EV\) represent
     the reception of message that is pending in \(s\) but not in
     \(b\). In symbols, \(\EV \in \Delta(s).\<pending>\) and \(\EV'
     \notin \Delta'(b).\<pending>\).
     Since \(\Delta ~ \Delta'\), it follows that
     \(multiset(\Delta(s).\<received>).\<count>(\EV) < (multiset(\Delta(s).\<received>) \cup \Delta(s).\<pending>
     ).\<count>(\EV) = (multiset(\Delta'(b).\<received>) \cup \Delta'(b).\<pending>
     ).\<count>(\EV') =\)\\\( multiset(\Delta(s).\<received>).\<count>(\EV)\). 
     Hence, Lemma~\ref{l:conf:eq:byz:nostep} implies that
     \(\sigma_{n+1}~\sigma_{n}'\).
     Then, take \(\EV_{n}'\) as \(\<nop>\), which is enabled in
     \(\sigma_{n}'\), and \(\sigma_{n+1}' = \sigma_{n}'\).  
   \item if \(\EV_{n} = \<SBC>[h].\<Consensus>(\<propset>)\) then \(\sigma_{n+1} =
     (\Sigma,\Delta,H \oplus \{h \mapsto \<propset>\},K)\).
     Notice that \(\EV'_{n}= \EV_{n}\) is enabled in \(\sigma_{n}'\). 
     In fact, since \(\sigma_{n}~\sigma_{n}'\), \(H=H'\) and \(\Delta ~\Delta'\) 
     which implies that \((h,p) \in \Delta(s).\<sent>\) for some
     $s$ if and only if \((h,p) \in \Delta'(s').\<sent>\) for some
     $s'$.
     Therefore, \(\sigma_{n}'\xrightarrow{\EV_{n}'} (\Sigma',\Delta',H'
     \oplus \{h \mapsto propset\},\Tau')\) which is observational
     equivalent to \(\sigma_{n+1}\).
   \item if \(\EV = \<nop>\) then \(\sigma_{n+1} = \sigma_{n}\).
     Therefore, by taking
     \(\EV'_{n}\) as \(\<nop>\), which is enabled in
     \(\sigma'_{n}\), and \(\sigma_{n+1}' = \sigma'_{n}\) follows that
     \(\sigma_{n}'\xrightarrow{\EV'}\sigma_{n+1}'\) and \(\sigma_{n+1} ~ \sigma_{n+1}'\).  
  \end{itemize}
  Hence, the theorem follows by induction.
\end{proof}
We now prove the other direction.
For that, we first define the \emph{stuttering extension} of a trace
\(\sigma\) as the trace \(\sigma^{st}\) which adds \(f-1\) events
\(\<nop>\) after each event in \(\sigma\).
%
\begin{theorem}
  For every valid trace $\sigma'$ in model $\Gamma'$ there
  exists a valid trace $\sigma$ in model $\Gamma$ such that
  each state in $\sigma$ is indistinguishable with corresponding state
  in the stuttering extension of $\sigma'$ .
\end{theorem}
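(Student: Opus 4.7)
The plan is to construct $\sigma$ inductively, position by position, ensuring $\sigma_{i} \sim \sigma^{st}_{i}$ at every index, essentially running the argument of Theorem~\ref{thm:simpl} backwards. The base case is immediate because both initial configurations have empty state, network, history and Byzantine knowledge. For the induction, each event $\EV_{n}'$ of $\sigma'$ is replaced by a \emph{block} of exactly $f$ events in $\sigma$ filling the positions $nf, nf+1, \ldots, nf+f-1$, matching the block consisting of $\EV_{n}'$ followed by its $f-1$ stuttering \<nop>s.

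The block is chosen by case analysis on $\EV_{n}'$. If $\EV_{n}'$ happens in a correct server, or is $\<Consensus>$ or $\<nop>$, I would copy the event into $\sigma$ and append $f-1$ \<nop>s; equivalence after the first step is Lemma~\ref{l:conf:eq:correct} and the \<nop>s preserve both sides. If $\EV_{n}'$ is an outbound action of $b$ (a $\<BRB>.\<Broadcast>$ or $\<SBC>.\<Propose>$), I would let an arbitrarily chosen Byzantine $s_{b} \in \byzantineset$ perform the corresponding event in $\sigma$, again followed by $f-1$ \<nop>s; the side-condition that the valid elements of a proposed set lie in $K$ is guaranteed by $K = \Tau'$, and equivalence after the first step is Lemma~\ref{l:conf:eq:byz:step}. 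If $\EV_{n}'$ is an inbound action at $b$ (a $\<BRB>.\<Deliver>$, a $\<SBC>.\<Inform>$, or a $\<SBC>.\<SetDeliver>$), I would dispatch the corresponding event to every one of the $f$ Byzantine servers in $\byzantineset$, one per step, so that the $f$ Byzantine consumptions exactly fill the block with no \<nop>s required.

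The main obstacle is the inbound case: I must verify that each intermediate configuration $\sigma_{nf+k}$ for $1 \leq k < f$, in which only $k$ of the $f$ Byzantines have consumed the message $m$, is still observationally equivalent to $\sigma^{st}_{nf+k} = \sigma'_{n+1}$. All six network clauses of Definition~\ref{def:network:obs:eq} must be checked: once $b$ has removed $m$ from its pending, the inclusion $\Delta'(b).\<pending> \subseteq \Delta(s).\<pending>$ survives however many Byzantines still hold $m$ pending; the equality of $\<received> \cup \<pending>$ is preserved at every Byzantine because each consumption merely shifts $m$ from one bag to the other; and the multiset inclusions on received messages hold because each newly consuming Byzantine's received set acquires the same element that $b$'s already did. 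Enabledness of each successive Byzantine consumption follows from clause~3 applied to the block's first configuration, since $b$ having $m$ pending implies every Byzantine does too. Finally, $K = \Tau$ is preserved because $\Tau$ grew once when $b$ consumed, while $K$ grows on the first Byzantine consumption and is idempotent thereafter. With these invariants established, the induction delivers the required trace $\sigma$.
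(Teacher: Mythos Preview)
Your proposal is correct and follows essentially the same block-by-block construction as the paper, including the key idea of dispatching an inbound event at $b$ to all $f$ Byzantine servers in order to fill the stuttering block. The paper differs only cosmetically: it places $\<SBC>.\<SetDeliver>$ at $b$ in the ``one Byzantine then $f-1$ $\<nop>$s'' branch rather than in the dispatch branch (this works because $\<SetDeliver>$ at a Byzantine touches only $K$, not the network), it cites Lemma~\ref{l:conf:eq:byz:nostep} for the intermediate inbound configurations instead of re-verifying the six clauses of Definition~\ref{def:network:obs:eq} directly, and its ``otherwise'' branch also absorbs the client-invoked events $\<Add>$, $\<Get>$, $\<EpochInc>$ at $b$ that your case split leaves implicit but which fit the one-then-$\<nop>$s pattern just like your outbound case.
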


\begin{proof}
  Let $\sigma'$ be a valid trace in model $\Gamma'$.
  We need to construct a valid trace $\sigma$ in model $\Gamma$ such
  that $\sigma_i ~ \sigma'^{st}_i$ for $i\geq0$, or equivalently,
  \(\sigma_{i*f+j} ~ \sigma'_{i}\)
  for $0\leq j<f$.
  
  We start by defining \(\sigma_j = \Phi_0\), for $0\leq j<f$, as the initial
  configuration in $\Gamma$, which has all its components empty and
  therefore is observational equivalent to the initial configuration
  in $\Gamma'$, \(\Phi'_0 = \sigma'_0\).
  
  Next, for $n \geq 0$ assume that $\sigma$ is defined up to $k = n*f+f-1$ and
  that \(\sigma_{i*f+j} ~
  \sigma'_{i}\) for $0 \leq i \leq n$ and $0\leq j<f$.
  
  We will show that exist $f$ events \(\EV_{k}, \ldots, \EV_{k+f-1}\)
  and $f$ configurations \(\sigma_{k+1},\ldots,\sigma_{k+f}\) such
  that \(\sigma_{k+j}
  \xrightarrow{\EV_{k+j}} \sigma_{k+j+1} \) for any $0\leq j<f$.
    
  Proceed by case analysis on the event \(\EV'_n\) such that \(\sigma'_n
  \xrightarrow{\EV'_n} \sigma'_{n+1}\):
  
  \begin{itemize}
   \item if \(\EV'_n.\<server> = b\) with \(\EV'_n =
     \<BRB>.\<Deliver>(m)\) or \(\EV'_n = \<SBC>[h].\<Inform>(prop)\)
     then take \(\EV_{k+j} = 
     \EV'_n\) but with \(\EV_{k+j}.\<server> = b_j\) for $0\leq j<f$.
     Therefore, \(\EV_{k}\) is enabled in
     \(\sigma_{k}\) because $m\in\sigma_n'(b).\<pending>
     \subseteq\sigma_{k}(b_0).\<pending>$.
     %
     As consequence, exists \(\sigma_{k+1}\) such that
     \(\sigma_{k}'\xrightarrow{\EV_{k}}\sigma_{k+1}'\)
     and from Lemma~\ref{l:conf:eq:byz:step} follows that
     \(\sigma_{k+1}~\sigma_{n+1}'\).
     
     Since \(\EV'_n\) modifies the network by moving $m$ from
     the \(\<pending>\) multiset of $b$ to its \(\<received>\) sequence it
     follows that \(\sigma'_{n+1}(b).\<received>.\<count>(m) =
     \sigma'_{n}(b).\<received>.\<count>(m)+1 >
     \sigma'_{n}(b).\<received>.\<count>(m) \geq
     \sigma_{k}(b_j).\<received>.\<count>(m) =
     \sigma_{k+1}(b_j).\<received>.\<count>(m)\) for \(0 < j < f\),
     where the last equality follows from the fact that
     \(\EV_{k}\) only modifies the \(\<pending>\) multiset of
     \(b_0\) and the observational equivalence between
     \(\sigma_{k}\) and \(\sigma'_{n}\).
     By combining this result with the fact that
     \(multiset(\sigma_{k+1}(b_1).\<received>) \cup
     \sigma_{k+1}(b_1).\<pending> = multiset(\sigma'_{n}(b).\<received>) \cup
     \sigma'_{n}(b).\<pending>\) we get that $m \in
     \sigma_{k+1}(b_1).\<pending>$ and therefore \(\EV_{k+1}\) is enabled in \(\sigma_{k+1}\).  
     As consequence, exists \(\sigma_{k+2}\) such that
     \(\sigma_{k+1}\xrightarrow{\EV_{k+1}}\sigma_{k+2}\)
     and from Lemma~\ref{l:conf:eq:byz:nostep} follows that
     \(\sigma_{k+2}~\sigma_{n+1}'\).
     By repeating this reasoning for $1<j< f$ it can be proven that
     exists \(\sigma_{k+j+1}\) such that
     \(\sigma_{k+j}\xrightarrow{\EV_{k+j}}\sigma_{k+j+1}\)
     and \(\sigma_{k+j+1}~\sigma_{n+1}'\).
   \item otherwise, take \(\EV_{k} = \EV'_{n}\).
     Then, with an analysis analogous to the one in the previous theorem,
     it can be shown that \(\EV_{k}\) is enabled in
     \(\sigma_{k}\) and, moreover, the resulting configuration
     \(\sigma_{k} \xrightarrow{\EV_{k}} \sigma_{k+1}\) satisfies
     \(\sigma_{k+1} ~ \sigma'_{n+1}\).
     Finally, for \(1<j<f\) define \(\EV_{k+j} = \<nop>\) 
     which is enabled in all configurations, in particular, in
     \(\sigma_{k+j}\).
     Therefore, \(\sigma_{k+j}
     \xrightarrow{\EV_{k+j}}\sigma_{k+j+1}\) and
     \(\sigma_{k+j+1} ~ \sigma'_{n+1}\), as desired.
  \end{itemize}
  Hence, the theorem follows by induction.
\end{proof}

As consequence, the main result of this Section is that properties
for model $\Gamma$ hold if and only if they also hold in model $\Gamma'$.
That is, one can prove properties for traces in model $\Gamma'$
where all the components are well-defined, and the result will
directly translate to traces in model $\Gamma'$.


\section{Concluding Remarks and Future Work}\label{sec:discussion}

We presented in this paper a novel distributed data-type, called
\setchain, that implements a grow-only set with epochs, and tolerates
Byzantine server nodes.
We provided a low-level specification of desirable properties of
\setchains and three distributed implementations, where the most
efficient one uses Byzantine Reliable Broadcast and RedBelly Set
Byzantine Consensus.
Our empirical evaluation suggests that the performance of inserting
elements in \setchain is three orders of magnitude faster than
consensus.
Also, we have proved that the behavior of the Byzantine server nodes
can be modeled by a collection of simple interactions with BRB and
SBC, and we have introduced a non-deterministic process that
encompasses these interactions.
This modeling paves the way to use formal reasoning that is not
equipped for Byzantine reasoning to reason about \setchain.

Future work includes developing the motivating applications listed in
the introduction, including mempool logs using \setchain\!\!s, and L2
faster optimistic rollups.
\setchain can also be used to alleviate front-running attacks.
The mempool stores the transactions requested by users, so observing
the mempool allows to predict future operations.
\emph{Front-running} is the action of observing transaction request and
maliciously inject transactions to be executed before the observed
ones~\cite{Daian2020FlashBoys,Ferreira2021Frontrunner} (by paying a
higher fee to miners).
\setchain can be used to \emph{detect} front-running since it can
serve as a basic mechanism to build a mempool that is efficient and
serves as a log of requests.
Additionally, \setchains can be used as a building block to solve front-running
where users encrypt their requests using a multi-signature decryption scheme,
where participant decrypting servers decrypt requests after they are chosen for
execution by miners once the order has already been fixed.

We will also study how to equip blockchains with \setchain
(synchronizing blocks and epochs) to allow smart-contracts to access
the \setchain as part of their storage.

An important remaining problem is how to design a payment system for
clients to pay for the usage of \setchain (even if a much smaller fee
than for the blockchain itself).
Our \setchain exploits a specific partial orders that relaxes the
total order imposed by blockchains.
As future work, we will explore other partial orders and their uses,
for example, federations of \epochs, one \epochs per smart-contract,
etc.


%

%
%
%
%


\bibliographystyle{ACM-Reference-Format}
\bibliography{bibfile}

\vfill\pagebreak

\appendix

\end{document}